\documentclass[12pt]{article}

\usepackage{amsmath, amssymb, amsthm, mathrsfs}
\usepackage[margin=1.25in]{geometry}
\usepackage[utf8]{inputenc}
\usepackage{enumitem}
\usepackage{booktabs}
\usepackage{graphicx}
\usepackage{natbib}
\usepackage{tikz}
\usetikzlibrary{patterns, positioning, calc, decorations.pathreplacing}
\usepackage[colorlinks=true, citecolor=black, linkcolor=black, urlcolor=black]{hyperref}
\usepackage{subcaption}
\usepackage{setspace}

\theoremstyle{definition}
\newtheorem{definition}{Definition}
\newtheorem{assumption}{Assumption}

\newtheorem{remark}{Remark}

\theoremstyle{plain}
\newtheorem{theorem}{Theorem}

\newtheorem{proposition}[theorem]{Proposition}
\newtheorem{corollary}[theorem]{Corollary}

\newcommand{\Acal}{\mathcal{A}}        

\newcommand{\Prb}{\mathbb{P}}
\newcommand{\Ind}{\mathbf{1}}          

\DeclareMathOperator{\rank}{rank}
\DeclareMathOperator{\diag}{diag}

\newcommand{\dto}{\xrightarrow{d}}      

\newcommand{\dOneFrac}{100.0}
\newcommand{\dOneCount}{9,831}
\newcommand{\dOneTotal}{9,831}

\newcommand{\gidDeffPlusOne}{12}
\newcommand{\gidNclusters}{50}
\newcommand{\gidCellsRankOk}{41}

\title{A Random Rule Model}
\author{Avner Seror\thanks{Aix-Marseille University, CNRS, AMSE, Marseille, France. Email: avner.seror@univ-amu.fr. I am grateful to Marina Agranov, Benjamin Enke, En Hua Hu, Dotan Persitz, John Quah, and Thierry Verdier for helpful discussions. I acknowledge funding from the French government under the ANR JCJC ``BIAS" project (reference: ANR-25-CE26-7164-01) and the ``France 2030'' investment plan managed by the French National Research Agency (reference: ANR-17-EURE-0020) and from the Excellence Initiative of Aix-Marseille University -- A*MIDEX. Mistakes are my own. Refine.ink was used to check the paper for consistency and clarity. The author used Claude and ChatGPT for code development and editorial revisions; all scientific content, modeling choices, and results were produced and verified by the author.}}
\date{April 14, 2026}

\begin{document}
\onehalfspacing
\maketitle

\begin{abstract}
We model stochastic choice as environment-dependent switching among a small library of deterministic decision rules. A Random Rule Model generates menu-level choice probabilities via named, interpretable rules weighted by observable menu characteristics. Identification has a two-step structure: within-feature decisive-side variation identifies relative rule weights; cross-feature richness identifies the gate. Applied to binary lottery choices, the estimated weights concentrate on a small subset of rules and shift systematically with complexity and dispersion asymmetry. The model closes nearly all of the prediction gap to a flexible neural-network benchmark, while remaining interpretable, restrictive under permutation diagnostics, and portable to an independent dataset.
\end{abstract}

\vspace{0.5em}
\noindent \textbf{Keywords:} Behavioral Economics, Decision Theory, Random Rule Model, Identification, Risky Choice, Machine Learning.

\vspace{0.3em}
\noindent \textbf{JEL Classification:} D81, C14, C52, C25.

\section{Introduction}\label{sec:intro}

Stochastic choice data are often interpreted through latent utility heterogeneity, random utility shocks, or menu-dependent consideration. An alternative possibility is that observed choice frequencies reflect systematic switching across decision procedures. In many environments, agents may sometimes screen on extreme outcomes, sometimes compare modal consequences, sometimes focus on salient payoff differences, and sometimes follow simpler attention-based procedures. The central object in such settings is not only the average choice probability attached to a menu, but the procedural heterogeneity that determines which decision procedures are more prevalent in which environments.

This paper studies stochastic choice through that lens. We develop a Random Rule Model (RRM) in which behavior is generated by random switching among a small library of transparent, deterministic decision rules. Each rule maps a menu directly into a binary recommendation - ``choose Left'' or ``choose Right'' - or declares itself inapplicable when it cannot rank the two options; stochastic choice arises because the relative weight placed on these rules varies with observable features of the menu. The resulting model yields an explicit procedural decomposition of behavior: changes in choice frequencies across environments are interpreted as changes in the prevalence of named lines of conduct rather than as shifts in an unobserved reduced-form predictor.

This representation is attractive for several reasons. First, it provides a tractable language for procedural heterogeneity in large choice environments. Rather than forcing a single valuation model to govern all menus, the model allows different procedures to matter in different regions of the environment space. Second, it remains economically interpretable: the primitive objects are transparent rules with clear behavioral content. Third, it imposes discipline. The model can fit choice frequencies only through the patterns of agreement and disagreement generated by the rule library and through systematic variation in the corresponding rule weights. Fourth, it does not require transitivity or other consistency conditions on the aggregate choice function: because each rule is applied independently and the mixture weights vary freely with menu features, the model can accommodate systematic violations of classical rationality axioms that are well-documented in stochastic choice data.

The rule weights depend on observable features of the menu, and not all rules deliver a strict recommendation on every menu - the model restricts behavioral responsibility to rules that are actually decisive, ensuring that the decomposition attributes weight only where a rule provides a clear-cut ranking. The formal mechanism is a softmax gate over rule-specific indices.

Both the RRM and Random Utility Models (RUM) are instances of a broader class: stochastic selection over decision processes. In a RUM, the selected processes are utility maximizers - rich, parametric objects whose heterogeneity generates stochastic choice. In the RRM, the selected processes are coarse, parameter-free heuristics that need not be optimization-based. This is a substantive modeling distinction about the nature of behavioral heterogeneity: preference heterogeneity versus procedural heterogeneity \citep{simon1955,rubinstein88}.

The distinction matters at three levels. First, the objects recovered are different. A RUM identifies a mixing distribution over utility functions. The RRM identifies which named procedures are used, how much, and how their prevalence shifts with the environment - the kind of procedural decomposition that has no natural RUM counterpart. Second, the identification problem is different. In a RUM, the decision processes are latent and must be inferred from choice frequencies alone. In the RRM, the rules are deterministic and their recommendations are directly observable; identification exploits the fact that different rules switch sides across menus, a novel identifying force. Third, because the primitives are simple and parameter-free, the model is tightly disciplined: it cannot absorb structureless variation, as we verify with a permutation-fit diagnostic.

The main theoretical question is whether the latent procedural weights are genuinely identified from menu-level choice frequencies. We show that they are, under a two-layer structure. In the first layer, consider menus that share the same value of the feature vector used by the gate (a possibly coarsened summary of the full menu description). These menus also share the same latent rule weights, but they may differ in which rules recommend which option - some rules switch sides across menus even when the gate features are held fixed. This rule-switching variation pins down the relative importance of the rules at a given set of characteristics: if enough rules switch sides across enough menus, the data uniquely determine how much weight each rule carries. In the second layer, once relative rule weights have been recovered at sufficiently many distinct menu characteristics, the law governing how those weights change with menu features is itself identified. Global identification thus separates cleanly into a within-characteristic step that uses rule switching to recover relative weights, and an across-characteristic step that maps those weights back into the gate.

The identification theorem is constructive: it implies a two-step estimator in which normalized rule weights are first recovered cell by cell from the linear restrictions generated by rule switching, and the gate parameters are then estimated from the cross-cell variation in recovered log-weights. We establish consistency and asymptotic normality for this estimator, yielding inference for economically meaningful functionals of the decomposition.

For the main empirical analysis, however, we estimate the model by minimizing out-of-sample mean squared error (MSE). This avoids the binning of the feature space into discrete cells that the two-step procedure requires and places the rule-gating model on the same predictive footing as the neural-network benchmarks against which it is evaluated. The cellwise identification theory guarantees that the rule weights recovered at each menu characteristic have structural content - they are pinned down by the data, not merely by the predictive objective. When the gate specification is correct, both estimators target the same population decomposition; the two-step estimator provides a built-in diagnostic for this restriction.

We apply the framework to the \texttt{choices13k} dataset of \citet{peterson2021}, which contains thousands of distinct binary lottery menus with repeated observations per menu. We construct a library of twelve parameter-free decision rules spanning canonical behavioral mechanisms - extreme-outcome screening, modal-outcome comparisons, salience-based comparisons, regret- and disappointment-type comparisons, and limited-consideration-style defaults - and estimate the model using the same cross-validation protocol as \citet{peterson2021}. We first verify the identification conditions on the data, then report the MSE-fitted behavioral decomposition: responsibility weights (the average share of predicted choice attributed to each rule), ablation indices (the predictive deterioration when a rule is removed), and concentration measures (how dispersed rule usage is across the library). The two-step econometric estimates follow as a cross-check. We then study how rule responsibility shifts with interpretable menu characteristics such as tradeoff complexity and dispersion asymmetry, and conclude with predictive validation and portability to an independent dataset.

The main empirical findings are as follows. Both conditions of the global identification theorem are satisfied on the Choices13k data. The estimated weights concentrate on a small subset of rules: salience-based comparisons, regret-type comparisons, and modal-outcome screening carry the largest non-attention responsibility shares, while the attention channel absorbs roughly half of effective responsibility, consistent with substantial inattention in online data collection. Rule responsibility shifts systematically with menu primitives: along tradeoff complexity, regret declines steeply and the mixture reallocates toward attention and salience; along dispersion asymmetry, regret declines while attention defaults rise.

On predictive performance, rule-gating (MSE $0.0117$) closes about $97\%$ of the gap between Neural EU ($0.0222$) and the strongest benchmark, the Mixture of Theories (MOT, $0.0114$). Rule-gating is essentially fully restrictive under permutation-fit diagnostics, while MOT absorbs substantial structureless variation. This distinction is confirmed by out-of-sample portability to an independent dataset \citep{erev_ert_plonsky_2017,plonsky2017_cpc18}: rule-gating transfers well while MOT deteriorates sharply. A battery of robustness checks - cross-fitted library selection, attention-rule removal, activity-discipline variation, placebo tests, and feature-based baselines - confirms the stability of the decomposition.

\paragraph{Related literature.}
Our approach relates to the literature on stochastic choice. In random utility models (RUM), choice probabilities are generated by heterogeneity in latent utilities \citep{mcfadden2006,McFaddenRichter1991,stoye2018_ecma}. A related literature accounts for limited, random, or menu-dependent consideration, jointly recovering preferences and attention/consideration mechanisms from stochastic choice data \citep{manzini_mariotti2014,cattaneo_ram,aguiar2023}. As discussed above, both RUM and RRM belong to the broader class of stochastic selection over decision processes, but differ in the nature of the selected primitives (optimization-based vs.\ heuristic), the recovered objects (utility distributions vs.\ procedural responsibilities), and the identification structure (latent preference heterogeneity vs.\ observable rule switching). The relationship to the attention/consideration literature is analogous: the RRM does not model consideration sets but rather procedural switching conditional on rule applicability.

Our identification argument connects to the econometrics of latent heterogeneity. The first-stage identification - recovering mixture weights from linear restrictions generated by rule switching - is structurally related to identification in finite mixture models with instrument-like variation \citep{henry_kitamura_salanie2014}. The novelty is that the identifying variation comes from the discrete patterns of decisive-side recommendations, which are observable and deterministic, rather than from excluded instruments or panel structure.

The paper also connects to the broader tradition of modeling choice as the outcome of procedures under bounded rationality and simplicity constraints \citep{simon1955,rubinstein88,aumann2008,manzini2007,manzini2012,cherepano2013,salant_rubinstein2008}, and to the behavioral literature on menu-dependent perception and evaluation \citep{kahneman_tversky1979,tversky_kahneman1992,bordalo2012_risk,loomessugden1986,masatlioglu2012}. Our mechanism results relate to the literature on complexity and rule selection \citep{arrieta2024,halevy2024,puri2025,enke2023}; closest to our analysis, \citet{enke2023} construct menu-based complexity indices and show that ``tradeoff complexity'' predicts behavioral signatures in large-scale choice data. We build on this by studying how complexity and dispersion asymmetry are associated with systematic reallocation of procedure weights inside an interpretable random-rule mixture. Our approach builds on the decision-rule primitives of \citet{seror2026}; the present paper adds the gating mechanism, the identification theory, and the econometric framework.

Finally, our approach also speaks to the recent agenda of empirical machine learning in economics \citep{liang2025}. Our completeness normalization follows recent work benchmarking economic structure against flexible ML baselines
(e.g., \citealp{peysakhovich2017,fudenberg2020,fudenberg2022,fudenberg2022_jpe}); our restrictiveness diagnostic
applies the logic of \citet{fudenberg26}; and we implement a portability exercise targeting external validity.
Finally, a growing literature documents strong out-of-sample performance in risky choice (e.g., \citealp{erev2011,erev_ert_plonsky_2017,plonsky2016,peterson2021}).
Our contribution is to show that a disciplined mixture of interpretable procedures can recover a large share of these predictive gains while producing mechanism-level objects that are directly interpretable in economic terms.

\section{Model}\label{sec:model}

We develop the model and the identification theory for binary lottery menus, which is the setting of our empirical application. Appendix~\ref{app:general_framework} presents the fully general framework covering arbitrary menu sizes, general dominance preorders, and multinomial choice. The two treatments share the same structure - a grouped-softmax representation, activity-conditional normalization, and verifiable conditions for global identification - with the main difference being notational: log-odds become log-ratios and two-sided menus become multi-sided menus. The binary exposition thus conveys the essential ideas with lighter notation.

\subsection{Setup}\label{subsec:setup}

A \emph{menu} (choice set) is a binary set $A=\{L^1,L^2\}$ of lotteries, each lottery being a finite-support distribution over monetary outcomes. We label $L^1$ the left choice and $L^2$ the right choice. Let $\Acal_{\mathrm{full}}$ denote the full set of menus in the dataset. Each $A\in\Acal_{\mathrm{full}}$ is observed
$n(A)\ge 1$ times. Let $c_m(A)\in\{L^1,L^2\}$ be the realized choice in repetition $m\in\{1,\dots,n(A)\}$. The menu-level left-choice frequency is
\[
\hat p(A)\ :=\ \frac{1}{n(A)}\sum_{m=1}^{n(A)} \Ind\{c_m(A)=L^1\}\in[0,1].
\]
We interpret $\hat p(A)$ as estimating a population choice probability $p(A):=\Prb(L^1\mid A)$.  A predictor is a function $g:\Acal_{\mathrm{full}}\to[0,1]$ intended to approximate $p(\cdot)$.

The dataset is a collection of observations $D=\{(c_t,A_t)\}_{t\in\mathcal T}$, where $\mathcal T:=\{1,\dots,T\}$.

\subsection{Rule library}\label{subsec:library}

We model behavior as the outcome of a small library of deterministic decision rules.
Let $\mathcal F$ be a finite collection of rules. Each rule $f\in\mathcal F$ maps a menu to a recommendation or abstention. Formally, for each menu $A=\{L^1,L^2\}$ and rule~$f$, define the \emph{activity indicator}
\begin{equation}\label{eq:Af_def}
A_f(A)\ :=\ \begin{cases} 1 & \text{if }f\text{ delivers a strict recommendation at }A,\\
0 & \text{otherwise,}\end{cases}
\end{equation}
and, when $A_f(A)=1$, the \emph{left-recommendation indicator}
\begin{equation}\label{eq:Lf_def}
L_f(A)\ :=\ \begin{cases} 1 & \text{if }f\text{ recommends }L^1,\\
0 & \text{if }f\text{ recommends }L^2.\end{cases}
\end{equation}
Thus each rule is a deterministic mapping from menus to $\{\text{Left, Right, abstain}\}$. Rules are parameter-free: their recommendations depend only on the menu, not on any estimated quantity.

\begin{definition}[Activity]\label{def:activity}
A rule $f\in\mathcal F$ is \emph{active} at menu $A$ if $A_f(A)=1$, and \emph{inactive} otherwise.
\end{definition}

The pair $(A_f(A),\,L_f(A))$ is the primitive the model requires from each rule. The CA-RRM and identification theory below are stated entirely in terms of these indicators - they do not depend on how rules are implemented internally. In the application (Section~\ref{subsec:library_details}), we construct a concrete library of twelve rules by specifying, for each rule, a \emph{perceived-lottery transformation} that maps the menu into a simplified comparison; the activity and recommendation indicators are then derived by checking First-Order Stochastic Dominance (FSD) of the perceived lotteries. This two-step construction - transform, then compare via FSD - is an implementation device that provides a uniform way to define $(A_f,L_f)$ for diverse heuristics; it is not part of the abstract model.

\subsection{Conditional on Activity Random Rule Model (CA-RRM)}\label{subsec:carrm}

A Conditional on Activity Random Rule Model (CA-RRM) assigns to each menu $A$ a vector
$q(A)=\{q_f(A)\}_{f\in \mathcal F}\in\Delta^{F-1}$, where $q_f(A)$ is the propensity to apply rule $f$ at $A$. Since not all rules are decisive at all menus,
the model conditions on \emph{activity}. We define the left mass and active mass:
\[
\ell(A;q)=\sum_{f\in\mathcal F} q_f(A)\,L_f(A),
\qquad
m(A;q)=\sum_{f\in\mathcal F} q_f(A)\,A_f(A).
\]
The CA-RRM predicted left-choice probability is
\begin{equation}\label{eq:carrm_prob}
g_{\mathrm{CA}}(A)
=\frac{\ell(A;q)}{\max\{m(A;q),m_{\min}\}}
\in[0,1],
\end{equation}
where $m_{\min}>0$ is a small guard constant to prevent numerical instability if the total active mass is near zero. In practice however, we have two  ``attention'' rules in the library $\mathcal F$ that can always be active, so $m(A;q)$ is typically strictly positive.

\subsection{Rule-gating parameterization}\label{subsec:gating}

CA-RRM becomes a \emph{rule-gating predictor} when $q(A)$ is parameterized by a differentiable gate.
Let $\psi(A)\in\mathbb R^d$ be a fixed-length encoding of the menu (defined precisely in Section~\ref{subsec:encoding}).
We use a softmax gate:
\begin{equation}\label{eq:softmax_gate}
q_f(A;\theta)=\frac{\exp(\alpha_f+\beta_f^\top \psi(A))}
{\sum_{g\in\mathcal F} \exp(\alpha_g+\beta_g^\top \psi(A))},
\end{equation}
where $\theta=\{(\alpha_f,\beta_f)\}_{f\in\mathcal F}$. The resulting predictor is
$g_{\mathrm{RG}}(A;\theta)=g_{\mathrm{CA}}(A;q(\cdot;\theta))$ as in \eqref{eq:carrm_prob}.


Combining \eqref{eq:carrm_prob} with \eqref{eq:softmax_gate}, we have
\[
g_{\mathrm{RG}}(A;\theta)=\frac{\sum_{f\in\mathcal F} q_f(A;\theta)\,L_f(A)}
{\sum_{f\in\mathcal F} q_f(A;\theta)\,A_f(A)}\qquad (\text{up to the small guard }m_{\min}).
\]
We define the \emph{conditional-on-activity weights}
\begin{equation}\label{eq:qtilde_def}
\tilde q_f(A;\theta)
:=\frac{q_f(A;\theta)\,A_f(A)}{\sum_{g\in\mathcal F} q_g(A;\theta)\,A_g(A)}\in[0,1],
\qquad \sum_{f\in\mathcal F}\tilde q_f(A;\theta)=1
\end{equation}
whenever the denominator is positive.\footnote{In our implementation, rules \textsc{A1} and \textsc{A2} are always included and are decisive for essentially all menus, so the denominator is typically well away from zero.}
Then the CA-RRM prediction admits the simple mixture form
\begin{equation}\label{eq:mixture_form}
g_{\mathrm{RG}}(A;\theta)=\sum_{f\in\mathcal F}\tilde q_f(A;\theta)\,L_f(A).
\end{equation}
From \eqref{eq:mixture_form}, a CA-RRM operates as a mixture over deterministic rule recommendations, after filtering out rules that are inactive at $A$.

\section{Identification via Rule Switching}\label{sec:identification_switching}

This section clarifies when the gate parameters
$\theta=\{(\alpha_f,\beta_f)\}_{f\in\mathcal F}$
are pinned down by menu-level choice frequencies.
The identification argument has two layers. The first layer identifies, at a fixed feature value,
the vector of relative rule weights. The second layer uses variation across feature values to recover
the affine gate parameters. The key identifying force throughout is \emph{rule switching}: because
different rules recommend different sides in different menus, the decisive-side pattern varies across
the menu support, and this variation pins down the latent rule weights.

\paragraph{From CA-RRM to a collapsed binary logit index.}
For a menu $A=\{L^1,L^2\}$, define the decisive-left and decisive-right sets
\[
\mathcal I_L(A)\ :=\ \{f\in\mathcal F:\ A_f(A)=1,\ L_f(A)=1\},
\qquad
\mathcal I_R(A)\ :=\ \{f\in\mathcal F:\ A_f(A)=1,\ L_f(A)=0\}.
\]
Ignoring the numerical guard $m_{\min}$ (inactive in our implementation), the softmax gate
\eqref{eq:softmax_gate} combined with the CA normalization yields
\begin{equation}\label{eq:carrrm_expform_rank}
g_{\mathrm{RG}}(A;\theta)
=
\frac{\sum_{f\in \mathcal I_L(A)} \exp(\alpha_f+\beta_f^\top\psi(A))}
{\sum_{f\in \mathcal I_L(A)\cup\mathcal I_R(A)} \exp(\alpha_f+\beta_f^\top\psi(A))}.
\end{equation}
Define the inclusive values
\[
V_L(A;\theta):=\log\!\sum_{f\in \mathcal I_L(A)} \exp(\alpha_f+\beta_f^\top\psi(A)),\qquad
V_R(A;\theta):=\log\!\sum_{f\in \mathcal I_R(A)} \exp(\alpha_f+\beta_f^\top\psi(A)).
\]
Whenever $\mathcal I_L(A)$ and $\mathcal I_R(A)$ are both nonempty,
\begin{equation}\label{eq:logit_index_identity_rank}
\log\frac{g_{\mathrm{RG}}(A;\theta)}{1-g_{\mathrm{RG}}(A;\theta)}
\;=\;
V_L(A;\theta)-V_R(A;\theta).
\end{equation}

\paragraph{Observed object.}
Let
\[
\mathcal A_{2}:=\{A\in\mathcal A_{\mathrm{full}}:\ \mathcal I_L(A)\neq\emptyset,\ \mathcal I_R(A)\neq\emptyset\}
\]
denote the set of \emph{two-sided} menus.
For $A\in\mathcal A_2$, define the population log-odds
\begin{equation}\label{eq:delta_rank}
\delta(A)\ :=\ \log\frac{p(A)}{1-p(A)}\ =\ V_L(A;\theta)-V_R(A;\theta),
\end{equation}
where $p(A)=\Pr(L^1\mid A)$.

\paragraph{Normalization and parameter vector.}
Because the softmax gate is invariant to adding a common constant to all $\alpha_f$ and a common vector to all $\beta_f$,
we impose a location normalization: fix one baseline rule $f_0\in\mathcal F$ and set
\begin{equation}\label{eq:norm_rank}
(\alpha_{f_0},\beta_{f_0})=(0,0).
\end{equation}
Let $\vartheta\in\mathbb R^{K}$ collect the remaining free parameters, with
\[
K=(|\mathcal F|-1)\,(1+\dim(\psi)).
\]

\paragraph{Decisive-side indicators and the $H$ matrix.}
For each rule $f$ and menu $A$, define the decisive-side indicators
\[
\kappa_f^L(A):=\Ind\{f\in\mathcal I_L(A)\},\qquad
\kappa_f^R(A):=\Ind\{f\in\mathcal I_R(A)\},
\]
and write the odds ratio
\[
r(A):=\frac{p(A)}{1-p(A)}.
\]
For any $x\in\mathbb R^d$, define the positive weight vector
\[
\omega_f(x):=\exp(\alpha_f+\beta_f^\top x).
\]
Cross-multiplying the odds form of the model yields, for each menu $A$ with $\psi(A)=x$,
\begin{equation}\label{eq:linear_restriction}
\sum_{f\in\mathcal F}\big(\kappa_f^L(A)-r(A)\kappa_f^R(A)\big)\,\omega_f(x)=0,
\end{equation}
i.e.\ $h(A)^\top \omega(x)=0$, where
\[
h_f(A):=\kappa_f^L(A)-r(A)\kappa_f^R(A).
\]

\paragraph{Identification in two layers.}
Equation \eqref{eq:linear_restriction} reveals the first identifying object: at a fixed feature value $x$,
every menu with $\psi(A)=x$ shares the same latent weight vector $\omega(x)$, but may display a different
decisive-side pattern. These distinct decisive-side patterns generate linear restrictions on the same unknown
vector $\omega(x)$. Once $\omega(x)$ is identified at sufficiently many feature values, the affine structure
of the gate recovers $(\alpha_f,\beta_f)$ rule by rule.

\begin{theorem}[Fixed-feature identification of rule weights]\label{thm:fixed_x_id}
Fix $x\in\mathbb R^d$ and define
\[
\mathcal A_2(x):=\{A\in\mathcal A_2:\psi(A)=x\}.
\]
For each $A\in\mathcal A_2(x)$, let
\[
h_f(A):=\kappa_f^L(A)-r(A)\kappa_f^R(A),
\qquad
r(A):=\frac{p(A)}{1-p(A)}.
\]
Define the cellwise identified set
\[
\Omega(x):=\Big\{\omega\in\mathbb R_{++}^{|\mathcal F|}: h(A)^\top\omega=0
\ \ \forall A\in\mathcal A_2(x)\Big\}.
\]
Assume $\Omega(x)\neq\varnothing$ (this is guaranteed under correct specification: the
true weights $\omega_f(x)=\exp(\alpha_f+\beta_f^\top x)$ are strictly positive by
construction, so $\omega(x)\in\Omega(x)$). Then the following are equivalent:
\begin{enumerate}[label=(\roman*)]
\item $\Omega(x)$ is a single positive ray, i.e.\ there exists $\omega^\star(x)\in\mathbb R_{++}^{|\mathcal F|}$ such that
\[
\Omega(x)=\{c\,\omega^\star(x): c>0\};
\]
\item the row span of $\{h(A):A\in\mathcal A_2(x)\}$ has dimension $|\mathcal F|-1$.
\end{enumerate}
Equivalently, if $H(x)$ is any stacked matrix whose rows span the same subspace as
$\{h(A):A\in\mathcal A_2(x)\}$, then
\[
\Omega(x)\text{ is a single positive ray}
\quad\Longleftrightarrow\quad
\rank H(x)=|\mathcal F|-1.
\]
Under the normalization $\omega_{f_0}(x)=1$, this is equivalent to point identification of $\omega(x)$.
\end{theorem}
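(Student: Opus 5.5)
The identified set is the intersection of a linear subspace with the open positive orthant. Let $N(x):=\{\omega\in\mathbb R^{|\mathcal F|}: h(A)^\top\omega=0\ \forall A\in\mathcal A_2(x)\}$ be the null space of the stacked matrix $H(x)$. Then $\Omega(x)=N(x)\cap\mathbb R^{|\mathcal F|}_{++}$. By rank--nullity, $\dim N(x)=|\mathcal F|-\rank H(x)$, and the row span of $\{h(A)\}$ equals the row space of any $H(x)$ with the same row span. So the two formulations of (ii) coincide, and both amount to $\dim N(x)=1$. The proof therefore reduces to a single claim: given $\Omega(x)\neq\varnothing$, the set $\Omega(x)$ is a single positive ray if and only if $\dim N(x)=1$.

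For (ii)$\Rightarrow$(i), suppose $\dim N(x)=1$ and pick any $\omega^\star\in\Omega(x)$, which exists by assumption. Then $N(x)=\{c\,\omega^\star:c\in\mathbb R\}$. Because $\omega^\star$ has all coordinates strictly positive, $c\,\omega^\star$ lies in $\mathbb R_{++}^{|\mathcal F|}$ exactly when $c>0$. Hence $\Omega(x)=\{c\,\omega^\star:c>0\}$.

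For (i)$\Rightarrow$(ii), I argue by contrapositive. Note first that $\dim N(x)\ge1$ because $\Omega(x)$ is nonempty, so failure of (ii) means $\dim N(x)\ge 2$. In that case choose $v\in N(x)$ not proportional to $\omega^\star$. Since $\mathbb R_{++}^{|\mathcal F|}$ is open, $\omega^\star+\varepsilon v\in\Omega(x)$ for all sufficiently small $\varepsilon>0$. This vector is not a positive multiple of $\omega^\star$: that would force $v$ to be proportional to $\omega^\star$. So $\Omega(x)$ is not a single ray.

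The main subtlety is exactly this openness step. The positivity constraint could in principle cut a higher-dimensional subspace down to a ray, but it cannot, because the known point $\omega^\star$ is interior to the orthant. This is also why the nonemptiness assumption is indispensable. Without it, $N(x)$ could meet the orthant only on its boundary, or not at all, and the equivalence would fail.

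Finally, I treat the normalization. Under (i), imposing $\omega_{f_0}=1$ selects the unique $c=1/\omega^\star_{f_0}(x)$, which is well defined since $\omega^\star_{f_0}>0$, so $\omega(x)$ is point identified. Conversely, suppose $\omega(x)$ is point identified under $\omega_{f_0}=1$. Since $\Omega(x)$ is a cone closed under positive scaling, every element of $\Omega(x)$ can be rescaled to satisfy the normalization. Uniqueness of the normalized element then means that every element lies on one ray, which is (i).
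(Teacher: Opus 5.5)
Your proof is correct and follows the paper's argument essentially step for step: $\Omega(x)$ is the strictly positive part of $\ker H(x)$, a one-dimensional kernel containing a strictly positive vector is a single positive ray, adding a small multiple of a second null direction keeps you inside the open orthant, and the normalization $\omega_{f_0}=1$ pins the scale. Your version is slightly tidier than the paper's in three small ways. It works with an arbitrary element of $\Omega(x)$ rather than the true weights, it explicitly rules out $\rank H(x)=|\mathcal F|$ using nonemptiness, and it also proves the converse direction of the normalization claim.
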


\begin{proof}[Proof sketch]
The model implies $h(A)^\top\omega(x)=0$ for every $A\in\mathcal A_2(x)$, so $\Omega(x)$ is the set of strictly positive vectors in the null space of the stacked matrix $H(x)$. If $\rank H(x)=|\mathcal F|-1$, the null space is one-dimensional and $\Omega(x)$ is a single positive ray; the converse follows because, if the null space has dimension at least two and already contains one strictly positive vector (namely the true $\omega(x)$), then a sufficiently small perturbation along an independent null-space direction yields a second, non-proportional positive vector that still lies in the positive orthant. See Appendix~\ref{app:proof_fixed_x} for the full proof.
\end{proof}

The key identifying force in Theorem~\ref{thm:fixed_x_id} is \emph{rule switching at fixed features}: menus sharing the same $\psi(A)=x$ share the same latent weight vector $\omega(x)$, but may differ in which rules recommend which option. Each such menu contributes a linear restriction $h(A)^\top\omega(x)=0$, and identification occurs when these restrictions span a codimension-one subspace of $\mathbb{R}^{|\mathcal F|}$.

\paragraph{From fixed-feature weights to global identification of the affine gate.}
Theorem~\ref{thm:fixed_x_id} identifies the first-stage object $\omega(x)$ feature value by feature
value. The next result shows that once these weight vectors are recovered at sufficiently many
affinely independent feature values, the affine gate parameters themselves are globally identified.

\begin{corollary}[Global identification from fixed-feature weights]\label{cor:global_id_from_fixedx}
Consider the CA-RRM with affine indices
\[
s_f(A)=\alpha_f+\beta_f^\top \psi(A),\qquad f\in\mathcal F,
\]
and impose the normalization
\[
(\alpha_{f_0},\beta_{f_0})=(0,0)
\]
for some baseline rule $f_0\in\mathcal F$. Let $d=\dim(\psi)$.
Suppose there exist $(d+1)$ feature values $x^{(0)},\ldots,x^{(d)}$ such that:

\begin{enumerate}[label=(G\arabic*)]
\item \textbf{Cellwise identification:} For each $k\in\{0,\ldots,d\}$, there exist
menus $A_{k1},\ldots,A_{kM_k}$ with $\psi(A_{km})=x^{(k)}$ for all $m$, such that the associated
matrix
\[
H^{(k)}:=\begin{pmatrix}
h(A_{k1})^\top\\
\vdots\\
h(A_{kM_k})^\top
\end{pmatrix}
\]
has rank $|\mathcal F|-1$.

\item \textbf{Affine richness of feature values:} The matrix
\[
X:=\begin{pmatrix}
1 & (x^{(0)})^\top\\
\vdots & \vdots\\
1 & (x^{(d)})^\top
\end{pmatrix}
\]
has full rank $d+1$.
\end{enumerate}

Then the full parameter vector $\{(\alpha_f,\beta_f)\}_{f\in\mathcal F}$ is globally identified
from the population objects $\{p(A),\kappa^L(A),\kappa^R(A),\psi(A)\}$, up to the normalization.
\end{corollary}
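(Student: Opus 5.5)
The proof has two steps. First, Theorem~\ref{thm:fixed_x_id} recovers the weight vector $\omega(x^{(k)})$ at each of the $d+1$ feature values. Second, the affine structure of the log-weights is inverted using (G2).

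\emph{Step 1 (cellwise recovery).} Fix $k$. The menus $A_{k1},\ldots,A_{kM_k}$ lie in $\mathcal A_2(x^{(k)})$. I read (G1) as requiring them to be two-sided, since $h(A)$ and the restriction \eqref{eq:linear_restriction} are defined through $r(A)$ only there. Under the model, the true vector $\omega(x^{(k)})$, with entries $\exp(\alpha_f+\beta_f^\top x^{(k)})>0$, satisfies $H^{(k)}\omega(x^{(k)})=0$. Since $\rank H^{(k)}=|\mathcal F|-1$, the null space of $H^{(k)}$ is one-dimensional. It is therefore the line spanned by $\omega(x^{(k)})$, and its intersection with $\mathbb R^{|\mathcal F|}_{++}$ is a single positive ray.

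The matrix $H^{(k)}$ is built only from the population objects $p(A)$, $\kappa^L(A)$ and $\kappa^R(A)$, so this ray is identified. One subtlety: the full row span over all of $\mathcal A_2(x^{(k)})$ contains the span of these $M_k$ rows. Because the true $\omega$ annihilates every row, that span still has dimension at most $|\mathcal F|-1$, so it equals $|\mathcal F|-1$. I could invoke Theorem~\ref{thm:fixed_x_id} directly, but the direct null-space argument on $H^{(k)}$ suffices.

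The normalization $(\alpha_{f_0},\beta_{f_0})=(0,0)$ gives $\omega_{f_0}(x)=1$ for every $x$. This selects a unique point on the ray, so the ratios $\omega_f(x^{(k)})/\omega_{f_0}(x^{(k)})$ are identified. Taking logs, the quantities
\[
y_f^{(k)}:=\log\omega_f(x^{(k)})=\alpha_f+\beta_f^\top x^{(k)}
\]
are identified for every $f$ and every $k$.

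\emph{Step 2 (inverting the gate).} For each rule $f\neq f_0$, stack the identities over $k$:
\[
\big(y_f^{(0)},\ldots,y_f^{(d)}\big)^\top = X\,(\alpha_f,\beta_f^\top)^\top .
\]
By (G2), $X$ is a square $(d+1)\times(d+1)$ matrix of full rank, hence invertible. This gives
\[
(\alpha_f,\beta_f^\top)^\top=X^{-1}y_f .
\]
Suppose two parameter vectors $\theta,\theta'$, both satisfying the normalization, generate the same population objects. Step~1 shows they have the same $y_f^{(k)}$ for all $f$ and $k$, and the display then forces $(\alpha_f,\beta_f)=(\alpha'_f,\beta'_f)$ for all $f$. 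This is global identification.

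\emph{Main obstacle.} The only delicate point is the normalization in Step~1. Theorem~\ref{thm:fixed_x_id} identifies $\omega(x)$ only up to a positive scale, and that scale could a priori vary with $x$. The normalization $\omega_{f_0}\equiv1$ must be used at every $x^{(k)}$, not just once. This is legitimate because the softmax gate's invariance to adding a common constant to all $\alpha_f$ and a common vector to all $\beta_f$ is exactly an $x$-dependent rescaling $\exp(c+b^\top x)$ of $\omega(x)$. The normalization removes precisely this freedom, so no identifying information is lost. With that observed, the remaining steps are linear algebra.
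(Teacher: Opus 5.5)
Your proposal is correct and follows the paper's proof essentially step for step. Like the paper, you derive $H^{(k)}\omega(x^{(k)})=0$, use the rank condition to get a one-dimensional null space containing the strictly positive true weights, pin the scale at each $x^{(k)}$ via $\omega_{f_0}\equiv 1$, and invert $X$ rule by rule under (G2). Your extra remarks are sound refinements, not a different route: requiring the (G1) menus to be two-sided so that $r(A)$ is finite, and making the observational-equivalence comparison between $\theta$ and $\theta'$ explicit.
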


\begin{proof}[Proof sketch]
By Theorem~\ref{thm:fixed_x_id}, (G1) identifies each $\omega(x^{(k)})$ up to scale; the normalization $\omega_{f_0}\equiv 1$ pins the scale. For each rule $f$, stacking $\log\omega_f(x^{(k)})=\alpha_f+\beta_f^\top x^{(k)}$ over $k=0,\ldots,d$ gives a linear system with coefficient matrix $X$, which is invertible by~(G2). See Appendix~\ref{app:proof_global_id} for the detailed version.
\end{proof}

The corollary separates global identification into two conceptually distinct requirements.
Condition (G1) is a within-feature condition: it uses rule switching across menus that share the
same feature value to identify the relative rule weights at that feature value. Condition (G2) is
an across-feature condition: it uses affine variation in the feature values themselves to map the
identified weight vectors back into the gate coefficients. Thus, the identification argument has a
clear two-step structure: first recover $\omega(x)$ locally in feature space, then recover the
affine law of motion of those weights globally.

\paragraph{Effective dimension and baseline specification.}
Corollary~\ref{cor:global_id_from_fixedx} is stated for a generic feature map~$\psi$.
In our baseline specification (Section~\ref{subsec:gate_features}), the gate operates on a
$12$-dimensional vector of interpretable summary statistics $z(A)\in\mathbb R^{12}$, so the
corollary applies with $\psi=z$. When this feature vector has structural redundancies
(e.g., exact linear dependencies among coordinates), the corollary applies with $d$ replaced by
the effective dimension
\[
d_{\mathrm{eff}}:=\rank([\mathbf{1},z])-1,
\]
and identification holds for the gate parameters projected onto the effective feature subspace
(i.e., up to the null space of $[\mathbf{1},z]$). In our application, $d_{\mathrm{eff}}=11$
(the expected-value gap is a linear combination of the two level expected values). The neural
benchmark models use a separate, higher-dimensional raw menu encoding
$\psi(A)\in\mathbb R^{4M}$ as their input (Section~\ref{subsec:encoding}); the identification
conditions and diagnostics below do not target that encoding.

\subsubsection*{Diagnostics: empirically assessing identification strength}\label{subsec:rank_diagnostics}

Theorem~\ref{thm:fixed_x_id} and Corollary~\ref{cor:global_id_from_fixedx}
reduce global identification to two verifiable ingredients on the observed support:
cellwise rank sufficiency and affine richness of the identifying feature values.
We report three diagnostics.

D1 reports the fraction of two-sided menus (ensuring the log-odds representation is empirically meaningful). D2 reports per-rule decisiveness frequency and side-variation conditional on decisiveness; rules that take both sides contribute sign-varying rows to the $H^{(k)}$ matrices, supporting (G1). D3 verifies (G1) and (G2) directly: menus are grouped into cells sharing the same (or nearby) value of the gate features $z(A)$, and $\rank(\hat H^{(k)})$ is computed for each cell via SVD; (G1) is supported if at least $d_{\mathrm{eff}}+1$ cells achieve rank $|\mathcal F|-1$, and (G2) is supported if the centroids of those cells are affinely independent. We report the empirical results of these diagnostics in Section~\ref{subsec:rank_empirics}.

\section{Rule library}\label{subsec:library_details}

The library is deliberately small and simple: twelve parameter-free heuristics, each stated as a single line of conduct and implemented deterministically menu-by-menu. Rather than postulating a global valuation model and fitting preference parameters, we treat behavior as switching among a few coarse procedures.

Table~\ref{tab:rules_summary} gives the complete library. Each rule maps a binary lottery menu to a direct recommendation --- ``choose Left,'' ``choose Right,'' or ``abstain'' --- depending on the comparison it performs. The rule is \emph{active} ($A_f(A)=1$) when it delivers a strict ranking and \emph{inactive} otherwise.

\begin{table}[t]
\centering
\small
\caption{Rule library: twelve deterministic decision rules.\label{tab:rules_summary}}
\begin{tabular}{llll}
\toprule
\textbf{Family} & \textbf{Rule} & \textbf{Recommendation} & \textbf{Active when} \\
\midrule
Extremum & \textsc{MMn} & Choose higher $\min$ outcome & $x_{\min}(L^1)\neq x_{\min}(L^2)$ \\
& \textsc{MMx} & Choose higher $\max$ outcome & $x_{\max}(L^1)\neq x_{\max}(L^2)$ \\
& \textsc{MMa} & Choose higher midrange & midranges differ \\
& \textsc{MAP} & Choose higher modal outcome & modes differ \\
\addlinespace
Salience & \textsc{SAL} & Choose per most salient extreme comparison & comparison is strict \\
& \textsc{SAL2} & Choose per second-most salient comparison & unique second rank \\
\addlinespace
Regret & \textsc{REG} & Choose lower negated-regret distribution (FSD) & FSD ranking is strict \\
& \textsc{REGmed} & Choose lower median regret & median regrets differ \\
\addlinespace
Disappt. & \textsc{DIS} & Avoid largest downside contrast below mode & contrasts differ \\
& \textsc{DISmed} & Avoid second-largest downside contrast & contrasts differ \\
\addlinespace
Attention & \textsc{A1} & Always choose $L^1$ & always \\
& \textsc{A2} & Always choose $L^2$ & always \\
\bottomrule
\end{tabular}
\end{table}

\subsubsection*{Generating the indicators via perceived-lottery transformations}

To define activity and recommendation uniformly across all twelve rules, we use a common implementation device from \citet{seror2026}. Let $X$ denote the set of finite-support lotteries over monetary outcomes. Each rule $f$ maps a binary menu to a \emph{perceived binary menu}:
\[
f:\ X\times X\to X\times X,\qquad f(L^1,L^2)=\big(\pi_f(L^1;L^2),\,\pi_f(L^2;L^1)\big),
\]
where $\pi_f(L^i;L^j)\in X$ is a finite-support lottery. The activity and recommendation indicators are then determined by comparing perceived lotteries via First-Order Stochastic Dominance (FSD):
\begin{align}
A_f(A) &= \Ind\Big\{\pi_f(L^1;L^2) >_{\mathrm{FSD}} \pi_f(L^2;L^1)\ \ \text{or}\ \ \pi_f(L^2;L^1) >_{\mathrm{FSD}} \pi_f(L^1;L^2)\Big\},\notag\\
L_f(A) &= \Ind\Big\{\pi_f(L^1;L^2) >_{\mathrm{FSD}} \pi_f(L^2;L^1)\Big\}.\notag
\end{align}
For ten of the twelve rules, the perceived lotteries are degenerate (sure outcomes), so the FSD comparison reduces to comparing two numbers. For the regret rules, the perceived lotteries are genuine distributions. This construction is an implementation device - the model and identification theory require only the indicators $(A_f,L_f)$ - but it provides a uniform formal language for heterogeneous heuristics. The formal perceived-lottery map $\pi_f$ for each rule, along with the notation and auxiliary definitions needed to state them, is given in Appendix~\ref{app:perceived_lottery_specs}.

\subsubsection*{Discussion}

The perceived-lottery construction is a uniform implementation device: it generates the indicators $(A_f, L_f)$ for each rule via a single comparison criterion (FSD), allowing heterogeneous heuristics - from simple outcome comparisons to distributional regret evaluations - to be defined within a common formal language. The CA-RRM and all identification results are stated in terms of $(A_f, L_f)$ and do not depend on this construction.

The library spans five broad mechanisms emphasized in behavioral and bounded-rationality work - outcome simplification, salience-driven attention, regret, disappointment, and limited consideration - implemented here in a deliberately coarse, parameter-free form. The gate does not estimate a valuation model; it learns when each line of conduct best predicts behavior. Several rules are variants within a common mechanism (e.g.\ \textsc{SAL} vs.\ \textsc{SAL2}, \textsc{REG} vs.\ \textsc{REGmed}). When two variants are active at the same menu, the conditional-on-activity weights \eqref{eq:qtilde_def} yield a direct within-family split. A natural alternative would be a single ``aggregate'' rule that combines information across all ranks, but such a rule requires specifying how ranks are weighted --- absent preference parameters there is no canonical choice. We instead use discrete variants (top-ranked vs.\ second-ranked comparison) that probe sensitivity to the ranking without introducing an explicit aggregation scheme, stopping at rank two to keep the library small.

\subsection{Rule-level Diagnostics}\label{sec:diagnostics}

The rule-gating predictor is transparent at the menu level: by \eqref{eq:mixture_form}, each predicted choice
probability is a convex combination of deterministic rule recommendations, with weights
$\tilde q_f(A;\hat\theta)$ that can be interpreted as \emph{conditional-on-activity responsibilities}.
This section introduces summary diagnostics that turn these menu-by-menu decompositions into interpretable
rule-level statements. The goal is twofold: (i) quantify which procedures the fitted gate actually uses
on average, and (ii) distinguish rules that are essential for fit from rules that mainly act as substitutes.

\subsubsection*{Effective responsibility weights and concentration}\label{subsec:usage_HHI}

Because the CA-RRM conditions on activity, the economically meaningful notion of ``rule usage'' is the gate mass
\emph{among decisive rules}. For each menu $A_t$ and each rule $f$, define the effective responsibility
\begin{equation}\label{eq:qtilde_i_def}
\tilde q_{tf}
:=\tilde q_f(A_t;\hat\theta)
=\frac{q_f(A_t;\hat\theta)\,A_f(A_t)}{\sum_{g\in\mathcal F} q_g(A_t;\hat\theta)\,A_g(A_t)},
\end{equation}
whenever the denominator is positive.\footnote{In our implementation, the two attention rules \textsc{A1} and \textsc{A2}
are always included and are decisive for essentially all menus, so the denominator is typically well away from zero.}
The \emph{effective responsibility weight} of rule $f$ is the average responsibility across menus,
\begin{equation}\label{eq:wf_def}
w_f:=\frac{1}{T}\sum_{t=1}^T \tilde q_{tf},\qquad \sum_{f\in\mathcal F} w_f=1.
\end{equation}
Thus $w_f$ measures the average share of decision mass attributed to rule $f$, \emph{conditional on the rule being
eligible to speak} (i.e., active) at each menu.

\subsubsection*{Rule Concentration}

To summarize whether the fitted mixture concentrates on a few procedures or is diffuse across many, we report the
Herfindahl concentration of responsibility weights,
\begin{equation}\label{eq:HHI}
\mathrm{HHI}(F):=\sum_{f\in\mathcal F} w_f^2.
\end{equation}
Higher $\mathrm{HHI}(F)$ indicates that a small subset of rules accounts for most effective responsibility, whereas lower
$\mathrm{HHI}(F)$ indicates a more even allocation across rules. This statistic is informative because the
rule-gating predictor is interpretable only to the extent that its effective responsibilities are \emph{parsimonious}:
a high concentration means that most menus are explained by a small repertoire of named procedures, while a low
concentration suggests that predictive success relies on a broad set of substitutes. We also report the \emph{effective number of rules}
\[
N_{\mathrm{eff}}(F):=\frac{1}{\mathrm{HHI}(F)}\in[1,|\mathcal F|],
\]
which can be read as the number of equally-weighted rules that would generate the same concentration. For example,
$N_{\mathrm{eff}}=3$ means that the fitted model is about as concentrated as one that splits effective responsibility evenly
across three procedures. We use $\mathrm{HHI}(F)$ (and $N_{\mathrm{eff}}$) as a compact summary of whether the fitted
rule-gating model achieves accuracy through a few unifying mechanisms or through a diffuse patchwork of rules.

\subsubsection*{Predictive ablation index \texorpdfstring{$\phi(f)$}{phi(f)}}\label{subsec:phi}

Responsibility weights are descriptive, but they do not by themselves tell whether a rule is \emph{necessary} for predictive
accuracy: a rule can carry substantial mass yet be redundant because other rules typically make the same
recommendation, and conversely a rarely-used rule can matter sharply in a small region of menu space.
To quantify importance for fit, we use a refit-and-compare ablation diagnostic.

For each rule $f\in\mathcal F\setminus\{\textsc{A1},\textsc{A2}\}$, let $\mathrm{MSE}^{-f}$ denote the cross-validated
test MSE of rule-gating when $f$ is removed from the library and the gate is retrained on the remaining rules.
Let $\mathrm{MSE}^{\mathrm{full}}$ be the corresponding MSE for the full library. Define
\begin{equation}\label{eq:phi}
\phi(f):=\frac{\mathrm{MSE}^{-f}-\mathrm{MSE}^{\mathrm{full}}}{\mathrm{MSE}^{\mathrm{full}}}.
\end{equation}
Thus $\phi(f)>0$ means that removing $f$ worsens out-of-sample accuracy, and larger values indicate a more
prediction-critical procedure.\footnote{We exclude \textsc{A1}/\textsc{A2} from ablation because they serve as
stability anchors by ensuring decisiveness at essentially every menu; dropping them would confound behavioral
importance with numerical degeneracy of the CA-RRM normalization.}

\subsubsection*{Concentration impact \texorpdfstring{$\sigma_N(f)$}{sigma(f)}}\label{subsec:sigma}

Finally, we ask whether a rule acts as a \emph{unifier} (its presence makes effective responsibility concentrate on a small
repertoire of procedures) or instead as a \emph{substitute} that helps spread responsibility across the library.
For each $f\in\mathcal F\setminus\{\textsc{A1},\textsc{A2}\}$, we refit the gate after removing $f$ and recompute the
effective responsibility weights and concentration. Let $\mathrm{HHI}^{-f}$ denote the resulting concentration, and
let $N_{\mathrm{eff}}^{-f}:=1/\mathrm{HHI}^{-f}$ be the associated effective number of rules (recall $N_{\mathrm{eff}}(F)=1/\mathrm{HHI}(F)$ from above).
We summarize the \emph{concentration impact} of rule $f$ by the relative change in the effective number of rules:
\begin{equation}\label{eq:sigma_neff}
\sigma_{N}(f):=\frac{N_{\mathrm{eff}}^{-f}-N_{\mathrm{eff}}(F)}{N_{\mathrm{eff}}(F)}.
\end{equation}
Thus $\sigma_{N}(f)>0$ means that removing $f$ increases the effective number of rules (the fitted mixture becomes
more diffuse). Conversely, $\sigma_{N}(f)<0$ means that removing $f$ decreases the effective number of rules (the mixture becomes
more concentrated), so $f$ primarily acts as a substitute/diversifier: when it is unavailable, the gate reallocates
mass more sharply onto a smaller set of remaining procedures. This refit-based diagnostic complements the ablation index $\phi(f)$: $\phi(f)$ measures the \emph{predictive}
importance of $f$, while $\sigma_{N}(f)$ measures its role in shaping the \emph{structure} of the fitted mixture.
Importantly, $\sigma_{N}(f)$ is computed after a full refit and therefore reflects equilibrium reallocation of
responsibility.

The pair $(\phi(f),\sigma_N(f))$ separates distinct roles.
A large $\phi(f)$ indicates that the procedure improves predictive fit and is not easily replaced by the rest of the
library; the sign and magnitude of $\sigma_N(f)$ indicate whether that procedure also shapes the overall mixture
structure (concentrating or diversifying effective responsibility). In Section~\ref{sec:application} we report $(w_f,\phi(f),\sigma_N(f))$
side-by-side to identify rules that are heavily used, rules that are essential for accuracy, and rules that mainly act
as substitutes or robustness variants.

\section{Estimation and inference}\label{subsec:estimation_inference}

The identification results of Section~\ref{sec:identification_switching} are constructive: they imply a two-step estimator that recovers the gate parameters and supports inference on rule-level functionals such as responsibility weights and ablation indices.

The first step groups menus into cells sharing the same (or nearby) gate-feature value $\psi(A)=x^{(k)}$. Within each cell, the linear restrictions $h(A)^\top\omega(x^{(k)})=0$ generated by rule switching (Theorem~\ref{thm:fixed_x_id}) yield a constrained least-squares estimate of the normalized weight vector $\hat\omega(x^{(k)})$. The second step regresses the estimated log-weights $\log\hat\omega_f(x^{(k)})$ on the cell centroids to recover the affine gate coefficients $(\hat\alpha_f,\hat\beta_f)$.

Under standard regularity conditions - interior choice probabilities, cellwise rank sufficiency, affine richness of the identifying feature values, and stable sampling shares - the two-step estimator is consistent and asymptotically normal:

\begin{theorem}[Consistency and asymptotic normality of the two-step estimator]
\label{thm:two_step}
Let $N:=\min_m n_m\to\infty$ with $n_m/N\to\pi_m\in(0,\infty)$. Under the regularity conditions stated in Appendix~\ref{app:two_step_details} (Assumption~\ref{ass:estimation}):
\begin{enumerate}[label=(\roman*)]
\item The cellwise weight estimates $\hat\omega(x^{(k)})$ are consistent and $\sqrt{N}$-asymptotically normal.
\item The affine gate coefficients $\hat\gamma_f=(\hat\alpha_f,\hat\beta_f)$ are consistent and $\sqrt{N}$-asymptotically normal, with a sandwich covariance matrix reflecting the two-step structure.
\item Any continuously differentiable functional of the gate parameters (e.g., average responsibility weights $w_f$) inherits consistency and asymptotic normality by the delta method.
\end{enumerate}
\end{theorem}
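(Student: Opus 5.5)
The argument is a standard two-step delta-method proof, resting on the identification results of Section~\ref{sec:identification_switching}. I take the cells $x^{(0)},\dots,x^{(K')}$ as fixed, finitely many, and exactly shared by their menus ($\psi(A)=x^{(k)}$). I also take the per-menu frequencies $\hat p(A_m)$ to be independent across menus.

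\textbf{Step 0 (menu level).} Interior choice probabilities $p(A_m)\in(0,1)$ and the multivariate CLT give
\[
\sqrt{N}\bigl(\hat p-p\bigr)\dto \mathcal N\bigl(0,\ \diag\{p_m(1-p_m)/\pi_m\}\bigr).
\]
Since $r=p/(1-p)$ is smooth on $(0,1)$, the odds vector $\hat r$ is also $\sqrt N$-asymptotically normal. It follows that each stacked matrix $\hat H^{(k)}$, which is affine in $\hat r$ because its rows are $h_f(A)=\kappa_f^L(A)-r(A)\kappa_f^R(A)$, converges to $H^{(k)}$ at rate $\sqrt N$ with a Gaussian limit.

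\textbf{Step 1 (cellwise weights, part (i)).} I write the constrained least-squares estimator under the normalization $\omega_{f_0}=1$ explicitly. Partition $H^{(k)}=[\,c^{(k)}\ \ B^{(k)}\,]$, where $c^{(k)}$ is the column for $f_0$. The restriction $H^{(k)}\omega=0$ becomes $B^{(k)}\omega_{-f_0}=-c^{(k)}$. The estimator is then
\[
\hat\omega_{-f_0}=-\bigl(\hat B^{(k)\top}\hat B^{(k)}\bigr)^{-1}\hat B^{(k)\top}\hat c^{(k)}.
\]
The key algebraic point is that, by Theorem~\ref{thm:fixed_x_id}, $\rank H^{(k)}=|\mathcal F|-1$ and the null vector has a nonzero $f_0$-coordinate (it is strictly positive). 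Together these imply that $B^{(k)}$ has full column rank $|\mathcal F|-1$. Otherwise a nonzero $v$ with $B^{(k)}v=0$ would give a null vector $(0,v)$ of $H^{(k)}$ independent of $\omega^\star$, contradicting the rank condition.

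Hence $B^{\top}B$ is invertible at the truth. The estimator is a smooth (rational) function of $\hat r$ on a neighborhood of $r$, and it equals $\omega(x^{(k)})$ at $\hat r=r$ because the model restriction holds exactly. The continuous mapping theorem gives consistency, and the delta method gives $\sqrt N$-normality, with Jacobian obtained by differentiating the normal-equations map.

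Positivity of the limit means $\hat\omega>0$ with probability approaching one, so the log-weights $\log\hat\omega_f$ are well defined. The same holds jointly across cells, since cells use disjoint menus.

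\textbf{Step 2 (gate coefficients, part (ii)).} For each $f$, $\hat\gamma_f=(X^\top X)^{-1}X^\top \log\hat\omega_f(\cdot)$. Under (G2) and Corollary~\ref{cor:global_id_from_fixedx}, $X$ has full column rank $d_{\mathrm{eff}}+1$, after projection onto the effective subspace when features are redundant. This is again a smooth map of the Step~1 estimates, and it equals $\gamma_f$ at the truth because $\log\omega_f(x^{(k)})=\alpha_f+\beta_f^\top x^{(k)}$ exactly under correct specification. The chain-rule composition of the Jacobians yields the sandwich form
\[
G_2\,G_1\,\Sigma_r\,G_1^\top G_2^\top .
\]
Here $G_1$ is the first-step Jacobian, $G_2=(X^\top X)^{-1}X^\top\diag(1/\omega)$, and $\Sigma_r$ is the asymptotic variance of $\hat r$. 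Any sampling noise in the centroids is excluded by the assumption of exactly shared feature values; with binned cells one would add an approximation-bias condition.

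\textbf{Step 3 (functionals, part (iii)).} Part (iii) is the delta method applied to a $C^1$ functional such as $w_f$. For $w_f$, the menu set is treated as fixed, and the softmax ratios are smooth whenever the active-mass denominator is positive, which holds because \textsc{A1} and \textsc{A2} are always active.

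The main obstacle is Step~1. The obstacle is not the rank argument itself, which follows from Theorem~\ref{thm:fixed_x_id} as above. It is making precise that the regularity assumption delivers local invertibility uniformly and that the probability-limit rank of $\hat H^{(k)}$ is the population rank. Sample rank can jump upward to $|\mathcal F|$ in finite samples, which is why the normalized least-squares formulation, rather than a null-space or SVD formulation, must be used. I would state this as the key lemma.
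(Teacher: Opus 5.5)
Your proposal is correct and follows essentially the same route as the paper's proof: the CLT and delta method for the odds vector, the fact that $\hat H^{(k)}$ is affine in $\hat r$, the normalized least-squares map $-(H_-^\top H_-)^{-1}H_-^\top h_0$ followed by a delta-method step through the log transform, a linear second stage yielding the sandwich covariance $A_fV_{y,f}A_f^\top$, and the delta method for functionals. Your explicit argument that the non-baseline block $B^{(k)}$ has full column rank (a nonzero $v$ with $B^{(k)}v=0$ would give a null vector of $H^{(k)}$ independent of the positive true weight vector, whose $f_0$-coordinate is nonzero) fills in a step the paper simply asserts from (E2), and your unweighted second stage is the special case $W_f=I$ of the paper's minimum-distance estimator.
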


The estimator construction, regularity conditions, proof, and covariance formulas are given in Appendix~\ref{app:two_step_details}. When the number of cells exceeds $d+1$ (the dimension of the augmented feature vector), the affine gate restriction is overidentified, and a standard minimum-distance $J$-test can be used to assess the specification; we report this diagnostic in Section~\ref{subsec:two_step_results}.

This two-step procedure targets identification-consistent parameter recovery and supports formal inference on the decomposition objects. For out-of-sample prediction and model comparison, however, we estimate the gate by direct MSE minimization (Section~\ref{subsec:est_rg}), which avoids the binning of the feature space into discrete cells that the two-step procedure requires. Both estimators target the same population decomposition when the gate specification is correct; the two-step estimator provides a built-in diagnostic for this restriction.

\section{Data and empirical design}\label{sec:design}

This section introduces the data and evaluation protocol.
We describe the dataset, define the prediction target and evaluation metric,
explain the gate-feature specification on which the identification conditions of Section~\ref{sec:identification_switching}
are verified, and detail the cross-validation protocol used for predictive comparison.
Benchmark models are described at the end of the section.

\subsection{Dataset and target variable}\label{subsec:data}

We use the \texttt{choices13k} dataset introduced by \citet{peterson2021}, containing 13,006 binary risky-choice problems. Each problem
corresponds to a menu $A=\{L^1,L^2\}$ and is repeated across many observations, allowing the computation of a menu-level choice frequency.
We work with the filtered dataset that (i) removes ambiguous problems (masked probabilities)
and (ii) keeps only problems for which feedback was provided, yielding $T=9{,}831$ menus. The feedback restriction follows the baseline protocol of \citet{peterson2021} and has a practical justification: feedback problems have more repetitions per menu, producing more precise estimates of the choice frequency $\hat p(A)$ that serves as our prediction target. Because our model is static (it maps menu features to a predicted choice probability without using feedback information), the feedback condition affects only the precision of the dependent variable, not the model's information set.\footnote{Feedback can in principle induce learning effects. If such dynamics are present, our static model captures the average behavior under the feedback regime.} For each menu $A$, the dataset reports the fraction of times the left option is chosen. We denote this empirical frequency by
$\hat p(A)$ and treat it as the target to be predicted.

\subsection{Evaluation metric}\label{subsec:loss}

Our main performance metric is the mean squared error (MSE):
\begin{equation}\label{eq:mse}
\mathrm{MSE}(g)=\frac{1}{T}\sum_{t=1}^T\Big(g(A_t)-\hat p(A_t)\Big)^2,
\end{equation}
for $\{A_t\}_{t\in \{1,\dots,T\}}\subseteq \mathcal A_{\text{full}}$. The MSE is a scoring rule: it rewards probabilistic accuracy and penalizes both systematic bias and overconfidence. We focus on MSE because
it is standard and is the primary metric reported by \citet{peterson2021}.

Because the number of trials $n(A_t)$ behind each menu-level frequency $\hat p(A_t)$ varies across menus, we also report a trial-weighted variant:
\begin{equation}\label{eq:mse_w}
\mathrm{MSE}_w(g)=\frac{\sum_{t=1}^T n(A_t)\Big(g(A_t)-\hat p(A_t)\Big)^2}{\sum_{t=1}^T n(A_t)}.
\end{equation}
This metric downweights menus estimated from few trials, where $\hat p(A_t)$ is noisier, and gives a more precision-aware summary of predictive accuracy. We use the unweighted MSE as the primary metric for comparability with \citet{peterson2021} and report $\mathrm{MSE}_w$ as a robustness check.

We also conduct paired split-level comparisons (Appendix Table~\ref{tab:paired_comparisons}). Because Choices13k menus typically aggregate many repetitions (median above 25), we treat $\hat p(A)$ as the prediction target without modeling the sampling process; $\mathrm{MSE}_w$ provides a precision-aware robustness check.

\subsection{Menu encoding}\label{subsec:encoding}

All models take as input a fixed-length numerical encoding of the two lotteries in the menu. To align with \citet{peterson2021}, each lottery is represented
by its list of outcomes and probabilities after sorting outcomes in increasing order, truncating to a maximum support size $M$ and padding with zeros.
We set $M=10$ throughout.

Formally, for each lottery $L$ with support points $\{(x_k,\pi_k)\}_{k=1}^{K}$, order outcomes so that $x_1\le \cdots \le x_K$ and define
\[
x_{1:M}(L)=(x_1,\ldots,x_{\min\{K,M\}},0,\ldots,0),\qquad
\pi_{1:M}(L)=(\pi_1,\ldots,\pi_{\min\{K,M\}},0,\ldots,0).
\]
The menu encoding is the concatenation
\begin{equation}\label{eq:psi}
\psi(A)=\Big(x_{1:M}(L^1),\pi_{1:M}(L^1),x_{1:M}(L^2),\pi_{1:M}(L^2)\Big)\in\mathbb R^{4M}.
\end{equation}
To stabilize optimization across models, we rescale outcomes by the maximum absolute payoff observed in the filtered sample; the same scaling
is applied for all models and all folds. This rescaling enters only the menu encoding $\psi(A)$ (and hence the gate features $z(A)$) used as inputs to differentiable models; it does not affect the rule indicators, which are computed on raw outcomes before any rescaling. The scaling is a fixed, data-wide preprocessing step applied identically to all models and folds.

\subsection{Cross-validation protocol}\label{subsec:cv}

We compare models using the out-of-sample evaluation structure of \citet{peterson2021}. The unit of observation for prediction is the menu, not the individual trial.
We repeatedly split the set of menus into training and test sets, fit model parameters on the training set, and evaluate MSE on the held-out test set.

\paragraph{Splits.}
We generate $K=50$ independent random splits of the $N$ menus into $90\%$ training and $10\%$ test, as in \citet{peterson2021}. Each model is trained from scratch on each split.

\paragraph{Learning-rate selection.}
Each differentiable model is trained by gradient descent (Adam). Following \citet{peterson2021}, we select
the learning rate from a small grid. To avoid selecting the learning rate on the same data used to evaluate generalization (as in \citealp{peterson2021}, who select on test MSE), we use a two-pass protocol within each
split. In Pass~A (validation), we further partition the training set into an 80\% sub-training set and a 20\% validation set, train the model on the sub-training set for each candidate learning rate, and record the validation MSE. In Pass~B (test), we retrain the model on the \emph{full} training set using each candidate learning rate and evaluate MSE on the held-out test set. The reported performance of a model is the average test MSE (from Pass~B) across the 50 splits at the learning rate that achieves the lowest average validation MSE (from Pass~A). This two-pass validation protocol applies to all models \emph{except} the Mixture of Theories (MOT). MOT relies on the HURD library's internal training/validation logic, which does not expose hooks for overriding the train/validation partition or for extracting per-menu predictions at the validation stage. We therefore select MOT's learning rate by mean \emph{test} MSE across splits, following \citet{peterson2021}. This asymmetry is conservative: it gives MOT the benefit of test-set-informed learning-rate selection, which, if anything, biases the comparison in MOT's favor.

\paragraph{Implementation details.}
We adopt the same network sizes and training horizons used by \citet{peterson2021} (and their Supplementary Materials) for Neural EU/PT/CPT, Value-Based,
Context-Dependent, and MOT. We keep all such hyperparameters fixed across splits and do not tune architectures. The only tuned hyperparameter is the learning rate.

\subsection{Predictive implementation}\label{subsec:est_rg}

For the purpose of out-of-sample prediction and model comparison, we estimate the gate parameters by direct MSE minimization. This predictive estimator differs from the two-step econometric procedure of Section~\ref{subsec:estimation_inference}: the latter targets the identification-driven parameter recovery and supports inference for rule-level functionals, while the former is designed purely for predictive accuracy and benchmark evaluation. The rule-gating model differs from the benchmark neural predictors in a key way: the \emph{rules are deterministic and parameter-free}.
Estimation concerns only the \emph{gate}, i.e.\ the mapping from menu features $z(A)$ to rule propensities $q(A)\in\Delta^{F-1}$ (see Section~\ref{subsec:gate_features} for the definition of $z(A)$).

\paragraph{Precomputation: rule activity and recommendations.}
For each menu $A_t$ and each rule $f\in\mathcal F$, we compute once-and-for-all the indicators
$A_f(A_t)$ and $L_f(A_t)$ defined in \eqref{eq:Af_def}--\eqref{eq:Lf_def}. These are deterministic functions of $(A_t,f)$.
Precomputing them is important: it makes training fast and ensures that the gate is the only learned component.

\paragraph{Gate estimation.}
On a given training split, we estimate the gate parameters $\theta$ by minimizing the training MSE between the predicted choice probability
$g_{\mathrm{RG}}(A;\theta)$ and the target $\hat p(A)$, where
\[
g_{\mathrm{RG}}(A;\theta)=\frac{\sum_{f\in\mathcal F} q_f(A;\theta)\,L_f(A)}{\max\{\sum_{f\in\mathcal F} q_f(A;\theta)\,A_f(A),\,m_{\min}\}}.
\]
The gate weights $q_f(A;\theta)$ are given by the softmax specification \eqref{eq:softmax_gate}. We optimize the smooth objective using Adam with gradient clipping.
This training step is conceptually simple: it is standard least-squares fitting, except that predictions are constructed as a weighted average of deterministic rule
recommendations, conditional on rule activity.

\subsection{Gate features}\label{subsec:gate_features}

The gate operates on a $12$-dimensional vector of economically motivated summary statistics $z(A)\in\mathbb{R}^{12}$: six pairwise gaps between the two lotteries (EV, max, min, variance, mode, skewness) and six per-lottery or menu-level features (EV and SD of each lottery, the largest absolute outcome, and the support-size gap). These features capture the contrasts that the rules operate on (min/max gaps relate to \textsc{MMn}/\textsc{MMx}, the mode gap to \textsc{MAP}, etc.) while allowing the gate to condition on menu scale and complexity.

Using $z(A)$ rather than the raw $40$-dimensional encoding $\psi(A)$ is conceptually appropriate: the gate should capture systematic variation in rule usage with broad menu characteristics, not index menus individually. Coarsening also yields repeated support points needed for the identification conditions (G1)--(G2). In our application, $z(A)$ matches or improves predictive accuracy relative to the raw encoding (Appendix~\ref{app:raw_gate}). The gate uses the softmax architecture of Equation~\eqref{eq:softmax_gate} with $z(A)$ in place of $\psi(A)$, giving $(|\mathcal{F}|-1)\times(1+d_z)$ free parameters.

\subsection{Benchmark models}\label{subsec:benchmarks}

We benchmark rule-gating against the differentiable model classes studied by \citet{peterson2021}. Most benchmarks assign each gamble a value $V(\cdot)$ and map values into a choice probability via a logit link $g(A;\theta)=\exp(\eta V(L^1))/[\exp(\eta V(L^1))+\exp(\eta V(L^2))]$, where $\eta$ is a free inverse-noise parameter. Neural EU, PT, and CPT learn the utility function $u(\cdot)$ (and, for PT/CPT, the probability-weighting function $\pi(\cdot)$) as small neural networks while retaining the respective aggregation structure. Value-Based relaxes the aggregation, learning $V(L)=f(x^L,p^L)$ as a generic neural network. Context-Dependent is the most flexible: a neural network directly maps the full description of both gambles to a choice probability without separate valuation. Finally, the Mixture of Theories (MOT) is a structured context-dependent benchmark that softly selects among candidate utility and probability-weighting functions via gating networks \citep[Supplementary Materials]{peterson2021}. All architectures and training horizons follow \citet{peterson2021}.

\section{Application: Procedural Heterogeneity in Risky Choice}\label{sec:application}

This section estimates procedural heterogeneity in risky choice using the \texttt{choices13k} dataset.
We first verify the identification conditions on the data and establish out-of-sample predictive performance.
We then present the estimated behavioral decomposition - responsibility weights, ablation indices,
and concentration measures - followed by an econometric cross-check via the two-step estimator.
Finally, we report comparative statics in rule responsibility along menu primitives
and portability to an independent dataset.
The responsibility weights $w_f$ and context-dependent patterns reported below are \emph{population-level} objects: they describe how the fitted mixture varies across menus in a dataset aggregating choices across many individuals. Without individual-level panel data, within-individual and between-individual components cannot be separated; mechanism-level statements should be read as describing population-level reallocation.

\subsection{Identification diagnostics}\label{subsec:rank_empirics}

We evaluate diagnostics D1--D3 on the full Choices13k dataset.

\paragraph{D1.} $\dOneCount$ of $\dOneTotal$ menus ($\dOneFrac$\%) are two-sided,
confirming that the log-odds representation \eqref{eq:delta_rank} is well-defined on virtually the entire sample.

\paragraph{D2.}
Table~\ref{tab:coverage} reports per-rule coverage and side-variation.
All ten non-attention rules switch sides, and all but one (\textsc{REG}, 30\%) are decisive on at least 41\% of menus,
contributing sign-varying rows to the $H^{(k)}$ matrices that drive condition~(G1).
The attention rules (\textsc{A1}, \textsc{A2}) are each confined to a single side by construction.

\begin{table}[t]
\centering
\caption{\label{tab:coverage}Rule-level coverage and side-variation (D2). For each rule $f$, $N_{\text{act}}$ is the number of menus where $f$ is decisive, $\Pr(L\mid\text{act})$ and $\Pr(R\mid\text{act})$ are the conditional left/right recommendation frequencies. Rules that take both sides contribute sign-varying rows to $H^{(k)}$, supporting condition (G1).}
\begin{tabular}{lrrccc}
\toprule
Rule & $N_{\text{act}}$ & $\Pr(\text{act})$ & $\Pr(L\mid\text{act})$ & $\Pr(R\mid\text{act})$ & Switches \\
\midrule
\textsc{MMn} & 9,584 & 0.975 & 0.294 & 0.706 & $\checkmark$ \\
\textsc{MMa} & 9,668 & 0.983 & 0.580 & 0.420 & $\checkmark$ \\
\textsc{MMx} & 9,688 & 0.985 & 0.739 & 0.261 & $\checkmark$ \\
\textsc{MAP} & 9,512 & 0.968 & 0.410 & 0.590 & $\checkmark$ \\
\textsc{SAL} & 9,831 & 1.000 & 0.485 & 0.515 & $\checkmark$ \\
\textsc{SAL2} & 4,057 & 0.413 & 0.509 & 0.491 & $\checkmark$ \\
\textsc{REG} & 2,957 & 0.301 & 0.543 & 0.457 & $\checkmark$ \\
\textsc{REGmed} & 8,661 & 0.881 & 0.446 & 0.554 & $\checkmark$ \\
\textsc{DIS} & 4,773 & 0.486 & 0.322 & 0.678 & $\checkmark$ \\
\textsc{DISmed} & 4,773 & 0.486 & 0.336 & 0.664 & $\checkmark$ \\
\textsc{A1} & 9,831 & 1.000 & 1.000 & 0.000 &  \\
\textsc{A2} & 9,831 & 1.000 & 0.000 & 1.000 &  \\
\bottomrule
\end{tabular}
\end{table}

\paragraph{D3.}
Table~\ref{tab:global_id} evaluates the rank conditions (G1) and (G2) of Corollary~\ref{cor:global_id_from_fixedx} on the baseline gate features $z(A)\in\mathbb{R}^{12}$.
In the Choices13k data, the $12$-dimensional features are essentially unique across menus, so exact-equality cells are singletons. We supplement exact equality with $k$-means clustering on $z(A)$, grouping menus into $\gidNclusters$ cells.\footnote{The number of clusters ($\gidNclusters$) is chosen so that each cell contains at least $|\mathcal F|-1$ menus while keeping within-cell feature variation modest.} Of these, $\gidCellsRankOk$ cells achieve $\mathrm{rank}(H^{(k)})=|\mathcal F|-1$, well above the $d_{\mathrm{eff}}+1=\gidDeffPlusOne$ needed for (G1). Among the G1-passing cells, the augmented centroid matrix has full affine rank, satisfying (G2). Both conditions are supported under approximate binning. The rank conditions are structurally stable: because the relevant singular values are well above the SVD tolerance ($10^{-8}\times\sigma_1$), small perturbations in $\hat r(A)$ do not change the ranks.

We note that the $k$-means diagnostic rests on an approximation that lies outside the formal identification theory: Theorem~\ref{thm:fixed_x_id} requires menus in a cell to share the \emph{exact} feature value $x$, whereas clustering only ensures approximate homogeneity within each cell. This is a standard finite-sample device analogous to binning in semiparametric estimation, and its validity depends on $\omega(x)$ varying smoothly over the feature support. The overidentification $J$-test reported in Section~\ref{subsec:two_step_results} provides a complementary, estimation-based check on the adequacy of the gate restriction that does not rely on exact-equality cells.

\begin{table}[t]
\centering
\caption{\label{tab:global_id}Global identification verification (D3). Conditions (G1) and (G2) of Corollary~\ref{cor:global_id_from_fixedx} are checked on the Choices13k data using $k$-means clustering on interpretable $z$-features ($d=11$).}
\begin{tabular}{lrl}
\toprule
Condition & Value & Status \\
\midrule
Effective feature dimension $d_{\mathrm{eff}}$ & 11 & \\
(G1) Cells with $\mathrm{rank}(H^{(k)})=|\mathcal F|-1$ & 41 / 12 needed & $\checkmark$ \\
(G2) $\mathrm{rank}([1, z])$ & 12 / 12 needed & $\checkmark$ \\
\midrule
Globally identified &  & $\checkmark$ \\
\bottomrule
\end{tabular}
\end{table}

\subsection{Predictive validation}\label{subsec:results_perf}

Table~\ref{tab:cv_results} reports mean test MSE across 50 random train/test splits. Rule-gating substantially outperforms all structured models (Neural EU/PT/CPT, Value-Based) and the flexible Context-Dependent model. Only MOT achieves marginally lower MSE, with the gap closing to about $2.5\%$ of the MSE level.\footnote{The rule-gating model's effective inputs include both $z(A)$ and the $24$-dimensional rule indicators. The neural benchmarks use the raw $40$-dimensional encoding $\psi(A)$. The feature-based comparison in Appendix Table~\ref{tab:sklearn} provides the closest feature-controlled evaluation. A GBT trained on summary statistics achieves MSE $0.0096$, below both rule-gating and MOT; the interpretability cost is non-trivial but the RRM's decomposition is structural, not post-hoc, which is what permits the mechanism-level analysis that follows.}

Figure~\ref{fig:cv_performance_panel} visualizes the comparison with split-variability bands. Figure~\ref{fig:learning_panel} reports a learning-curve diagnostic: rule-gating outperforms MOT at small training fractions; MOT pulls ahead only above roughly $70\%$ of the data, consistent with its larger number of free parameters. The trial-weighted metric $\mathrm{MSE}_w$ (Table~\ref{tab:cv_results}) preserves all pairwise orderings; Appendix Tables~\ref{tab:paired_comparisons} and~\ref{tab:model_ci} report split-level comparisons.

\begin{table}[h!]
\caption{\label{tab:cv_results}Out-of-sample predictive performance (50 random splits). Learning rates are selected by inner-fold validation MSE (Pass~A); test MSE and $\mathrm{MSE}_w$ are evaluated at the selected learning rate on the held-out test set (Pass~B). $\mathrm{MSE}_w$ denotes the trial-weighted MSE defined in \eqref{eq:mse_w}.}
\centering
\begin{tabular}[t]{lrrrr}
\toprule
Model & Mean MSE & SD & Mean $\mathrm{MSE}_w$ & Best LR\\
\midrule
MOT & 0.01139 & 0.00056 & \multicolumn{1}{c}{n.a.} & 0.001\\
Rule-gating & 0.01168 & 0.00049 & 0.01155 & 0.010\\
Context-dependent & 0.01344 & 0.00081 & 0.01330 & 0.010\\
Value-based & 0.01921 & 0.00081 & 0.01904 & 0.010\\
Neural PT & 0.02064 & 0.00085 & 0.02050 & 0.010\\
Neural CPT & 0.02145 & 0.00093 & 0.02134 & 0.010\\
Neural EU & 0.02215 & 0.00085 & 0.02205 & 0.010\\
\bottomrule
\end{tabular}\\[3pt]
{\footnotesize\textit{Note:} $^\dagger$MOT learning rate is selected by test MSE following the published protocol of \citet{peterson2021} (see Section~\ref{subsec:cv}); the HURD framework does not produce per-menu predictions, so $\mathrm{MSE}_w$ is unavailable for MOT in cross-validation.}
\end{table}

\begin{figure}[t]
    \centering
    \begin{subfigure}[t]{0.49\linewidth}
        \centering
        \includegraphics[width=\linewidth]{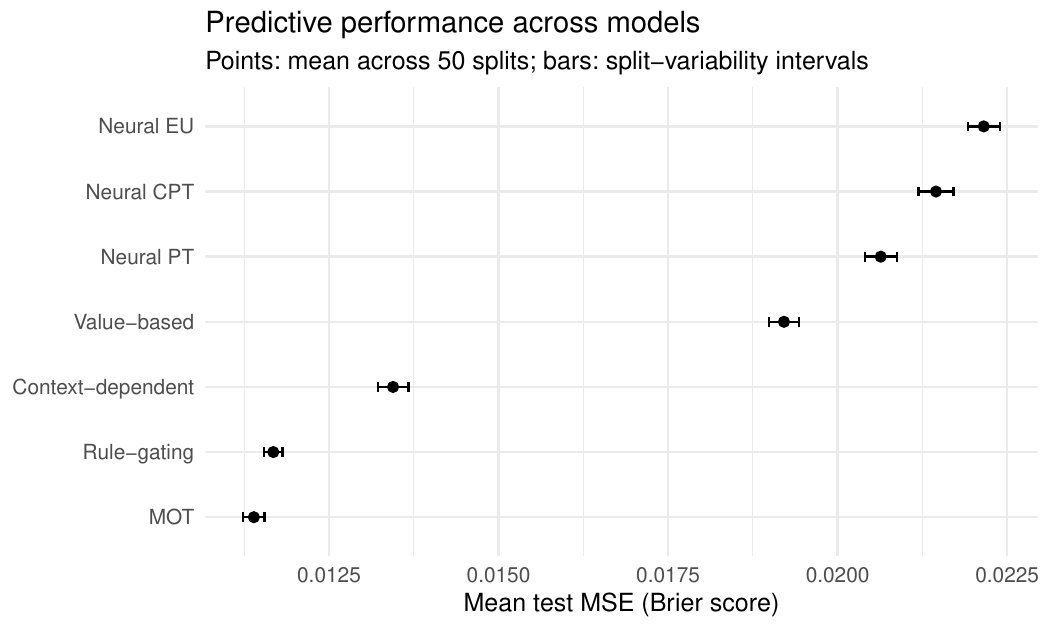}
\caption{Mean test MSE across 50 random splits (split-variability band).}
        \label{fig:cv_performance_panel}
    \end{subfigure}\hfill
    \begin{subfigure}[t]{0.49\linewidth}
        \centering
        \includegraphics[width=\linewidth]{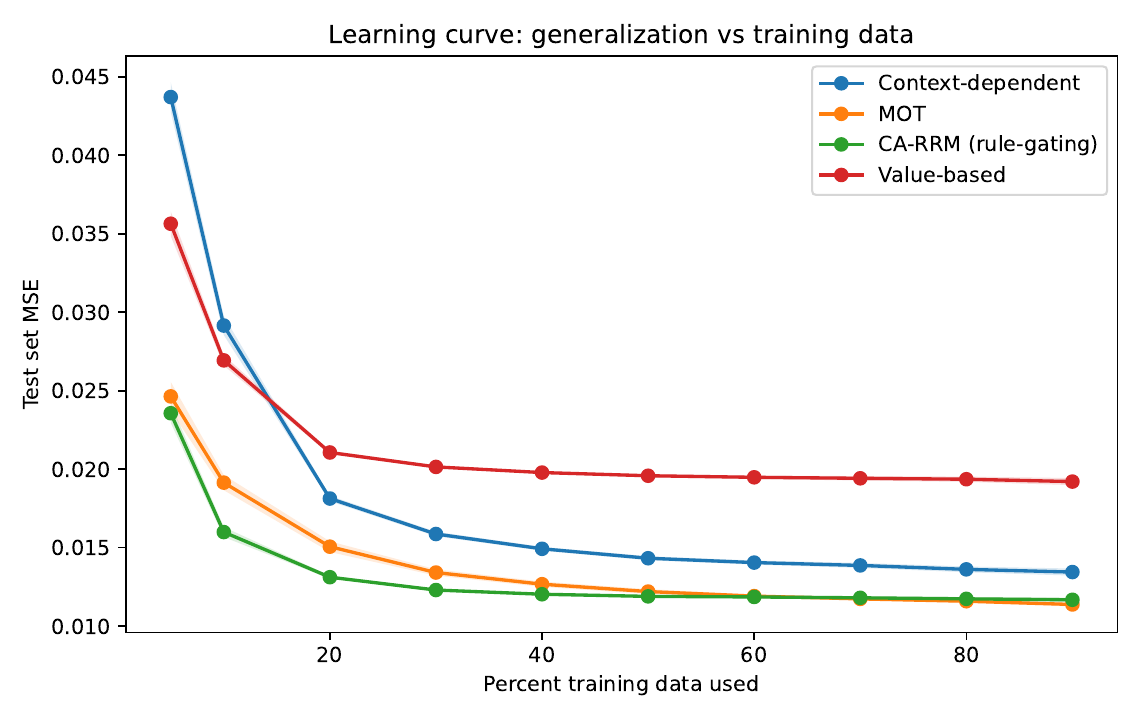}
\caption{Learning curve: test MSE vs.\ training fraction (split-variability band).}
        \label{fig:learning_panel}
    \end{subfigure}
    \caption{\label{fig:perf_two_panels}Predictive performance. Panel (a) compares out-of-sample MSE across models. Panel (b) reports the learning curve, plotting test-set
    MSE as a function of the percent training data used.}
\end{figure}

\subsection{Responsibility weights $w_f$}\label{subsec:responsibility_weights}

Figure~\ref{fig:rule_weights} reports the fitted rule-responsibility weights $w_f$ for the full 12-rule model, computed as average \emph{conditional-on-activity} responsibilities:
\[
w_f=\mathbb E_A\!\left[\tilde q_f(A;\hat\theta)\right],\qquad
\tilde q_f(A;\theta)=\frac{q_f(A;\theta)\,A_f(A)}{\sum_{g\in\mathcal F} q_g(A;\theta)\,A_g(A)}.
\]
Recall that $w_f$ measures the average share of decision mass attributed to rule $f$ \emph{among decisive rules}. We refer to $w_f$ as the \emph{responsibility} (or attribution) weight of rule~$f$, since it reflects the fraction of the predicted choice probability that can be attributed to that rule in the fitted mixture, rather than a direct observation of which rule a decision-maker ``uses.''

Two patterns stand out. First, the attention rules \textsc{A1} and \textsc{A2} absorb the two largest shares of effective responsibility ($w_{\textsc{A1}}\approx 0.280$ and $w_{\textsc{A2}}\approx 0.260$), with a combined attention weight of about $54\%$. This is expected: the attention rules are decisive on essentially all menus and serve as a catch-all channel. Among the other rules, \textsc{SAL} (most salient comparison, $9.4\%$), \textsc{REG} (regret, $7.4\%$), \textsc{MAP} (modal payoff, $6.0\%$), \textsc{MMx} (aspiration, $5.2\%$), and \textsc{MMn} (security, $5.1\%$) carry the largest responsibility weights.
Second, because multiple rules are typically active on the same menu (on average, about $5.8$ rules are both active and consistent with the majority choice direction), these responsibility weights reflect the outcome of competition among substitutable procedures rather than mechanical assignment to the only available rule.

It is useful to separate the library into two conceptual channels. The \emph{structured rules} (\textsc{MMn}, \textsc{MMa}, \textsc{MMx}, \textsc{MAP}, \textsc{SAL}, \textsc{SAL2}, \textsc{REG}, \textsc{REGmed}, \textsc{DIS}, \textsc{DISmed}) encode distinct behavioral mechanisms that generate non-trivial, menu-dependent recommendations. The \emph{attention channel} (\textsc{A1} and \textsc{A2}) captures inattention, hasty responding, or trembling: \textsc{A1} always recommends left and \textsc{A2} always recommends right, regardless of lottery content. These attention rules absorb roughly half of effective responsibility, reflecting the prevalence of default-like behavior in the online (MTurk) environment. Because the marginal left-choice frequency is $\bar p \approx 0.510$ (essentially one-half), this weight cannot be explained by aggregate position bias; rather, the gate allocates attention weight in a menu-dependent way. Part of this weight may, however, reflect absorption of residual prediction error by always-active rules rather than a cleanly identified behavioral channel. Our interpretability claims concern the \emph{structured rules conditional on the inattention channel}. Appendix~\ref{app:attention_removal} confirms that mechanism patterns persist when the attention channel is removed entirely.

The Herfindahl index $\mathrm{HHI}(F)=\sum_f w_f^2\approx 0.173$ yields an effective number of rules $N_{\mathrm{eff}}=1/\mathrm{HHI}\approx 5.8$, reflecting the concentration of responsibility on the attention channel; among the ten structured rules alone, weight is more evenly spread.

\begin{figure}
    \centering
    \includegraphics[width=0.75\linewidth]{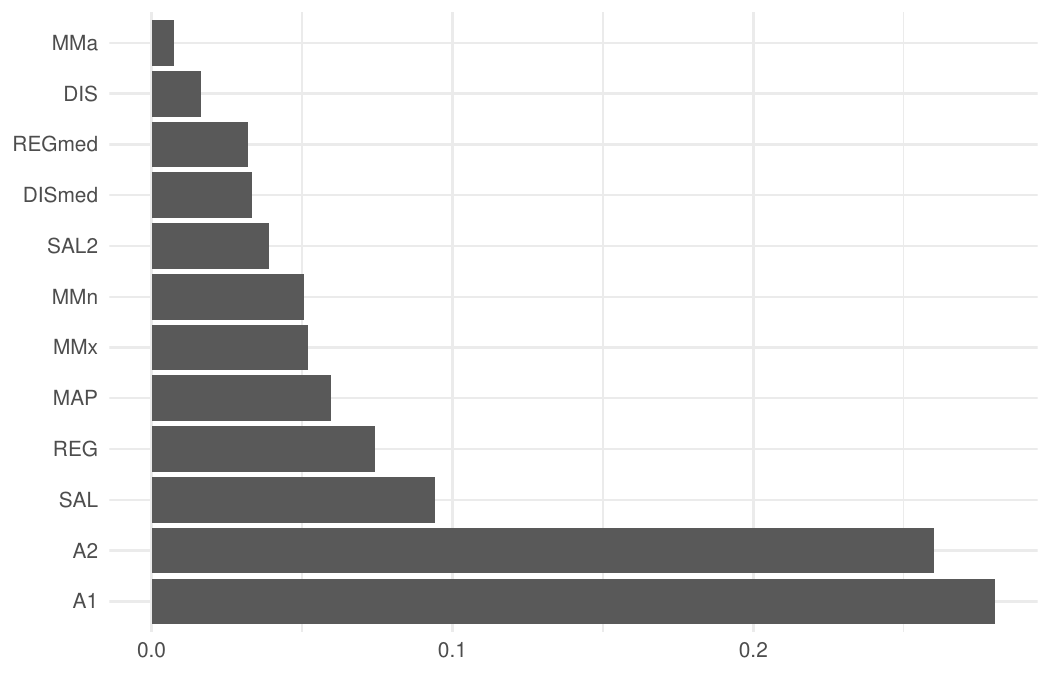}
    \caption{\label{fig:rule_weights}Rule responsibility weights $w_f$ for the full 12-rule library. Each $w_f$ is the average share of predicted choice probability attributed to rule~$f$ among decisive rules.}
\end{figure}

\subsection{Rule concentration}

\begin{figure}[t]
    \centering
   \begin{subfigure}[t]{0.49\linewidth}
        \centering
           \centering
    \includegraphics[width=0.75\linewidth]{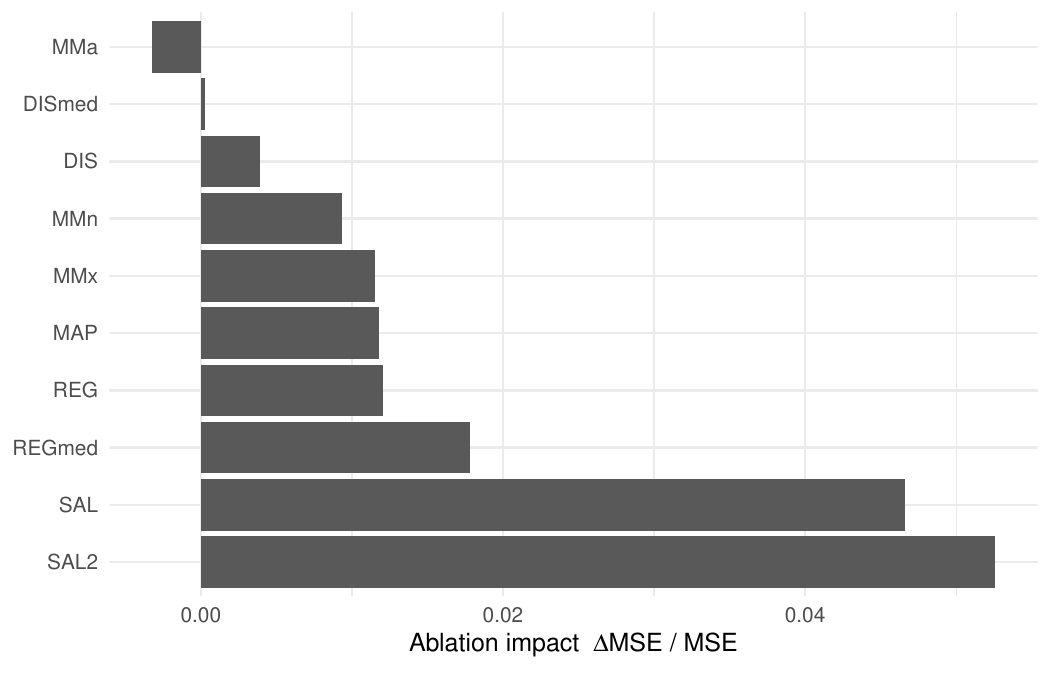}
    \caption{\label{fig:phi} Predictive ablation index $\phi(f)$.}
    \end{subfigure}
    \begin{subfigure}[t]{0.49\linewidth}
        \centering
    \includegraphics[width=0.75\linewidth]{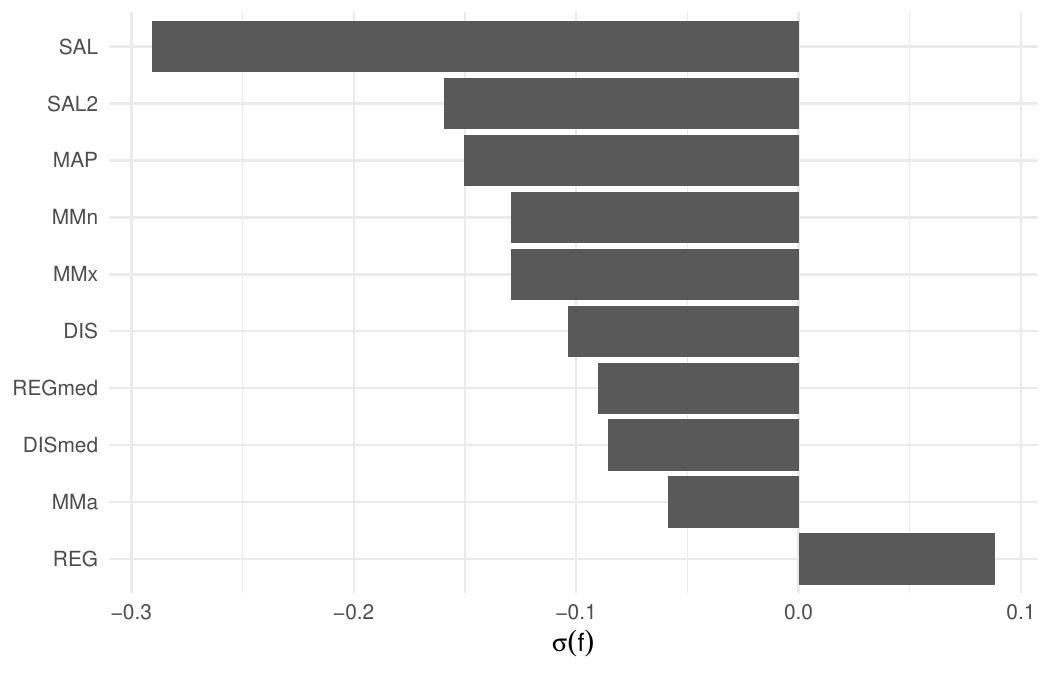}
    \caption{\label{fig:sigma} Concentration Impact $\sigma_N(f)$.}
    \end{subfigure}\hfill
   \caption{Rule diagnostics: ablation of each behavioral rule from the full 12-rule library. Error bars show split-variability intervals ($\pm 1.96 \times \mathrm{SE}$) from fold-to-fold variability across 50 CV splits.}
\end{figure}

We next quantify which procedures are necessary for predictive fit. Figure~\ref{fig:phi} reports the ablation index $\phi(f)$, the relative change in
test MSE when rule $f$ is removed from the full library and the gate is retrained on the reduced set. We ablate each of the 10 behavioral rules (excluding attention rules \textsc{A1}/\textsc{A2}, which serve a structural role ensuring decisiveness at every menu). A small subset of rules accounts for most of the ablation impact:
removing \textsc{SAL2} or \textsc{SAL} produces the largest deterioration, followed by \textsc{REGmed} and \textsc{REG}. \textsc{MAP}, \textsc{MMx}, and \textsc{MMn} have moderate impacts, while \textsc{DIS}/\textsc{DISmed} have small impacts. Notably, \textsc{SAL2} has a relatively high ablation impact despite carrying a lower average responsibility weight than \textsc{SAL}; this illustrates the distinction between $w_f$ (average usage) and $\phi(f)$ (marginal predictive contribution), and suggests that the second-most-salient comparison provides incremental content not replicated by other rules. \textsc{MMa} has a slightly \emph{negative} ablation index, indicating that removing this low-weight rule marginally improves fit - a standard regularization effect when a near-redundant rule is dropped.

Ablations identify rules that matter for accuracy, but they do not indicate whether a rule acts as a \emph{unifier} (concentrating effective responsibility on a small repertoire)
or as a \emph{substitute} that diversifies the fitted mixture.  To capture this, in Figure \ref{fig:sigma}, we report the relative change in the effective number of rules $\sigma_N(f)$. We find that $\sigma_N(f)$ is negative for all non-attention rules except \textsc{REG}, which has a small positive value. Removing \textsc{REG} slightly \emph{increases} the effective number of rules, reflecting its specialized role: \textsc{REG} is active on only about $30\%$ of menus, and its removal frees weight to spread more evenly across the broadly active rules. For all other rules, $\sigma_N(f)<0$: removing a rule concentrates the mixture. The magnitude is largest for the same procedures that have large ablation impacts, indicating that these rules both attract mass and shape the overall mixture structure.

Overall, salience-based and regret-type comparisons account for the largest ablation impacts, as evidenced by their large $\phi(f)$, while the responsibility weights $w_f$ additionally highlight \textsc{SAL} and \textsc{REG} as the most prominent structured rules. Most remaining rules act as substitutes that diversify the fitted mixture, as indicated by their negative $\sigma_N(f)$ values. Appendix Table~\ref{tab:phi_ci} reports split-variability intervals for the absolute ablation effect $\Delta\mathrm{MSE}$, confirming that the main rankings are stable. More broadly, fold-level stability of the decomposition objects is reported in Appendix Table~\ref{tab:decomposition_stability}: the rule rankings by $w_f$ are highly stable across independent training samples (mean Spearman $0.93$); $\phi(f)$ rankings show moderate stability (mean Spearman $0.44$), reflecting the greater sensitivity of ablation effects to sample composition. This decomposition provides a transparent characterization of which bounded-rationality mechanisms are empirically relevant in a large dataset of choice under risk - a level of structural interpretability not available from the more flexible predictors.

On average, about $5.8$ rules are both active and consistent with the majority choice direction per menu, and about $9.5$ rules are active; menus pinning down a unique decisive rule are rare ($0.16\%$). The gate therefore learns selection among genuinely competing procedures (see Appendix Figure~\ref{fig:overlap_hist} for the full distribution).

\subsection{Comparative statics in rule responsibility}

The figures in this section plot effective responsibility weights $\tilde q_f(A;\theta)$, which condition on rule activity (Equation~\eqref{eq:qtilde_def}). Changes in $\tilde q_f$ across menus reflect two channels: (i)~genuine shifts in the latent gate propensity $q_f(A;\theta)$ and (ii)~mechanical denominator effects from the composition of the active set $\mathcal F_A$, which varies across menus. We have verified that the qualitative patterns are preserved when using the latent gate outputs $q_f(A;\theta)$ directly (before conditioning on activity), confirming that the reported patterns reflect genuine variation in the learned gate rather than mechanical composition effects.

\subsubsection*{Complexity}

A central motivation for rule-gating is that a fixed decision rule is too coarse to account for behavior across the wide range of menus in
\texttt{choices13k}. The substantive hypothesis is therefore not merely that people are noisy, but that they adaptively select different procedures across menus. To connect this hypothesis to the emerging literature on complexity and procedure choice, we rely on an objective notion of menu complexity that is defined directly from the two lotteries in the menu and is available ex ante. This notion is designed to capture when a choice problem requires trading off state-by-state advantages and disadvantages across the two options-precisely the situations where different heuristics can plausibly disagree and where a gate has
something non-trivial to learn.

In binary lottery choice, a particularly natural notion is \emph{tradeoff complexity}. Formally, for a menu $A=\{L^1,L^2\}$, let $F_{1}$ and $F_{2}$ denote the
cumulative distribution functions of the two objective lotteries (after placing their supports on a common ordered grid). Define the CDF distance
\[
\Delta_{\mathrm{CDF}}(A)\;:=\;\int_{\mathbb R}\big|F_{1}(x)-F_{2}(x)\big|\,dx,
\]
and the associated \emph{excess dissimilarity}
\[
d(A)\;:=\;\Delta_{\mathrm{CDF}}(A)-\big|EV(L^1)-EV(L^2)\big|.
\]
The subtraction removes the mechanical effect of a pure value gap and isolates the extent to which the two lotteries ``cross'' in payoff space, i.e., the extent to
which the comparison requires genuine tradeoffs rather than a near-unanimous ordering. In the empirical analysis we work with the log-transformation
$\mathrm{TC}(A):=\log(1+d(A))$ and plot mean effective responsibility weights across complexity deciles.

\begin{figure}
    \centering
    \includegraphics[width=0.65\linewidth]{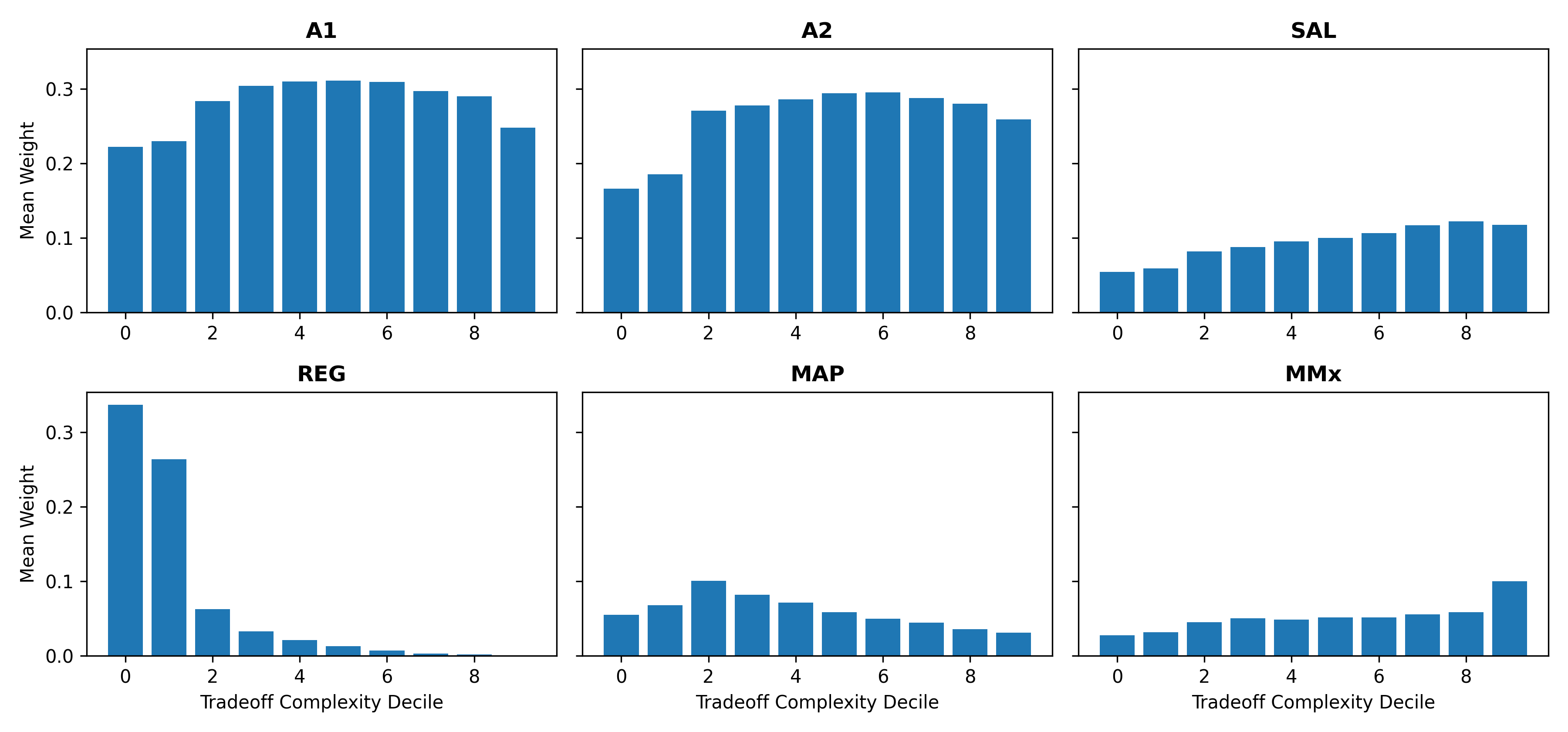}
    \caption{\label{fig:context_complexity_top6}Mean effective responsibility weights across deciles of tradeoff complexity $\mathrm{TC}(A)$ for the highest-weight rules.}
\end{figure}

Figure~\ref{fig:context_complexity_top6} shows clear and interpretable shifts in rule selection along tradeoff complexity. As $\mathrm{TC}(A)$ increases, the two attention rules (\textsc{A1}/\textsc{A2}) display a pronounced increase in mean effective weight, consistent with the view that more complex menus are associated with a
greater reliance on coarse inattention in the learned CA-RRM. Among the structured procedures, the most striking pattern is \textsc{REG} (regret): it carries its highest weight in the lowest-complexity deciles and drops steeply to near zero at high complexity, indicating that regret-type evaluation is concentrated on simple menus where state-by-state comparison is tractable. \textsc{SAL} (salience) rises with complexity, while \textsc{MAP} (modal payoff) displays an inverted-U pattern, peaking at moderate complexity and declining at both extremes. \textsc{MMx} (aspiration) shows a mild increase across deciles. Overall, increasing tradeoff complexity tilts the fitted mixture away from evaluative comparisons (regret, modal outcome) and toward attention-based procedures and salience.

\subsubsection*{Risk} 

A second natural driver is the risk profile of the menu. Different procedures emphasize different aspects of risk, such as dispersion,
downside exposure, and sensitivity to rare events. As a first-pass risk descriptor available directly from the menu, we consider the absolute asymmetry in dispersion across the
two lotteries,
\[
\mathrm{RiskAsym}(A)\;:=\;|\sigma(L^1)-\sigma(L^2)|,
\]
where $\sigma(L)$ denotes the standard deviation of lottery $L$. This measures how different the two lotteries are in terms of spread, regardless of which lottery is riskier. Figure~\ref{fig:context_risk_top6} plots mean responsibility weights across deciles of this measure.

\begin{figure}[t]
    \centering
    \includegraphics[width=0.65\linewidth]{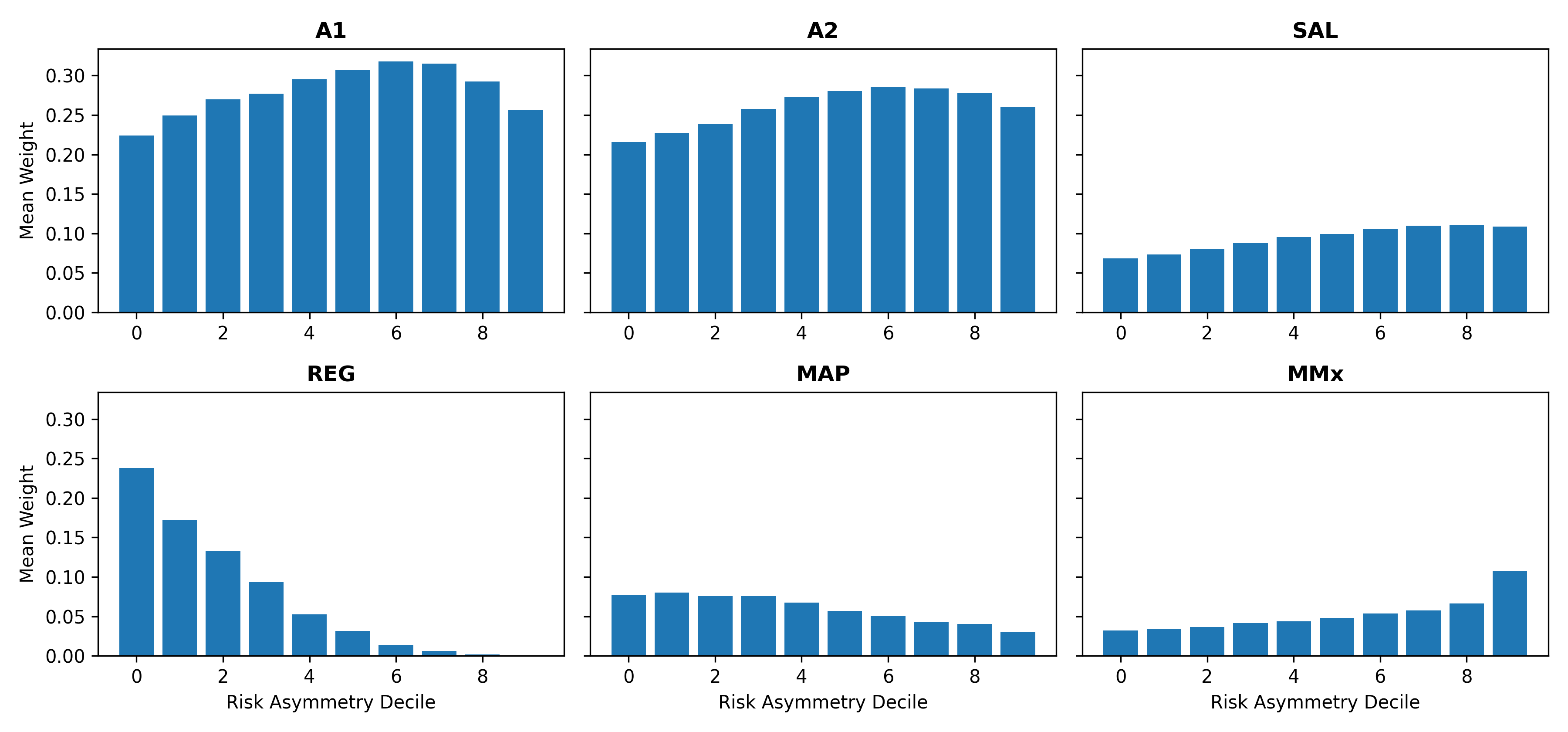}
    \caption{\label{fig:context_risk_top6}Mean effective responsibility weights across deciles of risk asymmetry $\mathrm{RiskAsym}(A)$ for the highest-weight rules.}
\end{figure}

Figure~\ref{fig:context_risk_top6} provides a complementary view based on dispersion asymmetry. The most striking pattern involves \textsc{REG} (regret): its weight is highest when the two lotteries have similar dispersion (low asymmetry) and declines steeply as the gap in spread widens. By contrast, the attention defaults \textsc{A1} and \textsc{A2} increase monotonically with risk asymmetry, suggesting that large differences in spread push the gate toward inattention. \textsc{SAL} and \textsc{MMx} show mild increases, while \textsc{MAP} declines gently. Thus, in menus where the two options differ markedly in spread, the gate shifts mass away from regret-type evaluation and toward attention defaults.

These diagnostics show that rule selection varies systematically with salient menu primitives: along complexity, the mixture tilts toward inattention and salience-based comparison and away from regret-type and modal-outcome evaluation; along dispersion asymmetry, regret-type comparison declines steeply while attention defaults rise. Menu-dependent selection across a small repertoire of procedures is economically meaningful.

\subsection{Completeness and Restrictiveness}

We supplement the absolute MSE comparisons with two diagnostics: \emph{completeness} (how much of a flexible benchmark's variation is captured by a structured model) and \emph{restrictiveness} (how tightly a model class constrains the mappings it can fit). Together, they help separate performance gains reflecting substantive structure from gains due to sheer flexibility.

\subsubsection*{ML Completeness Score}

Following \citet{peysakhovich2017}, \citet{fudenberg2022}, and \citet{fudenberg2022_jpe}, we define the ML completeness of model $g$ relative to Neural EU (baseline) and MOT (flexible benchmark):
\begin{equation}\label{eq:ml_completeness}
\mathrm{Comp}(g)
:=\frac{\mathrm{MSE}(g_{\mathrm{EU}})-\mathrm{MSE}(g)}{\mathrm{MSE}(g_{\mathrm{EU}})-\mathrm{MSE}(g_{\mathrm{MOT}})}.
\end{equation}
By construction, $\mathrm{Comp}(g_{\mathrm{EU}})=0$ and $\mathrm{Comp}(g_{\mathrm{MOT}})=1$; values in $(0,1)$ indicate partial recovery. Rule-gating recovers approximately $97\%$ of the baseline-to-MOT gap, compared to $\approx 27\%$ for the Value-based architecture.

\subsubsection*{Restrictiveness}

A complementary question is \emph{restrictiveness}: how tightly a model class constrains the set of menu-to-choice mappings it can fit \citep{selten91,fudenberg26}. Following \citet{fudenberg26}, we define permuted targets $\hat p^{\pi}(A):=\hat p(\pi(A))$ for a random permutation $\pi$ of the training menus, and measure permutation-fit restrictiveness as
\begin{equation}\label{eq:restrictiveness}
\mathrm{Restr}(g)
:=\mathbb E_{\mathcal T,\pi}\!\left[
\frac{\mathrm{MSE}^{\pi}_{\mathrm{train}}(g)}{\mathrm{MSE}^{\pi}_{\mathrm{train}}(\mathrm{const})}
\right],
\end{equation}
averaged over $K=50$ splits and $P=10$ permutations per split. Values near~$1$ indicate the model cannot fit permuted targets better than a constant predictor (highly restrictive); values well below~$1$ indicate capacity to absorb structureless patterns.

Figure~\ref{fig:frontier} plots models in the (completeness, restrictiveness) plane. Rule-gating occupies the upper-right region: it recovers nearly all of MOT's predictive advantage while remaining essentially fully restrictive ($\mathrm{Restr}\approx 1.0$). MOT is markedly less restrictive ($\mathrm{Restr}\approx 0.2$), indicating substantial capacity to absorb structureless variation.

\begin{figure}[t]
\centering
\includegraphics[width=1\linewidth]{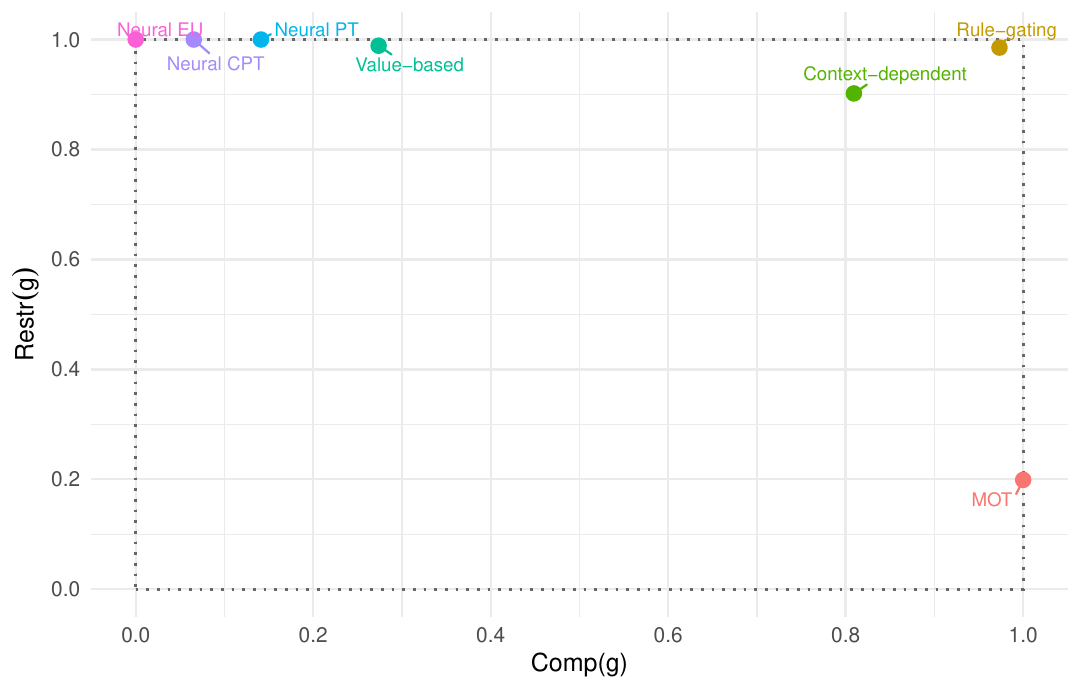}
\caption{\label{fig:frontier}Completeness--restrictiveness plot. The horizontal axis is ML completeness (Equation~\eqref{eq:ml_completeness});
the vertical axis is permutation-fit restrictiveness (Equation~\eqref{eq:restrictiveness}).}
\end{figure}

\subsection{Portability}\label{subsec:portability}

Finally, a central motivation for interpretable decision procedures is that they are intended to capture stable
regularities of behavior, not merely to interpolate within a single dataset. A natural final check is therefore
\emph{portability}: does a predictor trained on one large-scale choice environment retain predictive content when
exported - without re-tuning - to an independent dataset with different subjects and a different menu sample?

To assess portability, we evaluate models trained on \texttt{choices13k} \citep{peterson2021} on the independent CPC18 dataset \citep{erev_ert_plonsky_2017,plonsky2017_cpc18}.\footnote{The data can be downloaded at \url{https://cpc-18.com/data/}.}
We restrict to risk problems, yielding $686$ subjects, $|\Acal^{\mathrm{CPC}}|=180$ distinct menus, and $354{,}400$ total trials.\footnote{In addition to menu-level aggregation, we also report a trial-weighted evaluation.}

For each model, we freeze \texttt{choices13k}-estimated parameters and evaluate on CPC18 menus using
\begin{equation}\label{eq:mse_port}
\mathrm{MSE}_{\mathrm{CPC}}(g_{\rightarrow})
:=\frac{1}{|\Acal^{\mathrm{CPC}}|}\sum_{A\in\Acal^{\mathrm{CPC}}}\Big(g_{\rightarrow}(A)-\hat p_{\mathrm{CPC}}(A)\Big)^2.
\end{equation}
The outcome rescaling factor is computed from \texttt{choices13k} and applied identically to CPC18.\footnote{Both datasets use a common monetary unit, so the rescaling simply puts outcomes on a common numerical scale.} No hyperparameters are tuned on CPC18. Because CPC18 records individual-level trial data, we also evaluate at the trial level using the Brier score and Bernoulli log-loss.

\paragraph{Portability results.}
Table~\ref{tab:portability} reports results. MOT---the best in-sample performer---deteriorates sharply under transfer, suggesting that much of its advantage is dataset-specific. Rule-gating retains strong predictive content: CPC18 MSE $0.0148$, ahead of the context-dependent model ($0.0173$) and far ahead of the structured neural benchmarks ($\approx 0.032$). Rule-gating therefore offers the best portability among all models \emph{with} an interpretable behavioral decomposition.

\begin{table}[t]
\centering
\caption{Portability to CPC18: models trained on \texttt{choices13k} and evaluated on CPC18 without re-tuning. Trial-level Brier score and log-loss use individual choice observations ($n=354,400$); menu-level MSE aggregates to the menu level ($n=180$).}
\label{tab:portability}
\begin{tabular}{lrrr}
\toprule
Model & $\mathrm{Brier}_{\mathrm{trial}}$ & $\mathrm{LogLoss}_{\mathrm{trial}}$ & $\mathrm{MSE}_{\mathrm{menu}}$ \\
\midrule
Rule-gating & 0.1952 & 0.5705 & 0.0148 \\
Context-dependent & 0.1977 & 0.5763 & 0.0173 \\
Neural CPT & 0.2111 & 0.6111 & 0.0319 \\
Neural EU & 0.2111 & 0.6115 & 0.0318 \\
Neural PT & 0.2120 & 0.6135 & 0.0325 \\
Value-based & 0.2122 & 0.6146 & 0.0323 \\
MOT & 0.2713 & 0.7513 & 0.0943 \\
\bottomrule
\end{tabular}\\[3pt]
{\footnotesize\textit{Note:} $\mathrm{Brier}_{\mathrm{trial}}$ and $\mathrm{LogLoss}_{\mathrm{trial}}$ evaluate each model's predicted menu-level probability against individual binary choices. All models are trained once on the full Choices13k sample and evaluated on CPC18 without re-tuning.}
\end{table}

\subsection{Two-step econometric estimates}\label{subsec:two_step_results}

Table~\ref{tab:two_step} reports the two-step econometric estimates. The first stage groups menus into $47$ cells by $k$-means on $z(A)$ and recovers cellwise normalized weights via constrained least squares on the $H^{(k)}$ matrices; the second stage regresses log-weights on cell centroids to recover the affine gate coefficients (see Appendix~\ref{app:two_step_details} for details). The results should be read as an econometric cross-check motivated by the identification theory.

The two-step responsibility weights show moderate-to-good agreement with the MSE-fitted decomposition (Spearman $0.70$, Pearson $0.40$). The most notable discrepancy involves the attention rules: \textsc{A1} carries the largest MSE weight ($w_f=0.28$) but drops to fifth under the two-step procedure ($w_f=0.11$), while \textsc{SAL} is the most prominent two-step rule ($w_f=0.23$) but ranks third under MSE. This is expected: the attention rules are one-sided and contribute no within-cell variation in decisive-side patterns, so their two-step weights are identified only residually.

The overidentification $J$-test rejects the affine gate restriction at the $1\%$ level for most rules, indicating that $\log\omega_f(x)=\alpha_f+\beta_f^\top z$ is too parsimonious to capture the full pattern of cellwise weight variation. This is not unexpected given the test's 35 degrees of freedom. Importantly, Theorem~\ref{thm:fixed_x_id} is fully nonparametric: the cellwise identification of relative weights holds for any functional form. The $J$-test rejection suggests that a richer second-step specification could reduce or eliminate rejection while preserving the first-step identification logic.

\begin{table}[t]
\centering
\caption{\label{tab:two_step}Two-step econometric estimator vs.\ MSE-fitted responsibility weights $w_f$. The two-step estimator recovers cellwise rule weights by constrained least squares on $H^{(k)}$, then maps log-weights into affine gate parameters. Bootstrap standard errors (100 resamples) in parentheses. The $J$-statistic tests the affine gate restriction ($\chi^2_{35}$).}
\begin{tabular}{lcccc}
\toprule
Rule & $w_f$ (two-step) & $w_f$ (MSE) & Difference & $J$-stat \\
\midrule
\textsc{MMn} & 0.131 (0.062) & 0.051 & $+0.080$ & 293.2$^{***}$ \\
\textsc{MMa} & 0.000 (0.012) & 0.008 & $-0.008$ & 45.4 \\
\textsc{MMx} & 0.170 (0.063) & 0.052 & $+0.118$ & 252.0$^{***}$ \\
\textsc{MAP} & 0.023 (0.062) & 0.060 & $-0.037$ & 357.6$^{***}$ \\
\textsc{SAL} & 0.228 (0.070) & 0.094 & $+0.134$ & 197.0$^{***}$ \\
\textsc{SAL2} & 0.072 (0.034) & 0.039 & $+0.033$ & 210.1$^{***}$ \\
\textsc{REG} & 0.099 (0.030) & 0.074 & $+0.025$ & 152.2$^{***}$ \\
\textsc{REGmed} & 0.039 (0.051) & 0.032 & $+0.007$ & 212.3$^{***}$ \\
\textsc{DIS} & 0.004 (0.022) & 0.017 & $-0.013$ & 186.5$^{***}$ \\
\textsc{DISmed} & 0.001 (0.028) & 0.033 & $-0.032$ & 136.4$^{***}$ \\
\textsc{A1} & 0.109 (0.070) & 0.280 & $-0.171$ & 26.1 \\
\textsc{A2} & 0.123 (0.043) & 0.260 & $-0.137$ & --- \\
\midrule
\multicolumn{5}{l}{Pearson correlation: 0.401; Spearman rank correlation: 0.699} \\
\multicolumn{5}{l}{$^{***}p<0.01$; $H_0$: $\log\omega_f(x)$ is affine in $x$} \\
\bottomrule
\end{tabular}
\end{table}

\section{Further Robustness}\label{sec:robustness}

We subject the framework to a battery of robustness checks, reported in full in the Online Appendix. Cross-fitted library selection chooses rule subsets only on training folds: restricting to $k=7$ rules retains the bulk of full-library accuracy, and the selected rules match those receiving the highest responsibility weights (Appendix Table~\ref{tab:crossfitted}). Removing the two attention rules raises MSE to $0.0201$, still outperforming Neural EU/PT/CPT, confirming that the structured rules alone carry substantial predictive content; the comparative-statics patterns persist qualitatively (Appendix~\ref{app:attention_removal}). Varying the FSD activity discipline shows that over-restricting the activity set is more costly than removing it, so the dominance-based activation is best understood as an interpretive discipline rather than a predictive one (Appendix Table~\ref{tab:epsilon_fsd}). A placebo test scrambling rule indicators across menus produces substantially higher MSE, confirming that the indicators carry genuine, menu-specific content (Appendix Table~\ref{tab:placebo}). Feature-based baselines (fractional logit, GBT) on rule indicators underperform the gated architecture, while GBT on raw summary statistics outperforms all structured models but provides no behavioral decomposition (Appendix Table~\ref{tab:sklearn}). Finally, a raw-encoding gate using $\psi(A)\in\mathbb{R}^{40}$ instead of $z(A)$ achieves MSE $0.0139$ --- about $19\%$ worse --- confirming that the interpretable summary statistics capture decision-relevant variation without overfitting (Appendix Figure~\ref{fig:gate_comparison}).

\section{Conclusion}\label{sec:conclusion}

This paper studies stochastic choice under latent procedural heterogeneity. We develop a Random Rule Model (RRM) in which observed choice probabilities arise from switching among a small library of transparent deterministic decision procedures, with environment-dependent rule weights. The main contribution is an identification-and-estimation framework for recovering these latent procedural weights from menu-level choice frequencies. Identification has a two-step structure: within-feature variation in decisive-side patterns identifies relative rule weights, while cross-feature affine richness identifies the gate parameters. A constructive two-step estimator then supports inference for economically meaningful rule-level functionals.

Applied to risky choice, the framework uncovers substantial and systematic procedural heterogeneity. Salience-based comparisons, regret-type rules, and modal-outcome comparisons receive the largest responsibility weights among structured procedures, and rule responsibility shifts predictably with menu characteristics such as tradeoff complexity and dispersion asymmetry. These comparative statics provide an empirical characterization of risky choice in terms of changing procedure use rather than only reduced-form choice frequencies.

The predictive exercises support the validity of the representation. Rule-gating substantially outperforms structured neural benchmarks, remains highly restrictive under permutation-fit diagnostics, and retains predictive content on an independent dataset. At the same time, the evidence from varying the activity discipline shows that the activity filter is best understood as an attribution operator that restricts behavioral responsibility to rules that are actually decisive, rather than as the primary source of fit.

On the econometric side, the two-step overidentification test rejects the affine gate specification for most rules, indicating that the parsimonious linear-index parameterization does not fully capture how rule weights vary across menu characteristics. Since the first-stage identification is nonparametric, a natural extension is to replace the affine second step with a richer specification. More broadly, the paper suggests that stochastic choice data can be used to recover not only how often each option is chosen, but also how populations reallocate across decision procedures as environments change. This makes procedural heterogeneity a tractable econometric object, and opens the door to applications in consumer choice, dynamic choice, and strategic environments where behavior may be governed by shifting mixtures of transparent rules.

\bibliographystyle{aea}
\bibliography{bibliography}

\bigskip \bigskip
\appendix
\onehalfspace
\newpage

\clearpage

\begin{LARGE}
    \noindent\textbf{Online Appendix}
\end{LARGE}

\setcounter{page}{1}
\setcounter{table}{0}
\renewcommand{\thetable}{A.\arabic{table}}
\setcounter{figure}{0}
\renewcommand{\thefigure}{A.\arabic{figure}}
\setcounter{section}{0}
\setcounter{equation}{0}
\renewcommand*{\theequation}{A.\arabic{equation}}
\setcounter{theorem}{0}
\renewcommand{\thetheorem}{A.\arabic{theorem}}
\setcounter{assumption}{0}
\renewcommand{\theassumption}{A.\arabic{assumption}}
\setcounter{remark}{0}
\renewcommand{\theremark}{A.\arabic{remark}}

\section{General Procedural Mixture Framework}\label{app:general_framework}

This section presents the Conditional Activity Random Rule Model (CA-RRM) and its identification theory in a general discrete choice environment with finitely many alternatives per menu. The binary lottery application in the main text is a special case obtained by setting $|A|=2$ for all menus and instantiating the dominance preorder as first-order stochastic dominance.

\subsection{Choice environment}

Let $\mathcal{M}$ denote a finite collection of \emph{menus} (choice sets). Each menu $A\in\mathcal{M}$ is a finite set of alternatives with $|A|=n_A\ge 2$.
For each menu $A$ and alternative $i\in A$, let $p(i\mid A)\in(0,1)$ denote the population choice probability, with $\sum_{i\in A}p(i\mid A)=1$.

\begin{assumption}[Interior probabilities]\label{ass:interior}
For every $A\in\mathcal{M}$ and $i\in A$, $p(i\mid A)>0$.
\end{assumption}

This ensures that log-ratios of choice probabilities are well defined.

\subsection{Procedural library and activity}

Let $\mathcal{F}=\{f_1,\dots,f_F\}$ be a finite library of \emph{deterministic decision procedures} (rules). Each rule $f\in\mathcal{F}$ is equipped with:
\begin{enumerate}[label=(\roman*),leftmargin=*,itemsep=0.2em]
\item An \emph{activity set} $\mathcal{A}_f\subseteq\mathcal{M}$: the menus at which rule $f$ delivers a strict ranking of alternatives. Define the activity indicator $\Ind_f(A):=\Ind\{A\in\mathcal{A}_f\}$.
\item A \emph{recommendation mapping} $r_f:\mathcal{A}_f\to\bigcup_{A\in\mathcal{M}}A$ such that $r_f(A)\in A$ for each $A\in\mathcal{A}_f$: the unique alternative that rule $f$ recommends at menu $A$ when active.
\end{enumerate}

Activity is determined by an exogenous criterion (e.g., stochastic dominance of perceived lotteries in risky choice, present-value dominance in intertemporal choice, or welfare dominance in allocation problems). The key discipline is that a rule contributes to predicted choice probabilities \emph{only} at menus where it is active.

For each menu $A$ and alternative $i\in A$, define the \emph{recommending set}
\begin{equation}\label{eq:rec_set}
\mathcal{I}_i(A):=\{f\in\mathcal{F}:\Ind_f(A)=1,\;r_f(A)=i\},
\end{equation}
the set of active rules that recommend alternative $i$ at menu $A$. The set of all active rules at $A$ is $\mathcal{F}_A:=\bigcup_{i\in A}\mathcal{I}_i(A)$.

\subsection{The CA-RRM}

A \emph{Conditional Activity Random Rule Model} assigns to each menu $A$ a probability distribution $\{q_f(A)\}_{f\in\mathcal{F}}\in\Delta^{F-1}$ over rules.
Define the conditional-on-activity weight of rule $f$:
\begin{equation}\label{eq:qtilde_gen}
\tilde q_f(A):=\frac{q_f(A)\,\Ind_f(A)}{\sum_{g\in\mathcal{F}}q_g(A)\,\Ind_g(A)},
\end{equation}
whenever $\mathcal{F}_A\neq\emptyset$. The CA-RRM choice probability is
\begin{equation}\label{eq:carrm_gen}
P(i\mid A;\,q)\;=\;\sum_{f\in\mathcal{I}_i(A)}\tilde q_f(A)
\;=\;\frac{\sum_{f\in\mathcal{I}_i(A)}q_f(A)}{\sum_{g\in\mathcal{F}_A}q_g(A)}.
\end{equation}
Equation~\eqref{eq:carrm_gen} expresses predicted choice as a mixture over deterministic rule recommendations, conditional on activity.

\subsection{Softmax gating parameterization}

The CA-RRM becomes a parametric model when $q(A)$ is generated by a differentiable gate. Let $\psi(A)\in\mathbb{R}^d$ be an observable feature vector for menu $A$. The \emph{softmax gate} sets
\begin{equation}\label{eq:gate_gen}
q_f(A;\theta):=\frac{\exp(u_f(A;\theta))}{\sum_{g\in\mathcal{F}}\exp(u_g(A;\theta))},
\end{equation}
where $u_f(A;\theta):=\alpha_f+\beta_f^\top\psi(A)$ is an affine score, and $\theta:=\{(\alpha_f,\beta_f)\}_{f\in\mathcal{F}}$ collects all gate parameters.

Substituting \eqref{eq:gate_gen} into \eqref{eq:carrm_gen} and canceling the common denominator of the softmax, the CA-RRM choice probability simplifies to
\begin{equation}\label{eq:P_softmax_gen}
P(i\mid A;\theta)\;=\;\frac{\sum_{f\in\mathcal{I}_i(A)}\exp(u_f(A;\theta))}{\sum_{g\in\mathcal{F}_A}\exp(u_g(A;\theta))}.
\end{equation}
This is a \emph{grouped softmax} over active rules, with groups defined by the alternative each rule recommends.

\subsection{Normalization}

The softmax is invariant to adding a common affine function to all scores: if $u_f\mapsto u_f + a + c^\top\psi$ for all $f$, the probabilities in \eqref{eq:P_softmax_gen} are unchanged.
We therefore impose a location normalization by fixing one baseline rule $f_0\in\mathcal{F}$ and setting
\begin{equation}\label{eq:norm_gen}
(\alpha_{f_0},\beta_{f_0})=(0,\mathbf{0}).
\end{equation}
Let $\vartheta\in\mathbb{R}^K$ denote the vector of free parameters, with $K=(F-1)(1+d)$.

\subsection{Identification map and sufficient-variation condition}

\paragraph{Multi-sided menus.}
Say that a menu $A$ is \emph{multi-sided} if at least two distinct alternatives are recommended by some active rule:
\[
\big|\{i\in A:\mathcal{I}_i(A)\neq\emptyset\}\big|\;\ge\;2.
\]
Let $\mathcal{M}_{\mathrm{ms}}\subseteq\mathcal{M}$ denote the set of multi-sided menus (the analogue of two-sided menus in the binary case). On menus that are not multi-sided, all active rules agree, so the CA-RRM prediction is $1$ for the unanimously recommended alternative regardless of $\theta$; such menus carry no identifying information.

\paragraph{Log-ratio representation.}
For each multi-sided menu $A$, restrict attention to alternatives $i\in A$ with $\mathcal{I}_i(A)\neq\emptyset$ (i.e., alternatives recommended by at least one active rule).\footnote{Under correct specification, Assumption~\ref{ass:interior} implies $P(i\mid A)>0$ for all $i\in A$, which requires $\mathcal{I}_i(A)\neq\emptyset$ since \eqref{eq:P_softmax_gen} assigns zero probability to uncovered alternatives. Hence the restriction to covered alternatives is implicit in the joint assumption of correct specification and interior probabilities.} Fix a reference alternative $i_0(A)$ among these. By Assumption~\ref{ass:interior} and \eqref{eq:P_softmax_gen}, the log-ratios
\begin{equation}\label{eq:logratio_gen}
\delta_i(A;\vartheta)\;:=\;\log\frac{P(i\mid A;\vartheta)}{P(i_0(A)\mid A;\vartheta)}
\;=\;\log\!\sum_{f\in\mathcal{I}_i(A)}\exp(u_f(A;\vartheta))\;-\;\log\!\sum_{f\in\mathcal{I}_{i_0}(A)}\exp(u_f(A;\vartheta))
\end{equation}
are well defined for each covered $i\in A\setminus\{i_0(A)\}$.

\paragraph{Identification map.}
Stack the log-ratios across all multi-sided menus and non-reference alternatives into a single vector:
\begin{equation}\label{eq:H_gen}
H(\vartheta)\;:=\;\Big(\delta_i(A;\vartheta)\Big)_{A\in\mathcal{M}_{\mathrm{ms}},\;i\in A\setminus\{i_0(A)\}}\;\in\;\mathbb{R}^D,
\end{equation}
where $D=\sum_{A\in\mathcal{M}_{\mathrm{ms}}}(n_A-1)$. Observing the population choice probabilities $\{p(i\mid A)\}$ on multi-sided menus is equivalent to observing $H(\vartheta)$, because the log-ratio transformation is a diffeomorphism from the interior of the simplex to $\mathbb{R}^{n_A-1}$.

\subsection{Global identification theorem}

\paragraph{Decisive-side indicators and linear restriction.}
For each rule $f$ and each alternative $i\in A$, define the decisive-side indicator
\[
\kappa_f^i(A):=\Ind\{f\in\mathcal{I}_i(A)\},
\]
where $\mathcal{I}_i(A)$ is the recommending set from \eqref{eq:rec_set}.
For each multi-sided menu $A$, fix a reference alternative $i_0(A)$ with $\mathcal{I}_{i_0}(A)\neq\emptyset$ and write the odds ratio
\[
r_i(A):=\frac{P(i\mid A)}{P(i_0(A)\mid A)},\qquad i\in A\setminus\{i_0(A)\}.
\]
Define the positive rule weights $\omega_f(x):=\exp(\alpha_f+\beta_f^\top x)$ for $x=\psi(A)$.
From \eqref{eq:P_softmax_gen}, the model-implied odds ratio satisfies
\[
r_i(A)=\frac{\sum_{f\in\mathcal{I}_i(A)}\omega_f(x)}{\sum_{f\in\mathcal{I}_{i_0}(A)}\omega_f(x)}.
\]
Cross-multiplying and collecting terms yields, for each non-reference alternative $i\in A\setminus\{i_0(A)\}$,
\begin{equation}\label{eq:linear_restriction_gen}
\sum_{f\in\mathcal{F}}\Big(\kappa_f^i(A)-r_i(A)\,\kappa_f^{i_0(A)}(A)\Big)\,\omega_f(x)=0,
\end{equation}
i.e.\ $h_i(A)^\top\omega(x)=0$ where $h_{i,f}(A):=\kappa_f^i(A)-r_i(A)\,\kappa_f^{i_0(A)}(A)$.

\begin{theorem}[Global identification of the general CA-RRM]\label{thm:global_id_gen}
Consider the CA-RRM with affine indices $u_f(A)=\alpha_f+\beta_f^\top\psi(A)$, $f\in\mathcal{F}$.
Assume $P(i\mid A)\in(0,1)$ for all $i\in A$ and all menus in the support (Assumption~\ref{ass:interior}), and impose the normalization \eqref{eq:norm_gen}. Note that under correct specification, this interior-probability condition implies that every alternative $i\in A$ is recommended by at least one active rule (i.e., $\mathcal{I}_i(A)\neq\emptyset$), since \eqref{eq:P_softmax_gen} assigns zero probability to uncovered alternatives.
Let $d=\dim(\psi)$. Suppose there exist $(d+1)$ feature values $x^{(0)},\ldots,x^{(d)}$ such that:

\begin{enumerate}[label=(G\arabic*)]
\item \textbf{Within-feature variation in decisive-side patterns:} For each $k\in\{0,\ldots,d\}$, stack the row vectors $h_i(A)^\top$ across all multi-sided menus $A$ with $\psi(A)=x^{(k)}$ and all non-reference alternatives $i\in A\setminus\{i_0(A)\}$ into a matrix
\[
H^{(k)}:=\begin{pmatrix}
h_{i_1}(A_1)^\top\\ \vdots\\ h_{i_M}(A_M)^\top
\end{pmatrix}.
\]
Then $\mathrm{rank}(H^{(k)})=|\mathcal{F}|-1$.
\item \textbf{Affine richness of feature values:} The $(d+1)\times(d+1)$ matrix
$X:=\begin{pmatrix}1 & (x^{(0)})^\top\\ \vdots & \vdots\\ 1 & (x^{(d)})^\top\end{pmatrix}$
has full rank.
\end{enumerate}

Then the full parameter vector $\{(\alpha_f,\beta_f)\}_{f\in\mathcal{F}}$ is \emph{globally identified} from the population objects $\{P(i\mid A),\,\kappa^i(A),\,\psi(A)\}$, up to the normalization.
\end{theorem}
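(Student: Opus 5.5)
The plan is to reduce Theorem~\ref{thm:global_id_gen} to the same two-layer argument used for the binary case (Theorem~\ref{thm:fixed_x_id} and Corollary~\ref{cor:global_id_from_fixedx}). The only new ingredient is that each multi-sided menu now contributes up to $n_A-1$ linear restrictions rather than one.

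\textbf{Step 1: the data determine the rows of $H^{(k)}$.} Fix $k$ and write $x=x^{(k)}$. For every multi-sided menu $A$ with $\psi(A)=x$ and every covered non-reference alternative $i$, the population odds $r_i(A)$ are observed. The indicators $\kappa^i_f(A)$ are deterministic and known. Hence every row $h_i(A)$, and so $H^{(k)}$ itself, is a function of observables. Under correct specification, \eqref{eq:P_softmax_gen} and the cross-multiplication leading to \eqref{eq:linear_restriction_gen} give $H^{(k)}\omega(x)=0$, where $\omega_f(x)=\exp(\alpha_f+\beta_f^\top x)>0$. Uncovered alternatives do not arise, by the remark in the statement, and the reference alternative is covered, so no row is degenerate.

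\textbf{Step 2: cellwise identification.} Let $\Omega(x)=\{\omega\in\mathbb R^{|\mathcal F|}_{++}: H^{(k)}\omega=0\}$. By (G1) and rank--nullity, $\ker H^{(k)}$ is one-dimensional. Since it contains the strictly positive true $\omega(x)$, $\Omega(x)$ is exactly the open positive ray through $\omega(x)$. This is the ``(ii)$\Rightarrow$(i)'' direction of Theorem~\ref{thm:fixed_x_id}, and the argument does not use the binary structure: it uses only the linear system and positivity.

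The normalization \eqref{eq:norm_gen} gives $\omega_{f_0}(x)=\exp(0)=1$ for every $x$. This selects a unique element of the ray. Therefore $\omega(x^{(k)})$ is point-identified for each $k$.

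One point to check: any alternative parameter $\theta'$ generating the same $P(\cdot\mid\cdot)$ produces the same $H^{(k)}$. Its weight vector $\omega'(x)$ lies in the same kernel, and it satisfies the same normalization, so $\omega'(x^{(k)})=\omega(x^{(k)})$.

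\textbf{Step 3: from weights to gate parameters.} Taking logs of the identified weights gives, for each rule $f$,
\[
\log\omega_f(x^{(k)})=\alpha_f+\beta_f^\top x^{(k)},\qquad k=0,\dots,d,
\]
that is, $X\gamma_f=b_f$ with $\gamma_f=(\alpha_f,\beta_f)$ and $b_f$ identified. By (G2), $X$ is invertible, so $\gamma_f=X^{-1}b_f$ is uniquely determined. For $f=f_0$ this returns $(0,\mathbf 0)$, which is consistent with the normalization. Global identification therefore holds: two parameter vectors satisfying \eqref{eq:norm_gen} and generating the same choice probabilities must coincide.

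\textbf{Expected obstacle.} I expect the main care to be needed in Step 1. One has to verify that the equivalence with observing $H(\vartheta)$ holds, and that the rows built from multi-alternative menus with a menu-specific reference $i_0(A)$ span the same space regardless of which reference is chosen. This matters for the rank condition (G1) to be well defined.

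For the reference-invariance point, the plan is to note the following. Changing the reference from $i_0$ to $j$ replaces each row $h_i$ by a combination of rows obtained by dividing by $r_j$ and subtracting. Concretely, $\kappa^i-r_{i}^{(j)}\kappa^j = (h_i - (r_i/r_j) h_j)$ up to positive scaling, and $h_{i_0}^{(j)}$ is proportional to $-h_j$. So the row span is invariant, and the rank condition does not depend on the choice of reference.
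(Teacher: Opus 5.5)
Your proposal is correct and follows essentially the same route as the paper's proof. Both cross-multiply the grouped-softmax odds into the linear restrictions $H^{(k)}\omega(x^{(k)})=0$, use (G1) plus positivity to obtain a single positive ray, pin the scale with $\omega_{f_0}\equiv 1$, and invert $X$ under (G2) rule by rule. Your explicit check for an observationally equivalent $\theta'$ and your reference-invariance remark are harmless additions the paper does not spell out; in the latter, the new row is exactly $h_i-(r_i/r_j)h_j$, so no rescaling is needed.
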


\begin{proof}
The proof parallels that of Theorem~\ref{thm:fixed_x_id} and Corollary~\ref{cor:global_id_from_fixedx} in the main text, extended to the multinomial setting.

\textbf{Step 1 (from choice probabilities to linear equations).}
Fix a multi-sided menu $A$ and write $x=\psi(A)$. From \eqref{eq:P_softmax_gen}, the choice probability ratio for any non-reference alternative $i$ is
\[
r_i(A)=\frac{P(i\mid A)}{P(i_0(A)\mid A)}=\frac{\sum_{f}\kappa_f^i(A)\,\omega_f(x)}{\sum_{f}\kappa_f^{i_0(A)}(A)\,\omega_f(x)}.
\]
Cross-multiplying yields the linear restriction \eqref{eq:linear_restriction_gen}: $h_i(A)^\top\omega(x)=0$, where the coefficients $h_{i,f}(A)=\kappa_f^i(A)-r_i(A)\kappa_f^{i_0(A)}(A)$ are determined by the decisive-side indicators and the observed odds ratio.

\textbf{Step 2 (identifying $\omega(x)$ at a fixed feature value up to scale).}
Fix $k\in\{0,\ldots,d\}$. All menus $A$ with $\psi(A)=x^{(k)}$ share the same unknown weight vector $\omega(x^{(k)})$, while the decisive-side indicators may vary across menus even at the same feature value. Stacking all rows $h_i(A)^\top$ from these menus into $H^{(k)}$ gives $H^{(k)}\omega(x^{(k)})=0$. By (G1), $\mathrm{rank}(H^{(k)})=|\mathcal{F}|-1$, so the nullspace is one-dimensional. Since all entries of $\omega(x^{(k)})$ are positive, the weight vector is identified up to a positive scalar.

\textbf{Step 3 (pinning down scale using the normalization).}
Under \eqref{eq:norm_gen}, $\omega_{f_0}(x)=\exp(0+\mathbf{0}^\top x)=1$ for all $x$. Imposing $\omega_{f_0}(x^{(k)})=1$ uniquely pins the scalar, so $\omega(x^{(k)})$ is uniquely identified for each $k$.

\textbf{Step 4 (recovering $(\alpha_f,\beta_f)$ from identified weights).}
Fix any rule $f$. From $\log\omega_f(x^{(k)})=\alpha_f+\beta_f^\top x^{(k)}$ for $k=0,\ldots,d$, we obtain the linear system $(\log\omega_f(x^{(0)}),\ldots,\log\omega_f(x^{(d)}))^\top = X(\alpha_f,\beta_f^\top)^\top$. By (G2), $X$ is invertible, so $(\alpha_f,\beta_f)$ is uniquely determined. Since $f$ was arbitrary, global identification holds.
\end{proof}

\begin{remark}
The binary application in the main text (Corollary~\ref{cor:global_id_from_fixedx}) is the special case with $|A|=2$, where each menu produces a single row $h(A)^\top$ and the odds ratio reduces to $p(A)/(1-p(A))$.
\end{remark}

\subsection{Local identification theorem}

\begin{remark}
Theorem~\ref{thm:global_id_gen} above establishes global identification under conditions (G1)--(G2), which require that multiple menus share the same feature value. The following local result provides an alternative characterization based on the Jacobian of the identification map. It is particularly relevant when the gate operates on a high-dimensional feature map (such as the raw menu encoding $\psi(A)\in\mathbb{R}^{4M}$) for which the replication condition in (G1) may not hold. See Appendix~\ref{app:jacobian_details} for further discussion.
\end{remark}

\begin{theorem}[Local identification of the general CA-RRM]\label{thm:local_id_gen}
Impose the normalization \eqref{eq:norm_gen} and let $\vartheta^\star\in\mathbb{R}^K$ be any parameter vector.
Define the Jacobian
\[
J(\vartheta^\star)\;:=\;\frac{\partial H(\vartheta)}{\partial\vartheta^\top}\bigg|_{\vartheta=\vartheta^\star}\;\in\;\mathbb{R}^{D\times K}.
\]
\begin{enumerate}[label=(\alph*),leftmargin=*]
\item If $J(\vartheta^\star)$ has full column rank $K$, then $\vartheta^\star$ is locally identified from $\{p(i\mid A)\}_{A\in\mathcal{M}_{\mathrm{ms}}}$: there exists a neighborhood $U\ni\vartheta^\star$ such that $H(\vartheta)=H(\vartheta^\star)$ and $\vartheta\in U$ imply $\vartheta=\vartheta^\star$.
\item If $\mathrm{rank}(J(\vartheta^\star))=r<K$, then the null space $\mathcal{N}(J(\vartheta^\star))$ has dimension $K-r$, and $H$ is locally constant to first order along these $K-r$ linearly independent directions. If, additionally, the rank of $J(\vartheta)$ equals $r$ on a neighborhood of $\vartheta^\star$, then by the constant-rank theorem the level set $\{\vartheta: H(\vartheta)=H(\vartheta^\star)\}$ is locally a smooth embedded submanifold of dimension $K-r$ with tangent space $\mathcal{N}(J(\vartheta^\star))$.
\end{enumerate}
\end{theorem}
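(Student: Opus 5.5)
Both parts follow from the fact that the identification map $H:\mathbb{R}^K\to\mathbb{R}^D$ is real-analytic, hence $C^\infty$. Each coordinate $\delta_i(A;\vartheta)$ in \eqref{eq:logratio_gen} is a difference of two log-sum-exp functions of affine functions of $\vartheta$. The index sets $\mathcal I_i(A)$ and $\mathcal I_{i_0}(A)$ are nonempty on multi-sided menus for covered alternatives, so these functions are well defined and smooth on all of $\mathbb{R}^K$. No boundary issues arise because the normalization \eqref{eq:norm_gen} has already removed $(\alpha_{f_0},\beta_{f_0})$, leaving $\vartheta$ free in $\mathbb{R}^K$.

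\textbf{Part (a).} This is the standard injective-immersion argument. Write $J^\star:=J(\vartheta^\star)$. Since $J^\star$ has full column rank $K$, there is $c>0$ with $\|J^\star v\|\ge c\|v\|$ for all $v$. Because $J$ is continuous, choose a convex ball $U$ around $\vartheta^\star$ on which $\|J(\vartheta)-J^\star\|\le c/2$.

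Now take any $\vartheta\in U$ and apply the mean-value theorem in integral form:
\[
H(\vartheta)-H(\vartheta^\star)=\int_0^1 J(\vartheta^\star+t(\vartheta-\vartheta^\star))\,dt\,(\vartheta-\vartheta^\star)=:\bar J(\vartheta-\vartheta^\star).
\]
The averaged matrix satisfies $\|\bar J-J^\star\|\le c/2$. Hence
\[
\|H(\vartheta)-H(\vartheta^\star)\|\ge (c/2)\|\vartheta-\vartheta^\star\|,
\]
so $H(\vartheta)=H(\vartheta^\star)$ forces $\vartheta=\vartheta^\star$. An alternative route is to pick $K$ linearly independent rows of $J^\star$ and apply the inverse function theorem to the corresponding $K$-coordinate subvector of $H$. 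This is a local diffeomorphism, hence locally injective, and injectivity of the subvector implies injectivity of $H$.

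\textbf{Part (b).} The first claim is rank–nullity: $\dim\mathcal N(J^\star)=K-r$. For any $v\in\mathcal N(J^\star)$, Taylor expansion gives $H(\vartheta^\star+tv)-H(\vartheta^\star)=tJ^\star v+O(t^2)=O(t^2)$, so $H$ is constant to first order along these directions.

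The second claim invokes the constant-rank theorem. If $\operatorname{rank}J(\vartheta)=r$ on a neighborhood $U$, there are local diffeomorphisms $\phi$ of $\mathbb{R}^K$ near $\vartheta^\star$ and $\chi$ of $\mathbb{R}^D$ near $H(\vartheta^\star)$ such that $\chi\circ H\circ\phi^{-1}(y_1,\dots,y_K)=(y_1,\dots,y_r,0,\dots,0)$. In these coordinates the level set through $\vartheta^\star$ is $\{y_1=\dots=y_r=\text{const}\}$, a $(K-r)$-dimensional affine slice. Its preimage under $\phi$ is therefore an embedded submanifold of dimension $K-r$.

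For the tangent space, differentiate $H(\gamma(t))=H(\vartheta^\star)$ along any curve $\gamma$ in the level set. This gives $J^\star\gamma'(0)=0$, so the tangent space is contained in $\mathcal N(J^\star)$. Both spaces have dimension $K-r$, so they are equal.

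The only step needing care is making the smoothness of $H$ explicit and uniform on a neighborhood. In particular, the reference alternatives $i_0(A)$ and the sets $\mathcal I_i(A)$ must be fixed and independent of $\vartheta$, which holds because they depend only on the rules and the menus. Everything else is standard differential calculus.
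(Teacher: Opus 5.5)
Your proof is correct. Part (b) follows the paper's argument (rank--nullity, then the constant-rank theorem), and you also verify the tangent-space claim, $T_{\vartheta^\star}\subseteq\mathcal N(J(\vartheta^\star))$ with equal dimensions, which the paper simply reads off the constant-rank theorem.

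For part (a), the paper takes what you list as your alternative route. It selects $K$ linearly independent rows of $J(\vartheta^\star)$, applies the inverse function theorem to that $K$-dimensional subvector of $H$, and lifts local injectivity to $H$.

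Your main argument instead uses the integral mean-value identity on a convex ball where $\|J(\vartheta)-J(\vartheta^\star)\|\le c/2$, with $c=\sigma_{\min}(J(\vartheta^\star))$. This avoids the inverse function theorem and any choice of rows. It also gives more than the statement requires. You get a quantitative bound $\|H(\vartheta)-H(\vartheta^\star)\|\ge (c/2)\|\vartheta-\vartheta^\star\|$. Rerunning the same estimate from any base point in the ball gives injectivity of $H$ on the whole ball. That bound is what you would need to turn local identification into consistency or rate statements for an estimator built on $H$. The paper's route is shorter given the inverse function theorem, but it delivers only qualitative local injectivity.

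One minor point: the normalization plays no role in the smoothness of $H$, since the domain is the open set $\mathbb R^K$ either way. That remark in your first paragraph is unnecessary but harmless.
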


\begin{proof}
Each score $u_f(A;\vartheta)$ is affine in $\vartheta$, and the log-sum-exp function is $C^\infty$ on $\mathbb{R}^n$ for any $n\ge 1$. Therefore $\delta_i(A;\vartheta)$ in \eqref{eq:logratio_gen}, being a difference of two log-sum-exp functions composed with affine maps, is $C^\infty$ in $\vartheta$. The identification map $H$ is thus a $C^\infty$ map from $\mathbb{R}^K$ to $\mathbb{R}^D$.

For part~(a): if $J(\vartheta^\star)\in\mathbb{R}^{D\times K}$ has rank $K$, there exist $K$ components of $H$ whose $K\times K$ sub-Jacobian is nonsingular (select any $K$ linearly independent rows of $J(\vartheta^\star)$). By the inverse function theorem applied to this square submap, the selected components of $H$ are locally injective at $\vartheta^\star$. Since $H(\vartheta)=H(\vartheta^\star)$ implies equality of the selected subvector, the full map $H$ is locally injective.

For part~(b): the first-order statement follows directly from $\mathrm{rank}(J(\vartheta^\star))=r$: by the rank-nullity theorem, $\mathcal{N}(J(\vartheta^\star))$ has dimension $K-r$, and for any $v\in\mathcal{N}(J(\vartheta^\star))$, the directional derivative $DH(\vartheta^\star)\cdot v=0$. If, additionally, $\mathrm{rank}(J(\vartheta))=r$ on a neighborhood of $\vartheta^\star$, the constant-rank theorem yields the stronger geometric conclusion that the level set is locally a smooth $(K-r)$-dimensional submanifold. Without this local constancy assumption, the first-order characterization still holds but the level set may have dimension less than $K-r$.
\end{proof}

\subsection{Closed-form Jacobian}\label{app:jacobian}

We derive the Jacobian entries in closed form. For each active rule $f\in\mathcal{F}_A$, define the \emph{active-set softmax share} (i.e., the share of rule $f$ in the softmax over all active rules, not conditional on alternative)
\begin{equation}\label{eq:softmax_share_gen}
s_f(A;\vartheta)\;:=\;\frac{\exp(u_f(A;\vartheta))}{\sum_{g\in\mathcal{F}_A}\exp(u_g(A;\vartheta))},
\end{equation}
and for each alternative $i\in A$, write $S_i(A;\vartheta):=\sum_{f\in\mathcal{I}_i(A)}s_f(A;\vartheta)$ for the total share allocated to alternative~$i$. Note that $P(i\mid A;\vartheta)=S_i(A;\vartheta)$.

\begin{proposition}[Jacobian entries]\label{prop:jacobian_gen}
For any multi-sided menu $A\in\mathcal{M}_{\mathrm{ms}}$, any non-reference alternative $i\in A\setminus\{i_0(A)\}$, and any free rule $f\neq f_0$:
\begin{align}
\frac{\partial\delta_i(A)}{\partial\alpha_f}
&\;=\;
\frac{s_f(A)\,\Ind\{f\in\mathcal{I}_i(A)\}}{S_i(A)}
\;-\;
\frac{s_f(A)\,\Ind\{f\in\mathcal{I}_{i_0}(A)\}}{S_{i_0}(A)},
\label{eq:jac_alpha_gen}\\[4pt]
\frac{\partial\delta_i(A)}{\partial\beta_f}
&\;=\;
\left(
\frac{s_f(A)\,\Ind\{f\in\mathcal{I}_i(A)\}}{S_i(A)}
\;-\;
\frac{s_f(A)\,\Ind\{f\in\mathcal{I}_{i_0}(A)\}}{S_{i_0}(A)}
\right)\psi(A)^\top.
\label{eq:jac_beta_gen}
\end{align}
If $f\notin\mathcal{F}_A$ (i.e., $f$ is inactive at $A$), then $s_f(A)=0$ and both derivatives vanish.
\end{proposition}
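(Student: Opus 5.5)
The plan is a direct computation. It rests on the log-sum-exp representation \eqref{eq:logratio_gen} and the fact that each score $u_g(A;\vartheta)=\alpha_g+\beta_g^\top\psi(A)$ is affine in the free parameters. First, I will rewrite $\delta_i(A;\vartheta)=\Lambda_i(A;\vartheta)-\Lambda_{i_0}(A;\vartheta)$, where
\[
\Lambda_j(A;\vartheta):=\log\sum_{g\in\mathcal I_j(A)}\exp(u_g(A;\vartheta)).
\]
Since $A$ is multi-sided and only covered alternatives are used, both sums are over nonempty sets, so $\Lambda_j$ is well defined and smooth (as already noted in the proof of Theorem~\ref{thm:local_id_gen}).

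Next I differentiate $\Lambda_j$ with respect to $\alpha_f$ for a free rule $f\neq f_0$. Only the term $g=f$ depends on $\alpha_f$, and $\partial u_f/\partial\alpha_f=1$. This gives
\[
\frac{\partial\Lambda_j}{\partial\alpha_f}=\frac{\exp(u_f)\,\Ind\{f\in\mathcal I_j(A)\}}{\sum_{g\in\mathcal I_j(A)}\exp(u_g)}.
\]
The key step is to convert this within-group share into the active-set shares of \eqref{eq:softmax_share_gen}. I multiply numerator and denominator by $1/\sum_{g\in\mathcal F_A}\exp(u_g)$. The numerator becomes $s_f(A)\Ind\{f\in\mathcal I_j(A)\}$, and the denominator becomes $\sum_{g\in\mathcal I_j(A)}s_g(A)=S_j(A)$. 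This step uses that $\mathcal I_j(A)\subseteq\mathcal F_A$. Subtracting the $j=i_0$ expression from the $j=i$ expression yields \eqref{eq:jac_alpha_gen}.

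For $\beta_f$, the same argument applies, with the chain-rule factor $\partial u_f/\partial\beta_f=\psi(A)^\top$ replacing $1$. This gives \eqref{eq:jac_beta_gen}.

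Finally, I handle the inactive case. If $f\notin\mathcal F_A$, then $f$ belongs to neither $\mathcal I_i(A)$ nor $\mathcal I_{i_0}(A)$, so both indicators vanish and both derivatives are zero. Setting $s_f(A)=0$ for inactive rules, consistent with the fact that $s_f$ is defined as a share over $\mathcal F_A$, makes the formula hold trivially in this case as well.

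I expect no real obstacle. The only delicate points are bookkeeping. The first is the correct domain of $s_f$: it is defined over active rules, so the inactive-rule convention must be stated explicitly. The second is noting that the $\alpha_{f_0}$ and $\beta_{f_0}$ coordinates are excluded by the normalization \eqref{eq:norm_gen}, so the formulas apply only to free coordinates. The third is the observation that $S_i(A),S_{i_0}(A)>0$ because both alternatives are covered, which justifies the divisions.
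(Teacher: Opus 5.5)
Your proposal is correct and matches the paper's proof essentially step for step: you write $\delta_i$ as a difference of two log-sum-exps, differentiate each term, rescale by the active-set normalizer to turn the within-group share into $s_f(A)/S_j(A)$, and pick up the chain-rule factor $\psi(A)^\top$ for the slopes. Your added remarks on the $s_f(A)=0$ convention for inactive rules and on $S_i(A),S_{i_0}(A)>0$ for covered alternatives make explicit bookkeeping that the paper leaves implicit.
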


\begin{proof}
From \eqref{eq:logratio_gen}, $\delta_i(A;\vartheta)=\log N_i(A) - \log N_{i_0}(A)$, where $N_j(A):=\sum_{g\in\mathcal{I}_j(A)}\exp(u_g(A))$.
For $f\in\mathcal{I}_i(A)$:
\[
\frac{\partial\log N_i(A)}{\partial\alpha_f}
=\frac{\exp(u_f(A))}{N_i(A)}
=\frac{s_f(A)\cdot Z(A)}{N_i(A)}
=\frac{s_f(A)}{S_i(A)},
\]
where $Z(A):=\sum_{g\in\mathcal{F}_A}\exp(u_g(A))$ and $S_i(A)=N_i(A)/Z(A)$.
For $f\notin\mathcal{I}_i(A)$, the derivative is zero.
Combining the contributions from the two log terms and noting $\partial u_f/\partial\beta_f=\psi(A)^\top$ gives \eqref{eq:jac_alpha_gen}--\eqref{eq:jac_beta_gen}.
\end{proof}

\paragraph{Block structure.}
The Jacobian $J(\vartheta^\star)\in\mathbb{R}^{D\times K}$ has $(F-1)$ column blocks, one per free rule. Each block has $1+d$ columns (intercept $\alpha_f$ plus slopes $\beta_f$). The derivative for rule $f$ at menu $A$ is nonzero only if $f$ is active at $A$ and belongs to $\mathcal{I}_i(A)\cup\mathcal{I}_{i_0}(A)$, i.e., $f$ recommends either alternative $i$ or the reference $i_0(A)$. Variation in which rules fall on which ``side'' across menus-i.e., rule switching-produces sign changes in the Jacobian entries, which tends to increase rank.

\begin{remark}
The binary application in the main text (Corollary~\ref{cor:global_id_from_fixedx}) is the special case of Theorem~\ref{thm:global_id_gen} obtained by setting $|A|=2$, where log-ratios reduce to log-odds and the Jacobian entries simplify to $\partial\delta(A)/\partial\alpha_f = s_{L,f}(A)-s_{R,f}(A)$. Three sources of variation contribute to rank: side-switching (rules recommend different alternatives at different menus), feature variation (richer $\psi(A)$ generates more independent rows), and compositional variation (changes in which rules are active). The framework applies to any environment with a finite library of deterministic procedures and an exogenous activity criterion.
\end{remark}

\section{Proof of Theorem~\ref{thm:fixed_x_id} (fixed-feature identification)}\label{app:proof_fixed_x}

\begin{proof}
Fix $x\in\mathbb R^d$ and let $\mathcal A_2(x):=\{A\in\mathcal A_2:\psi(A)=x\}$.
For each menu $A\in\mathcal A_2(x)$, define the odds ratio $r(A):=p(A)/(1-p(A))$ and the
column vector $h(A)\in\mathbb R^{|\mathcal F|}$ with entries
$h_f(A)=\kappa_f^L(A)-r(A)\kappa_f^R(A)$.
Denote the true weights by $\omega_f(x)=\exp(\alpha_f+\beta_f^\top x)>0$.

\textbf{Step 1 (from choice probabilities to linear restrictions).}
Under the CA-RRM the model-implied left-choice probability at menu $A$ is
\[
p(A)=\frac{\sum_{f}\kappa_f^L(A)\,\omega_f(x)}
{\sum_{f}\kappa_f^L(A)\,\omega_f(x)+\sum_{f}\kappa_f^R(A)\,\omega_f(x)}.
\]
Cross-multiplying $p(A)\sum_f\kappa_f^R(A)\omega_f(x)=(1-p(A))\sum_f\kappa_f^L(A)\omega_f(x)$ and dividing by $(1-p(A))$ gives
\[
\sum_{f\in\mathcal F}\bigl(\kappa_f^L(A)-r(A)\,\kappa_f^R(A)\bigr)\,\omega_f(x)
\;=\;h(A)^\top\omega(x)\;=\;0.
\]
Since this holds for every $A\in\mathcal A_2(x)$, any vector in the cellwise identified
set $\Omega(x)$ lies in the null space of the matrix $H(x)$ formed by stacking the rows
$\{h(A)^\top:A\in\mathcal A_2(x)\}$.

\textbf{Step 2 (sufficiency: $\rank H(x)=|\mathcal F|-1$ $\Rightarrow$ single ray).}
Suppose $\rank H(x)=|\mathcal F|-1$. Then $\ker H(x)$ is one-dimensional.
Let $v$ be a nonzero vector spanning this kernel.
Because the true weights $\omega(x)\in\mathbb R_{++}^{|\mathcal F|}$ lie in
$\Omega(x)\subseteq\ker H(x)$, we have $\omega(x)=c\,v$ for some scalar $c\neq 0$.
Since $\omega(x)$ is strictly positive, $v$ must be a scalar multiple of a strictly
positive vector. Hence every $\tilde\omega\in\Omega(x)$ satisfies
$\tilde\omega=c'\,v$ for some $c'>0$, i.e.\
$\Omega(x)=\{c'\,\omega^\star(x):c'>0\}$ is a single positive ray.

\textbf{Step 3 (necessity: single ray $\Rightarrow$ $\rank H(x)=|\mathcal F|-1$).}
We prove the contrapositive. Suppose $\rank H(x)<|\mathcal F|-1$.
Then $\dim\ker H(x)\ge 2$.
Pick two linearly independent vectors $v_1,v_2\in\ker H(x)$ with $v_1=\omega(x)\in\mathbb R_{++}^{|\mathcal F|}$.
For all sufficiently small $\varepsilon>0$, the vector $v_1+\varepsilon\,v_2$ is also
strictly positive (since $v_1$ has strictly positive entries).
Moreover $v_1+\varepsilon v_2\in\ker H(x)$ by linearity.
Therefore $v_1+\varepsilon v_2\in\Omega(x)$, and it is not proportional to $v_1$
(since $v_1$ and $v_2$ are linearly independent), so $\Omega(x)$ is not a single ray.

\textbf{Step 4 (point identification under normalization).}
Normalizing $\omega_{f_0}(x)=1$ restricts $\Omega(x)$ to the hyperplane $\{\omega:\omega_{f_0}=1\}$.
A single positive ray intersects this hyperplane in a unique point, so the normalization
converts ray identification into point identification.
\end{proof}

\section{Proof of Corollary~\ref{cor:global_id_from_fixedx} (detailed version)}\label{app:proof_global_id}

\begin{proof}
Fix any menu $A$ and write $x=\psi(A)$. Define the positive rule weights
\[
\omega_f(x):=\exp(\alpha_f+\beta_f^\top x),\qquad f\in\mathcal F,
\]
and let $\kappa_f^L(A)=\mathbf 1\{f\in \mathcal I_L(A)\}$ and $\kappa_f^R(A)=\mathbf 1\{f\in \mathcal I_R(A)\}$ denote whether rule $f$
is decisive for the left or right option at $A$.

\textbf{Step 1 (from choice probabilities to linear equations).}
Under the CA-RRM, the model-implied probability of choosing left at menu $A$ is
\begin{equation}\label{eq:prob_rewrite}
p(A)
=
\frac{\sum_{f\in\mathcal F}\kappa_f^L(A)\,\omega_f(x)}
{\sum_{f\in\mathcal F}\kappa_f^L(A)\,\omega_f(x)+\sum_{f\in\mathcal F}\kappa_f^R(A)\,\omega_f(x)}.
\end{equation}
Because $p(A)\in(0,1)$, the odds ratio
\[
r(A):=\frac{p(A)}{1-p(A)}\in(0,\infty)
\]
is well-defined. Using \eqref{eq:prob_rewrite}, the odds can be written as
\[
r(A)=
\frac{\sum_{f}\kappa_f^L(A)\,\omega_f(x)}
{\sum_{f}\kappa_f^R(A)\,\omega_f(x)}.
\]
Cross-multiplying yields
\[
\sum_{f}\kappa_f^L(A)\,\omega_f(x) \;=\; r(A)\sum_{f}\kappa_f^R(A)\,\omega_f(x),
\]
and bringing all terms to the left-hand side gives the linear restriction
\begin{equation}\label{eq:linear_restriction_proof}
\sum_{f\in\mathcal F}\Big(\kappa_f^L(A)-r(A)\kappa_f^R(A)\Big)\,\omega_f(x)=0.
\end{equation}
Define the row vector $h(A)\in\mathbb R^{|\mathcal F|}$ by
\[
h_f(A):=\kappa_f^L(A)-r(A)\kappa_f^R(A),
\]
so \eqref{eq:linear_restriction_proof} can be written compactly as
\begin{equation}\label{eq:h_w}
h(A)^\top \omega(x)=0,
\end{equation}
where $\omega(x)=(\omega_f(x))_{f\in\mathcal F}$.

\textbf{Step 2 (identifying $\omega(x)$ at a fixed feature value up to scale).}
Fix $k\in\{0,\ldots,d\}$ and consider the menus $A_{k1},\ldots,A_{kM_k}$ that satisfy $\psi(A_{km})=x^{(k)}$.
Crucially, the gate depends on $A$ only through $x=\psi(A)$, so all these menus share the \emph{same} unknown weight vector
$\omega(x^{(k)})$, while the decisive-side indicators $(\kappa^L,\kappa^R)$ (and hence $h(A)$) may vary across menus even when
$x$ is fixed. Applying \eqref{eq:h_w} to each menu and stacking the resulting equations gives
\[
\begin{pmatrix}
h(A_{k1})^\top\\ \vdots\\ h(A_{kM_k})^\top
\end{pmatrix}
\omega(x^{(k)})=0,
\qquad\text{i.e.}\qquad
H^{(k)}\omega(x^{(k)})=0.
\]
By (G1), $\mathrm{rank}(H^{(k)})=|\mathcal F|-1$, so the nullspace of $H^{(k)}$ is one-dimensional: there exists a nonzero
vector $\bar \omega^{(k)}$ such that every solution to $H^{(k)}\omega=0$ is of the form $\omega=c\,\bar \omega^{(k)}$ for some scalar $c$.
The true weight vector $\omega(x^{(k)})$ is strictly positive and lies in this nullspace, so $\bar\omega^{(k)}$ may be oriented to be strictly positive (set $\bar\omega^{(k)}:=\omega(x^{(k)})$). Every admissible solution---i.e., every $\omega\in\mathbb R_{++}^{|\mathcal F|}$ with $H^{(k)}\omega=0$---is then exactly a positive multiple $c\,\bar\omega^{(k)}$ with $c>0$. Thus, the menus with $\psi(A)=x^{(k)}$
identify $\omega(x^{(k)})$ \emph{up to a positive multiplicative scale factor}.

\textbf{Step 3 (pinning down scale using the normalization).}
Under the normalization $(\alpha_{f_0},\beta_{f_0})=(0,0)$ we have, for all $x$,
\[
\omega_{f_0}(x)=\exp(0+0^\top x)=1.
\]
In particular, $\omega_{f_0}(x^{(k)})=1$. Since every solution to $H^{(k)}\omega=0$ equals $c\,\bar \omega^{(k)}$, imposing the single
restriction $\omega_{f_0}=1$ pins down $c$ uniquely:
\[
1=\omega_{f_0}(x^{(k)})=c\,\bar \omega^{(k)}_{f_0}
\quad\Rightarrow\quad
c=\frac{1}{\bar \omega^{(k)}_{f_0}}.
\]
Hence there is a unique solution to $H^{(k)}\omega=0$ that satisfies $\omega_{f_0}=1$, and therefore $\omega(x^{(k)})$ is uniquely
identified for each $k$.

\textbf{Step 4 (recovering $(\alpha_f,\beta_f)$ from identified weights across $k$).}
Fix any rule $f\in\mathcal F$. From the definition of $\omega_f(\cdot)$,
\[
\log \omega_f(x^{(k)})=\alpha_f+\beta_f^\top x^{(k)},\qquad k=0,\ldots,d.
\]
Stacking these $(d+1)$ equalities yields the linear system
\[
\begin{pmatrix}
\log \omega_f(x^{(0)})\\ \vdots\\ \log \omega_f(x^{(d)})
\end{pmatrix}
=
\begin{pmatrix}
1 & (x^{(0)})^\top\\
\vdots & \vdots\\
1 & (x^{(d)})^\top
\end{pmatrix}
\begin{pmatrix}
\alpha_f\\ \beta_f
\end{pmatrix}
=
X\begin{pmatrix}\alpha_f\\ \beta_f\end{pmatrix}.
\]
Assumption (G2) states that the $(d+1)$ feature vectors $x^{(0)},\ldots,x^{(d)}$ are \emph{affinely independent},
equivalently that the augmented regressor matrix $X$ has full rank; hence $X$ is invertible. Therefore
$(\alpha_f,\beta_f)$ is uniquely determined. Since $f$ was arbitrary, this holds for all $f\in\mathcal F$, establishing
global identification (up to the imposed normalization).
\end{proof}

\section{Two-Step Estimator: Construction and Regularity Conditions}\label{app:two_step_details}

This section provides the detailed construction of the two-step estimator summarized in Section~\ref{subsec:estimation_inference}, including the sampling scheme, the estimator formulas, the regularity conditions, population functionals, and the overidentification test.

\paragraph{Sampling scheme.}
Let $\mathcal M=\{A_1,\ldots,A_M\}$ be a finite collection of menus.
For each menu $A_m$, we observe $n_m$ independent binary choices
$Y_{mi}\in\{0,1\}$, $i=1,\ldots,n_m$, with
$\Pr(Y_{mi}=1\mid A_m)=p(A_m)\in(0,1)$.
For distinct menus, the menu-specific samples are mutually independent.
Write
$\hat p_m:=n_m^{-1}\sum_{i=1}^{n_m}Y_{mi}$ and
$\hat r_m:=\hat p_m/(1-\hat p_m)$,
where, for numerical stability only, $\hat p_m$ may be trimmed to
$[\varepsilon_n,1-\varepsilon_n]$ with $\varepsilon_n\downarrow 0$ slowly enough
that trimming is asymptotically inactive.
We study the asymptotic regime
$N:=\min_{1\le m\le M} n_m \to\infty$,
$n_m/N\to \pi_m\in(0,\infty)$ for each~$m$.

\paragraph{Identifying feature values.}
Let $x^{(1)},\ldots,x^{(K)}$ be distinct feature values with $K\ge d+1$, and define
the associated menu cells
$\mathcal M_k:=\{m:\psi(A_m)=x^{(k)}\}$.
For each $k$, let $M_k:=|\mathcal M_k|$ and assume $M_k\ge |\mathcal F|-1$.
For $m\in\mathcal M_k$, define
$h_f(A_m):=\kappa_f^L(A_m)-r(A_m)\kappa_f^R(A_m)$ and
$\hat h_f(A_m):=\kappa_f^L(A_m)-\hat r_m\,\kappa_f^R(A_m)$.
Let $H^{(k)}$ and $\hat H^{(k)}$ denote the $M_k\times |\mathcal F|$ matrices stacking
the row vectors $h(A_m)^\top$ and $\hat h(A_m)^\top$ over $m\in\mathcal M_k$.
Fix the baseline rule $f_0$ and partition
$H^{(k)}=[h_0^{(k)}\ \ H_-^{(k)}]$,
$\hat H^{(k)}=[\hat h_0^{(k)}\ \ \hat H_-^{(k)}]$.

\paragraph{First-stage estimator: cellwise rule weights.}
Under Theorem~\ref{thm:fixed_x_id}, the normalized weight vector at $x^{(k)}$
is uniquely determined by $\omega_{f_0}(x^{(k)})=1$ and
$H^{(k)}\omega(x^{(k)})=0$.
Writing $\omega(x^{(k)})=(1,\eta^{(k)\top})^\top$,
the population vector $\eta^{(k)}$ solves
$H_-^{(k)}\eta^{(k)}=-h_0^{(k)}$.
The sample estimator is
\begin{equation}\label{eq:eta_hat}
\hat\eta^{(k)}:=
-\Big(\hat H_-^{(k)\top}\hat H_-^{(k)}\Big)^{-1}\hat H_-^{(k)\top}\hat h_0^{(k)},
\qquad
\hat\omega(x^{(k)}):=(1,\hat\eta^{(k)\top})^\top.
\end{equation}

\paragraph{Second-stage estimator: affine gate parameters.}
For each non-baseline rule $f\neq f_0$, define
$\gamma_f:=(\alpha_f,\beta_f^\top)^\top\in\mathbb R^{d+1}$ and
$z_k:=(1,x^{(k)\top})^\top$.
The affine gate implies
$\log \omega_f(x^{(k)}) = z_k^\top \gamma_f$, $k=1,\ldots,K$.
Define $\hat y_f^{(k)}:=\log \hat\omega_f(x^{(k)})$ and let
$X_K$ stack the vectors $z_k^\top$ over $k$.
The minimum-distance estimator is
\begin{equation}\label{eq:gamma_hat}
\hat\gamma_f:=
(X_K^\top W_f X_K)^{-1}X_K^\top W_f \hat y_f.
\end{equation}

\paragraph{Population functionals.}
The two-step estimator identifies economically meaningful functionals of the
gate, including average latent gate mass $\bar q_f(\theta):=M^{-1}\sum_{m=1}^M q_f(A_m;\theta)$,
average effective active mass $\bar a_f(\theta):=M^{-1}\sum_{m=1}^M q_f(A_m;\theta)\,A_f(A_m)$,
and average conditional-on-activity responsibility
$w_f(\theta):=M^{-1}\sum_{m=1}^M
q_f(A_m;\theta)\,A_f(A_m)/\sum_{g\in\mathcal F} q_g(A_m;\theta)\,A_g(A_m)$.

\begin{assumption}\label{ass:estimation}
The following hold:
\begin{enumerate}[label=(E\arabic*)]
\item \textbf{Interior probabilities.}
There exists $\underline p\in(0,1/2)$ such that
$p(A_m)\in[\underline p,1-\underline p]$ for all $m$.
\item \textbf{Cellwise identification.}
For each $k=1,\ldots,K$, the matrix $H^{(k)}$ has rank $|\mathcal F|-1$.
\item \textbf{Affine richness.}
The matrix $X_K$ has full column rank $d+1$.
\item \textbf{Interior weights.}
For every $k$ and every $f\neq f_0$,
$\omega_f(x^{(k)})\ge \underline\omega$ for some $\underline\omega>0$.
\item \textbf{Stable sampling shares.}
$n_m/N\to\pi_m\in(0,\infty)$ for each menu $m$.
\end{enumerate}
\end{assumption}

\paragraph{Efficient weighting and feasible inference.}
A natural choice is feasible GLS, using a consistent estimator $\hat V_{y,f}$ of the
first-stage covariance matrix and setting $W_f=\hat V_{y,f}^{-1}$.
In practice, a cluster bootstrap at the menu level directly yields
standard errors for both the affine coefficients and the decomposition functionals.

\paragraph{Overidentification test of the affine gate.}
When $K>d+1$, the affine gate restriction is overidentified. For each rule
$f\neq f_0$, define the second-stage residual vector
$\hat u_f:=\hat y_f-X_K\hat\gamma_f$.
A natural minimum-distance specification test is
$J_f:=N\,\hat u_f^\top \hat V_{y,f}^{-1}\hat u_f$,
which under efficient weighting and the null that
$\log\omega_f(x)$ is affine in $x$ has the asymptotic reference
distribution $J_f\dto\chi^2_{K-d-1}$.

\section{Proof of Theorem~\ref{thm:two_step}}\label{app:proof_two_step}

\begin{proof}
Let $J:=|\mathcal F|$. Write $p_m:=p(A_m)$, $r_m:=p_m/(1-p_m)$, $\hat p_m:=n_m^{-1}\sum_{i=1}^{n_m}Y_{mi}$, $\hat r_m:=\hat p_m/(1-\hat p_m)$, and $\omega^{(k)}:=(1,\eta^{(k)\top})^\top$ where $\eta^{(k)}\in\mathbb R^{J-1}$ collects the non-baseline weights at cell $k$.

\noindent\textbf{Part (i).}
For each menu $m$, the multivariate CLT and the delta method applied to $g(u)=u/(1-u)$ give
\begin{equation}\label{eq:odds_clt}
\sqrt N(\hat r-r)\dto \mathcal N(0,\Sigma_r),
\qquad
\Sigma_r=\diag\!\Big(\frac{p_m}{\pi_m(1-p_m)^3}\Big).
\end{equation}
Since $H^{(k)}$ depends on the data only through the odds ratios and does so affinely, $\sqrt N\,\mathrm{vec}(\hat H^{(k)}-H^{(k)})\dto \mathcal N(0,\Sigma_{H}^{(k)})$. Under~(E2), $H_-^{(k)}$ has full column rank $J-1$, so the least-squares map $\phi_k(H):=-(H_-^\top H_-)^{-1}H_-^\top h_0$ is smooth, and the delta method yields $\sqrt N(\hat\eta^{(k)}-\eta^{(k)})\dto\mathcal N(0,V_\eta^{(k)})$. Embedding via $\hat\omega^{(k)}=(1,\hat\eta^{(k)\top})^\top$ gives $\sqrt N(\hat\omega^{(k)}-\omega^{(k)})\dto\mathcal N(0,V_\omega^{(k)})$, where $V_\omega^{(k)}$ has a zero row/column at $f_0$ (reflecting the normalization). Under~(E4), the log transformation is smooth and a further delta-method step gives $\sqrt N(\hat y_f^{(k)}-y_f^{(k)})\dto\mathcal N(0,V_{y,f}^{(k)})$ with $V_{y,f}^{(k)}=\omega_f(x^{(k)})^{-2}\,e_f^\top V_\omega^{(k)} e_f$. Cross-cell independence (disjoint menu partitions) yields $\sqrt N(\hat y_f-y_f)\dto \mathcal N(0,V_{y,f})$ with $V_{y,f}=\operatorname{diag}(V_{y,f}^{(1)},\ldots,V_{y,f}^{(K)})$.

\noindent\textbf{Part (ii).}
By construction $y_f=X_K\gamma_f$. Let $A_f:=(X_K^\top W_f X_K)^{-1}X_K^\top W_f$ (well-defined by (E3)). Then $\hat\gamma_f-\gamma_f=A_f(\hat y_f-y_f)$, so $\sqrt N(\hat\gamma_f-\gamma_f)\dto\mathcal N(0,A_f V_{y,f} A_f^\top)$.

\noindent\textbf{Part (iii).}
Standard delta method applied to $\tau(\hat\theta)$.
\end{proof}

\section{Perceived-Lottery Specifications}\label{app:perceived_lottery_specs}

This section provides the formal perceived-lottery map $\pi_f$ for each rule in the library (Table~\ref{tab:rules_summary}), along with the notation needed to state them.

\paragraph{Notation.}
Let $\text{Supp}(L)$ denote the finite support of lottery $L$, $x_{\min}(L):=\min\text{Supp}(L)$, $x_{\max}(L):=\max\text{Supp}(L)$.
Write $\delta(z)$ for the degenerate lottery yielding $z$ with probability one. Let $z_{\mathrm{mode}}(L)$ denote the (highest) modal payoff of $L$ (ties broken upward).
Given outcomes $x,y\in\mathbb R$, the normalized contrast is
$\mathrm{contr}(x,y):=|x-y|/(|x|+|y|+1)$.
The constant $+1$ is a fixed regularization choice (not a tunable parameter). Because the contrast is not homogeneous of degree zero, the effective rule recommendations depend on the outcome scale of the data. The rules remain parameter-free in the sense that no parameter is estimated from data, but they are not invariant to affine transformations of outcomes. This scale dependence is a limitation of the current formulation.

\paragraph{First-Order Stochastic Dominance.}
\begin{definition}[First-Order Stochastic Dominance]\label{def:FSD}
Let $L^1,L^2\in X$. Let $\mathrm{supp}(L^1)\cup \mathrm{supp}(L^2)=\{z_1<\dots<z_M\}$ be the ordered union of their supports. For $\ell\in\{1,2\}$, let $\pi^\ell_k := \Pr_{L^\ell}(z_k)$ with the convention $\pi^\ell_k=0$ when $z_k\notin\mathrm{supp}(L^\ell)$, and define the right-tail cumulative sums $S^\ell_i := \sum_{k=i}^{M}\pi^\ell_k$, $i=1,\dots,M$. We say that $L^1\ge_{\mathrm{FSD}} L^2$ if and only if $S^1_i\ge S^2_i$ for all $i=1,\dots,M$. We denote the strict part by $>_{\mathrm{FSD}}$.
\end{definition}

\paragraph{Auxiliary definitions.}
For the salience rules, define the set of extreme pairings $\mathcal P(A):=\{(a,b): a\in\{x_{\min}(L^1),x_{\max}(L^1)\},\ b\in\{x_{\min}(L^2),x_{\max}(L^2)\}\}$, which contains at most four outcome pairs, with salience scores $c_k:=\mathrm{contr}(a_k,b_k)$.
For the regret rules, define a canonical state space as the set of all outcome pairs $(x_i,y_j)$ with $x_i\in\mathrm{Supp}(L^1)$, $y_j\in\mathrm{Supp}(L^2)$, each occurring with probability $p_i^{L^1}\cdot p_j^{L^2}$, with regrets $d^{L}_{ij}:=[y_j-x_i]_+$ and $d^{R}_{ij}:=[x_i-y_j]_+$.
For the disappointment rules, define the set of downside payoffs $Z^{-}(L):=\{z\in\text{Supp}(L): z<z_{\mathrm{mode}}(L)\}$ and the disappointment contrasts $s(z):=\mathrm{contr}(z_{\mathrm{mode}}(L),z)$ for $z\in Z^{-}(L)$.

\paragraph{Extremum rules.}
$\pi_{\textsc{MMn}}(L^i;L^j):=\delta(x_{\min}(L^i))$; \ $\pi_{\textsc{MMx}}(L^i;L^j):=\delta(x_{\max}(L^i))$; \ $\pi_{\textsc{MMa}}(L^i;L^j):=\delta\!\big(\tfrac12(x_{\min}(L^i)+x_{\max}(L^i))\big)$; \ $\pi_{\textsc{MAP}}(L^i;L^j):=\delta(z_{\mathrm{mode}}(L^i))$.

\paragraph{Salience rules.}
For \textsc{SAL}, let $k^\star\in\arg\max_k c_k$ (ties broken deterministically): $\pi_{\textsc{SAL}}(L^1;L^2):=\delta(a_{k^\star})$, $\pi_{\textsc{SAL}}(L^2;L^1):=\delta(b_{k^\star})$.
For \textsc{SAL2}, let $c_{(1)}\ge c_{(2)}$ be the two largest salience scores. When $c_{(1)}=c_{(2)}$ the rule is inactive; otherwise let $k^{(2)}$ attain $c_{(2)}$ and set $\pi_{\textsc{SAL2}}(L^1;L^2):=\delta(a_{k^{(2)}})$, $\pi_{\textsc{SAL2}}(L^2;L^1):=\delta(b_{k^{(2)}})$.

\paragraph{Regret rules.}
Following \citet{loomes_sugden1982}:
$\pi_{\textsc{REG}}(L^1;L^2):=\sum_{i,j} p_i^{L^1}p_j^{L^2}\,\delta(-d^{L}_{ij})$,
$\pi_{\textsc{REG}}(L^2;L^1):=\sum_{i,j} p_i^{L^1}p_j^{L^2}\,\delta(-d^{R}_{ij})$;
active when one negated-regret distribution strictly FSD-dominates the other.
For \textsc{REGmed}: $\pi_{\textsc{REGmed}}(L^1;L^2):=\delta(-R^{L}_{\mathrm{med}})$,
$\pi_{\textsc{REGmed}}(L^2;L^1):=\delta(-R^{R}_{\mathrm{med}})$,
where $R^{L}_{\mathrm{med}}:=\mathrm{wmed}(\{d^{L}_{ij}\},\{p_i^{L^1} p_j^{L^2}\})$ and $R^{R}_{\mathrm{med}}$ is defined symmetrically.

\paragraph{Disappointment rules.}
$D(L):=\max_{z\in Z^{-}(L)} s(z)$ (with $D(L)=0$ if $Z^{-}(L)=\emptyset$);
$\pi_{\textsc{DIS}}(L^i;L^j):=\delta(-D(L^i))$.
For \textsc{DISmed}: if $|Z^{-}(L)|\ge 2$, let $s_{(2)}$ be the second-largest contrast and set $D_{\mathrm{med}}(L):=s_{(2)}$; otherwise $D_{\mathrm{med}}(L):=D(L)$; then $\pi_{\textsc{DISmed}}(L^i;L^j):=\delta(-D_{\mathrm{med}}(L^i))$.

\paragraph{Attention rules.}
Fix $M>0$ so that $-M$ lies strictly below all payoffs in the dataset.
$\pi_{\textsc{A1}}(L^1;L^2)=L^1$, $\pi_{\textsc{A1}}(L^2;L^1)=\delta(-M)$ (\textsc{A1} always recommends $L^1$).
$\pi_{\textsc{A2}}(L^1;L^2)=\delta(-M)$, $\pi_{\textsc{A2}}(L^2;L^1)=L^2$ (\textsc{A2} always recommends $L^2$).
Both are always active; they ensure that at least one rule is decisive at every menu, stabilizing the CA-RRM normalization.

\section{Library Coverage}\label{app:overlap}

\begin{figure}[h!]
    \centering
    \includegraphics[width=0.70\linewidth]{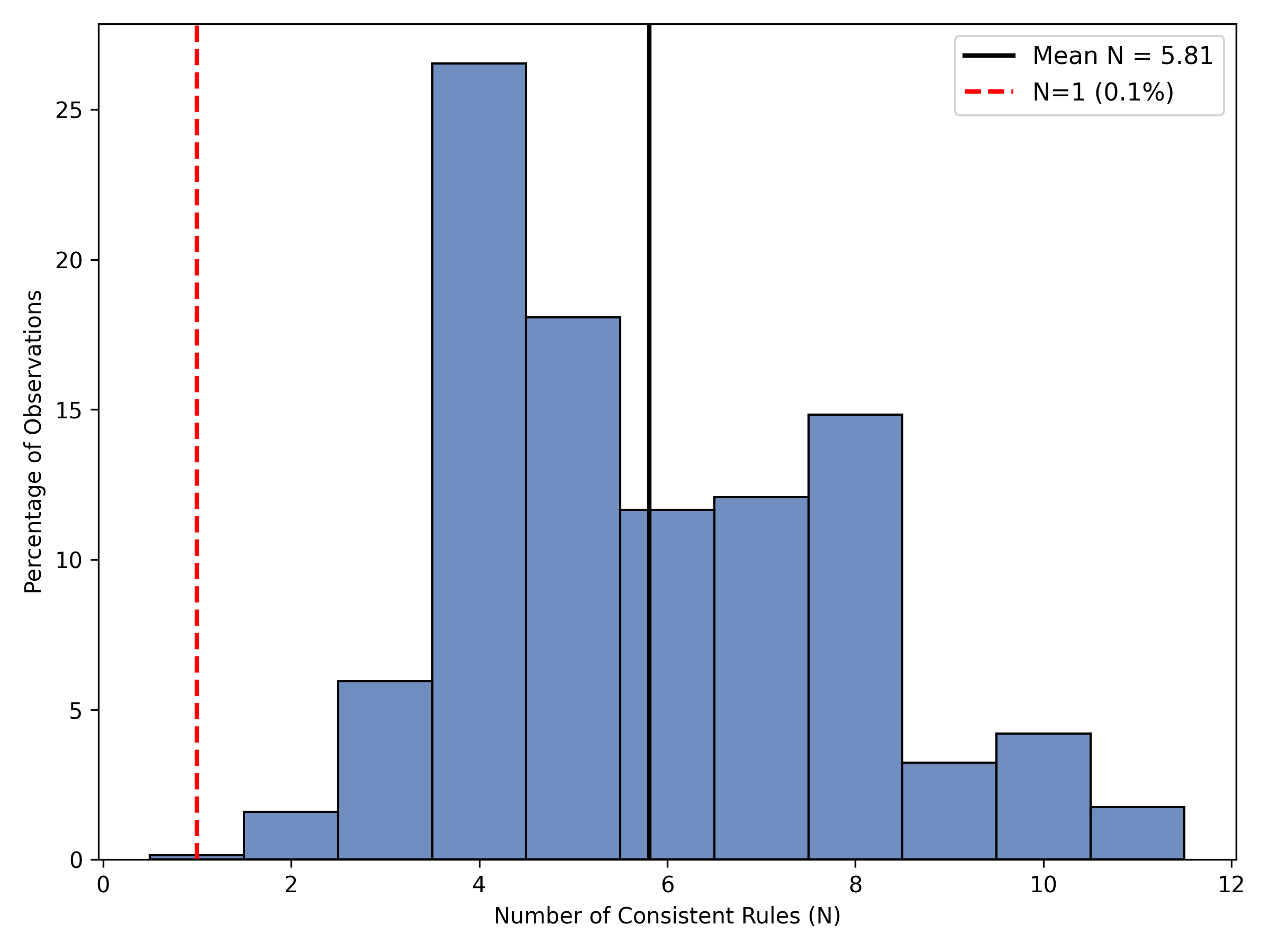}
    \caption{\label{fig:overlap_hist} Distribution of $N_{\mathrm{cons}}(A)$, the number of rules both active and consistent with the majority choice direction at each menu.}
\end{figure}

\section{Cross-Fitted Library Restrictions}\label{app:crossfitted}

\begin{table}[t]
\centering
\caption{\label{tab:crossfitted}Cross-fitted library restrictions: mean test MSE for top-$k$ and top-family selections from the full 12-rule library. Subsets are chosen on training folds only; performance is evaluated out of sample.}
\begin{tabular}{llrrr}
\toprule
Selection type & $k$ & Mean MSE & SE & Retention (\%) \\
\midrule
Full (12 rules) & 12 & 0.01199 & 0.00007 & 100.0 \\
Top-$k$ & 7 & 0.01266 & 0.00008 & 94.4 \\
Top-$k$ & 5 & 0.01388 & 0.00010 & 84.2 \\
Top-$k$ & 3 & 0.01992 & 0.00013 & 33.9 \\
\midrule
Full (5 families) & 5 & 0.01198 & 0.00007 & 100.0 \\
Top-family & 4 & 0.01214 & 0.00007 & 98.6 \\
Top-family & 3 & 0.01324 & 0.00009 & 89.5 \\
Top-family & 2 & 0.01484 & 0.00009 & 76.1 \\
Top-family & 1 & 0.02304 & 0.00013 & 7.6 \\
\bottomrule
\end{tabular}\\[3pt]
{\footnotesize\textit{Note:} Retention~$(\%)=100\times[1-(\text{MSE}^k-\text{MSE}^{\text{full}})/\text{MSE}^{\text{full}}]$: 100\% means no loss; values below 0\% indicate MSE exceeds twice the full-library MSE. Selection is cross-fitted: rule subsets chosen on training data, evaluated on held-out test data.}
\end{table}

\section{Activity Discipline Variation}\label{app:epsilon_fsd}

\begin{table}[t]
\centering
\caption{\label{tab:epsilon_fsd}Activity discipline robustness: activity statistics and out-of-sample MSE under varied dominance thresholds. The baseline ($\varepsilon=0$) corresponds to strict FSD; $\varepsilon>0$ strengthens it; $\varepsilon<0$ removes it (all rules active).}
\begin{tabular}{rrrrr}
\toprule
$\varepsilon$ & Mean active rules & Mean overlap & Mean MSE & SD \\
\midrule
-2.00 & 12.00 & 7.03 & 0.01202 & 0.00053 \\
0.00 & 9.48 & 5.81 & 0.01168 & 0.00049 \\
0.01 & 9.12 & 5.56 & 0.01307 & 0.00060 \\
0.05 & 8.99 & 5.47 & 0.01364 & 0.00064 \\
0.10 & 8.83 & 5.36 & 0.01413 & 0.00066 \\
\bottomrule
\end{tabular}\\[3pt]
{\footnotesize\textit{Note:} For $\varepsilon\ge 0$, $\varepsilon$-FSD requires $S^1(z)\ge S^2(z)+\varepsilon$ for all $z$ (with strict inequality somewhere). For $\varepsilon<0$, all rules are declared active on every menu; recommendations follow standard FOSD where applicable. Mean active rules: average number of decisive rules per menu. Mean overlap: average number of active rules consistent with the majority choice direction. MSE is from the same 50-fold CV protocol as Table~\ref{tab:cv_results}.}
\end{table}

\section{Attention-Rule Removal}\label{app:attention_removal}

Table~\ref{tab:cv_specA} replicates the main cross-validation comparison (Table~\ref{tab:cv_results}) with the rule-gating model restricted to the 10 economic rules (excluding \textsc{A1}/\textsc{A2}).

\begin{table}[h!]
\caption{\label{tab:cv_specA}Robustness: predictive performance after removing attention defaults \textsc{A1}/\textsc{A2}. Rule-gating (10 rules) excludes the two attention defaults; all other models are identical to Table~\ref{tab:cv_results}.}
\centering
\begin{tabular}[t]{lrrrr}
\toprule
Model & Mean MSE & SD & Mean $\mathrm{MSE}_w$ & Best LR\\
\midrule
MOT$^\dagger$ & 0.01139 & 0.00056 & \multicolumn{1}{c}{n.a.} & 0.001\\
Rule-gating & 0.01168 & 0.00049 & 0.01155 & 0.010\\
Context-dependent & 0.01344 & 0.00081 & 0.01330 & 0.010\\
Value-based & 0.01921 & 0.00081 & 0.01904 & 0.010\\
Rule-gating (Spec A) & 0.02014 & 0.00092 & 0.01999 & 0.010\\
Neural PT & 0.02064 & 0.00085 & 0.02050 & 0.010\\
Neural CPT & 0.02145 & 0.00093 & 0.02134 & 0.010\\
Neural EU & 0.02215 & 0.00085 & 0.02205 & 0.010\\
\bottomrule
\end{tabular}\\[3pt]
{\footnotesize\textit{Note:} $^\dagger$MOT learning rate is selected by test MSE following the published protocol of \citet{peterson2021}. Rule-gating (Spec~A) excludes \textsc{A1}/\textsc{A2}. The full 12-rule specification (Rule-gating) is reproduced from Table~\ref{tab:cv_results} for comparison.}
\end{table}

Figure~\ref{fig:context_complexity_noatt} replicates the complexity mechanism plot under the 10-rule specification. The qualitative patterns persist: removing \textsc{A1}/\textsc{A2} reallocates mass to the structured rules, but directional movements are unchanged. The same holds for risk asymmetry (not shown).

\begin{figure}[h!]
\centering
\begin{subfigure}[t]{1\linewidth}
    \centering
    \includegraphics[width=\linewidth]{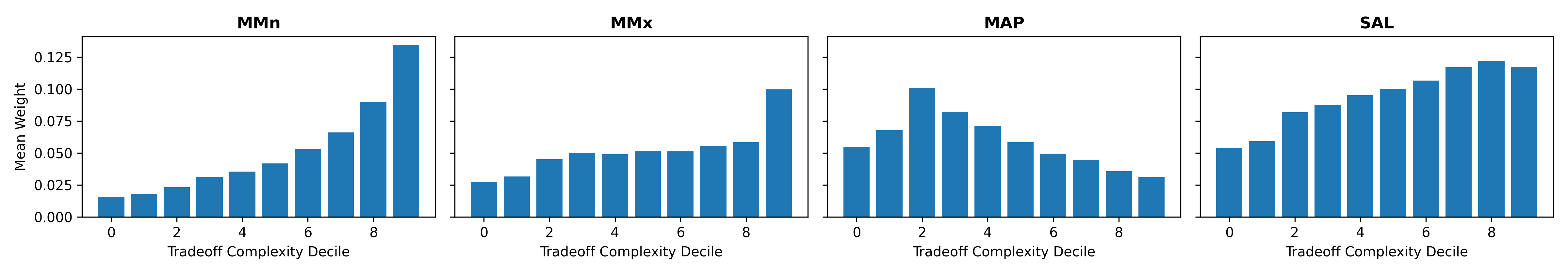}
    \caption{Full library (12 rules).}
\end{subfigure}\hfill
\begin{subfigure}[t]{1\linewidth}
    \centering
    \includegraphics[width=\linewidth]{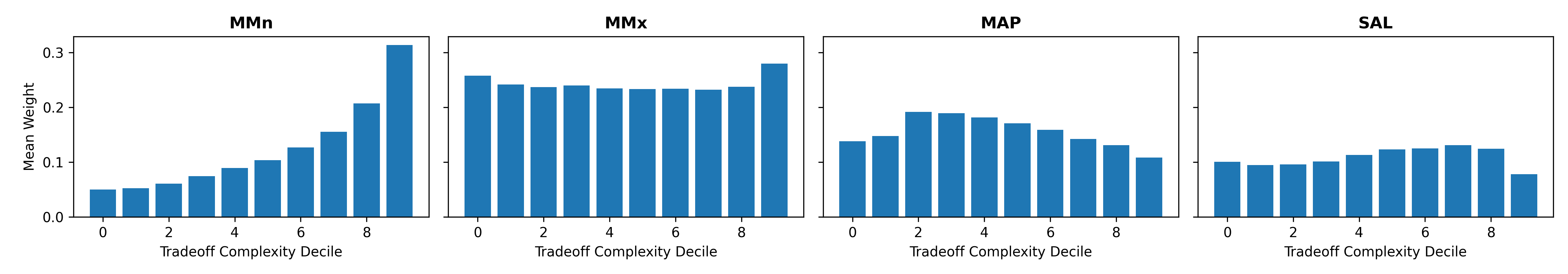}
    \caption{10 rules (no \textsc{A1}/\textsc{A2}).}
\end{subfigure}
\caption{\label{fig:context_complexity_noatt}Mean effective responsibility weights across deciles of tradeoff complexity for \textsc{MMn}, \textsc{MMx}, \textsc{MAP}, \textsc{SAL}: full library vs.\ 10-rule specification.}
\end{figure}

\section{Paired Comparisons}\label{app:paired}

Table~\ref{tab:paired_comparisons} reports split-level comparisons for out-of-sample performance. For each pair of models, we compute the per-split difference in test MSE and report the mean and standard error across the 50 random splits. Because the train/test splits overlap across repetitions, these standard errors should be interpreted as descriptive measures of split-to-split variability rather than formal sampling-based inference.

\begin{table}

\caption{\label{tab:paired_comparisons}Paired comparisons of models against the rule-gating model (fold-level deltas). Positive delta means the comparison model performs worse. SE is split-to-split standard error.}
\centering
\begin{tabular}[t]{lllllr}
\toprule
Model A & vs. & Metric & Mean delta & SE & N folds\\
\midrule
Context-dependent & Rule-gating & test\_mse & -0.00045 & 0.00010 & 50\\
Neural CPT & Rule-gating & test\_mse & 0.00756 & 0.00014 & 50\\
Neural EU & Rule-gating & test\_mse & 0.00826 & 0.00012 & 50\\
Neural PT & Rule-gating & test\_mse & 0.00675 & 0.00013 & 50\\
Rule-gating & Rule-gating & test\_mse & -0.00221 & 0.00006 & 50\\
\addlinespace
Value-based & Rule-gating & test\_mse & 0.00532 & 0.00012 & 50\\
MOT & Rule-gating & test\_mse & -0.00250 & 0.00012 & 50\\
\bottomrule
\end{tabular}
\end{table}

\section{Model-Level Uncertainty Intervals}\label{app:model_ci}

Table~\ref{tab:model_ci} reports split-variability intervals ($\text{mean}\pm 1.96\times\text{SE}$) for each model's out-of-sample metrics, constructed from fold-to-fold variability across the 50 CV splits. These intervals are descriptive summaries of how performance varies across train/test partitions, not formal confidence intervals.

\begin{table}[h!]
\centering
\caption{\label{tab:model_ci}Model-level split-variability intervals from fold-to-fold variability (50 splits).}
\begin{tabular}{lrrrr}
\toprule
Model & Mean MSE & SE & CI low & CI high \\
\midrule
MOT & 0.01139 & 0.00008 & 0.01124 & 0.01155 \\
Rule-gating & 0.01168 & 0.00007 & 0.01154 & 0.01182 \\
Context-dependent & 0.01344 & 0.00012 & 0.01322 & 0.01367 \\
Value-based & 0.01921 & 0.00011 & 0.01898 & 0.01943 \\
Neural PT & 0.02064 & 0.00012 & 0.02040 & 0.02087 \\
Neural CPT & 0.02145 & 0.00013 & 0.02119 & 0.02171 \\
Neural EU & 0.02215 & 0.00012 & 0.02192 & 0.02239 \\
\bottomrule
\end{tabular}\\[3pt]
{\footnotesize\textit{Note:} Intervals computed as mean $\pm$ 1.96 $\times$ SE, where SE = SD / $\sqrt{n_{\text{folds}}}$. These are descriptive split-variability intervals, not formal confidence intervals.}
\end{table}

\section{Ablation Effect Split-Variability Intervals}\label{app:phi_ci}

Table~\ref{tab:phi_ci} reports split-variability intervals for the absolute ablation effect $\Delta\text{MSE}=\text{MSE}^{-f}-\text{MSE}^{\text{full}}$, constructed from fold-to-fold variability. For each rule $f$, we compute the per-fold MSE increase from dropping rule $f$ from the full 12-rule library, then report the mean, standard error, and interval bounds. (The relative index $\phi(f)=\Delta\text{MSE}/\text{MSE}^{\text{full}}$, as defined in Equation~\eqref{eq:phi}, is reported in Table~\ref{tab:decomposition_stability}.)

The leading ablation effects are \textsc{SAL2} and \textsc{SAL}, followed by \textsc{REGmed} and \textsc{REG}. At the other extreme, \textsc{MMa} has a small negative ablation effect (removing it slightly improves MSE), indicating that its contribution is fully substitutable by the remaining rules.

\begin{table}[h!]
\centering
\caption{\label{tab:phi_ci}Ablation effect (absolute MSE change) with split-variability intervals from fold-to-fold variability. Each rule is ablated from the full 12-rule library.}
\begin{tabular}{lrrrr}
\toprule
Rule & Mean $\Delta\text{MSE}$ & SE & CI low & CI high \\
\midrule
\textsc{SAL2} & 0.00061 & 0.00005 & 0.00051 & 0.00071 \\
\textsc{SAL} & 0.00054 & 0.00004 & 0.00046 & 0.00063 \\
\textsc{REGmed} & 0.00021 & 0.00003 & 0.00014 & 0.00028 \\
\textsc{REG} & 0.00014 & 0.00005 & 0.00005 & 0.00023 \\
\textsc{MAP} & 0.00014 & 0.00003 & 0.00007 & 0.00020 \\
\textsc{MMx} & 0.00013 & 0.00004 & 0.00005 & 0.00022 \\
\textsc{MMn} & 0.00011 & 0.00003 & 0.00005 & 0.00017 \\
\textsc{DIS} & 0.00005 & 0.00003 & -0.00001 & 0.00010 \\
\textsc{DISmed} & 0.00000 & 0.00003 & -0.00006 & 0.00006 \\
\textsc{MMa} & -0.00004 & 0.00003 & -0.00009 & 0.00001 \\
\bottomrule
\end{tabular}\\[3pt]
{\footnotesize\textit{Note:} $\Delta\text{MSE}=\text{MSE}^{-f}-\text{MSE}^{\text{full}}$: absolute MSE increase when rule $f$ is dropped (fold-level paired differences). The relative index $\phi(f)=\Delta\text{MSE}/\text{MSE}^{\text{full}}$ is reported in Table~\ref{tab:decomposition_stability}.}
\end{table}

\section{Decomposition Stability}\label{app:decomposition_stability}

Table~\ref{tab:decomposition_stability} reports the stability of the responsibility weights $w_f$ and ablation index $\phi(f)$ across the 50 CV folds. For each rule, we report the mean, standard deviation, and coefficient of variation (CV = SD/mean) across folds. The Spearman rank correlations reported in the table note summarize the stability of the \emph{orderings}: for all $\binom{50}{2}$ fold pairs, we compute the Spearman correlation between the fold-level $w_f$ (or $\phi(f)$) vectors, and report the mean, minimum, and maximum.

\begin{table}[t]
\centering
\caption{\label{tab:decomposition_stability}Decomposition stability across 50 CV folds. For each rule $f$: mean, standard deviation, and coefficient of variation (CV) of the responsibility weight $w_f$ and ablation index $\phi(f)$ across folds.}
\begin{tabular}{lrrrrrr}
\toprule
Rule & Mean $w_f$ & SD $w_f$ & CV $w_f$ & Mean $\phi(f)$ & SD $\phi(f)$ & CV $\phi(f)$ \\
\midrule
\textsc{A1} & 0.2882 & 0.0098 & 0.03 & --- & --- & --- \\
\textsc{A2} & 0.2631 & 0.0092 & 0.03 & --- & --- & --- \\
\textsc{SAL} & 0.0927 & 0.0060 & 0.06 & 0.0471 & 0.0271 & 0.57 \\
\textsc{REG} & 0.0717 & 0.0056 & 0.08 & 0.0123 & 0.0270 & 2.19 \\
\textsc{MAP} & 0.0595 & 0.0047 & 0.08 & 0.0119 & 0.0193 & 1.62 \\
\textsc{MMx} & 0.0500 & 0.0058 & 0.12 & 0.0118 & 0.0263 & 2.23 \\
\textsc{MMn} & 0.0432 & 0.0143 & 0.33 & 0.0095 & 0.0177 & 1.86 \\
\textsc{SAL2} & 0.0385 & 0.0022 & 0.06 & 0.0529 & 0.0300 & 0.57 \\
\textsc{REGmed} & 0.0373 & 0.0085 & 0.23 & 0.0178 & 0.0210 & 1.18 \\
\textsc{DISmed} & 0.0268 & 0.0076 & 0.28 & 0.0005 & 0.0179 & 39.76 \\
\textsc{DIS} & 0.0218 & 0.0104 & 0.48 & 0.0041 & 0.0174 & 4.28 \\
\textsc{MMa} & 0.0073 & 0.0048 & 0.65 & -0.0032 & 0.0157 & 4.98 \\
\bottomrule
\end{tabular}
\\[3pt]{\footnotesize\textit{Note:} Spearman rank correlation of $w_f$ orderings across all fold pairs: mean $\rho=0.930$, range $[0.762, 1.000]$. Spearman rank correlation of $\phi(f)$ orderings: mean $\rho=0.442$, range $[-0.636, 0.952]$.}
\end{table}

\section{Selection Stability}\label{app:selection_stability}

Table~\ref{tab:selection_stability} reports the frequency with which each behavioral rule is selected in the top-$k$ set across the 50 cross-fitted CV folds. Higher frequency indicates more stable selection.

\begin{table}[h!]
\centering
\caption{\label{tab:selection_stability}Cross-fitted selection stability: frequency of each behavioral rule appearing in top-$k$ across 50 folds.}
\begin{tabular}{lrrrr}
\toprule
Rule & Freq.\ top-3 & Freq.\ top-5 & Freq.\ top-7 & Freq.\ top-12 \\
\midrule
\textsc{A1} & 1.00 & 1.00 & 1.00 & 1.00 \\
\textsc{A2} & 1.00 & 1.00 & 1.00 & 1.00 \\
\textsc{DIS} & 0.00 & 0.00 & 0.00 & 1.00 \\
\textsc{DISmed} & 0.00 & 0.00 & 0.00 & 1.00 \\
\textsc{MAP} & 0.00 & 0.96 & 1.00 & 1.00 \\
\textsc{MMa} & 0.00 & 0.00 & 0.00 & 1.00 \\
\textsc{MMn} & 0.00 & 0.00 & 0.00 & 1.00 \\
\textsc{MMx} & 0.00 & 0.04 & 1.00 & 1.00 \\
\textsc{REG} & 0.00 & 1.00 & 1.00 & 1.00 \\
\textsc{REGmed} & 0.00 & 0.00 & 0.44 & 1.00 \\
\textsc{SAL} & 1.00 & 1.00 & 1.00 & 1.00 \\
\textsc{SAL2} & 0.00 & 0.00 & 0.56 & 1.00 \\
\bottomrule
\end{tabular}\\[3pt]
{\footnotesize\textit{Note:} Frequency 1.00 means the rule is selected in every fold.}
\end{table}

\section{Feature-Based Baselines}\label{app:sklearn}

We train two baseline models---a \emph{fractional logit} \citep{papke_wooldridge_1996} and a \emph{gradient-boosted tree ensemble} (GBT, scikit-learn)---on three feature sets: (F1)~summary statistics ($\approx$18 dim), (F2)~rule indicators (24 dim), and (F3)~concatenation ($\approx$42 dim). All use the same 50-fold CV protocol. The fractional logit is a binomial GLM with logit link on the continuous choice frequency $\hat{p}(A)\in[0,1]$; $L_2$ regularization is selected from $\{0, 0.01, 0.1, 1\}$ by inner-fold validation. GBT uses \texttt{max\_depth=4} with the number of estimators selected from $\{100, 200, 500\}$.

\paragraph{Results.} Table~\ref{tab:sklearn} reports results. On rule features alone (F2), rule-gating substantially outperforms both baselines. This comparison isolates the gating mechanism's contribution (not a pure feature comparison, since rule-gating also uses $z(A)$). GBT on combined features (F3) outperforms all structured models but provides no behavioral decomposition.

\begin{table}[t]
\centering
\caption{\label{tab:sklearn}Feature-based baselines: out-of-sample MSE (50 splits). Rules = binary rule indicators; Stats = lottery summary statistics; Combined = concatenation.}
\begin{tabular}{lrrrr}
\toprule
Model & Mean MSE & SD & Mean $\mathrm{MSE}_w$ & Features \\
\midrule
GBT (combined) & 0.00957 & 0.00041 & 0.00947 & F3 \\
GBT (stats) & 0.00997 & 0.00041 & 0.00986 & F1 \\
Fractional logit (combined) & 0.01579 & 0.00073 & 0.01565 & F3 \\
GBT (rules) & 0.02039 & 0.00079 & 0.02013 & F2 \\
Fractional logit (rules) & 0.02346 & 0.00103 & 0.02324 & F2 \\
Fractional logit (stats) & 0.02362 & 0.00099 & 0.02338 & F1 \\
\bottomrule
\end{tabular}\\[3pt]
{\footnotesize\textit{Note:} F1 = summary statistics ($\approx$18 dim); F2 = rule indicators (24 dim); F3 = combined ($\approx$42 dim). All models use the same 50-fold CV protocol as Table~\ref{tab:cv_results}.}
\end{table}

\section{Placebo Rule Library}\label{app:placebo}

We construct a placebo test by randomly reassigning each rule's $(L_f(A),A_f(A))$ pairs across menus within $10$ complexity strata,\footnote{Stratified by support-count quantiles to preserve marginal activity rates per stratum.} preserving marginal statistics but destroying menu-specific content ($50$ repetitions). Table~\ref{tab:placebo} shows that the placebo library produces substantially higher MSE, confirming that the rule indicators carry genuine, menu-specific predictive content.

\begin{table}[t]
\centering
\caption{\label{tab:placebo}Placebo rule library test. The ``Real'' row reports the rule-gating model with the actual rule indicators; the ``Placebo'' row uses stratified-permuted indicators that preserve marginal activity rates per rule per complexity bin but destroy the menu-to-indicator mapping.}
\begin{tabular}{lrrr}
\toprule
Library & Mean MSE & SD & $n$ \\
\midrule
Real (rule-gating) & 0.01168 & 0.00049 & 50 \\
Placebo (permuted) & 0.02197 & 0.00032 & 50 \\
\bottomrule
\end{tabular}\\[3pt]
{\footnotesize\textit{Note:} Placebo indicators are constructed by independently permuting each rule's $(L_f, A_f)$ pairs across menus within quantile bins of menu complexity (number of nonzero outcomes). The gate architecture, training protocol, and all other inputs are identical to the main rule-gating model. SD for the real model is across 50 CV folds; SD for the placebo is across permutations.}
\end{table}

\section{Raw-Encoding Gate Benchmark}\label{app:raw_gate}

We estimate a variant using the raw encoding $\psi(A)\in\mathbb{R}^{40}$ instead of $z(A)$, yielding $40\times|\mathcal{F}|$ gate parameters. The raw-encoding gate achieves MSE $0.0139$ vs.\ $0.0117$ for the baseline---about $19\%$ worse (Figure~\ref{fig:gate_comparison})---indicating that the interpretable summary statistics capture decision-relevant variation without the overfitting induced by higher-dimensional inputs.

\begin{figure}[t]
\centering
\includegraphics[width=0.65\linewidth]{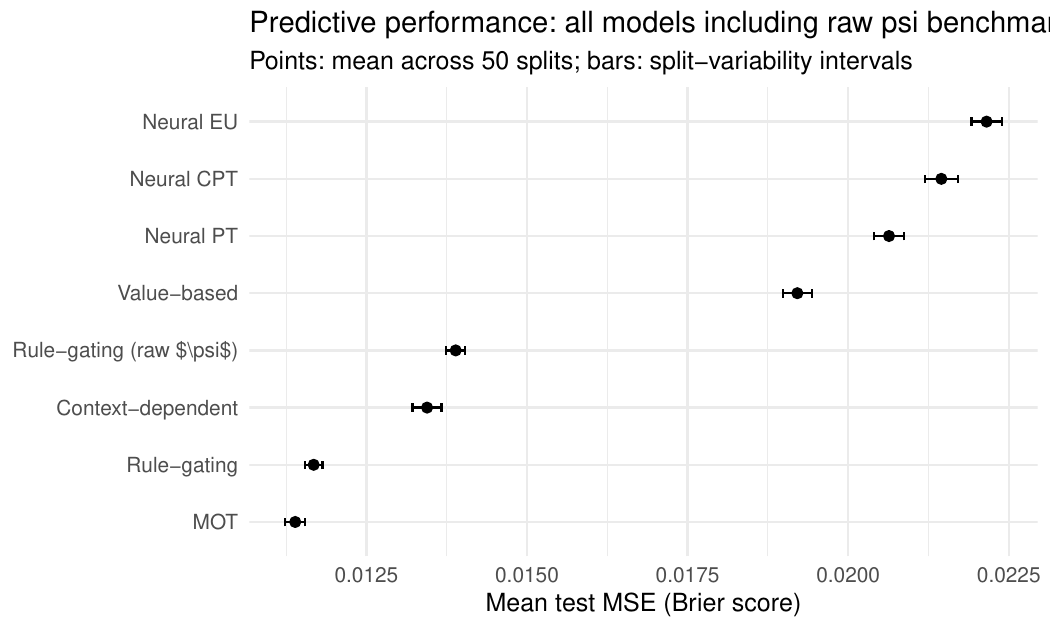}
\caption{\label{fig:gate_comparison}Predictive performance of all models including the raw-encoding gate variant: mean test MSE across 50 random splits with split-variability intervals.}
\end{figure}

\section{Local Identification for High-Dimensional Gate Features}\label{app:jacobian_details}

The global conditions (G1)--(G2) require within-feature replication, which fails when the gate operates on the essentially injective raw encoding $\psi(A)\in\mathbb{R}^{4M}$. For such specifications, identification is assessed locally via Theorem~\ref{thm:local_id_gen}: if the Jacobian $J(\vartheta^\star)$ of the identification map (Section~\ref{app:jacobian}) has full column rank at the fitted parameters, part~(a) guarantees local identification.

In Section~\ref{subsec:rank_empirics}, both diagnostics support identification: the global conditions hold for the $z(A)$ specification, and the Jacobian achieves full column rank at the fitted parameters for both the $z(A)$ and raw-$\psi$ specifications, where ``full column rank'' is evaluated relative to the effective parameter dimension (accounting for the redundancy in $z(A)$ that reduces its dimension from $12$ to $d_{\mathrm{eff}}=11$).

\end{document}